\title{Sound regular corecursion in $\coFJ$ } 
\author{Davide Ancona}{DIBRIS, University of Genova, Italy}{davide.ancona@unige.it}{http://orcid.org/0000-0002-6297-2011}{Member of GNCS (Gruppo Nazionale per il Calcolo Scientifico), INdAM (Istituto Nazionale di Alta Matematica "F. Severi")}
\author{Pietro Barbieri}{DIBRIS, University of Genova, Italy}{pietro.barbieri@edu.unige.it}{https://orcid.org/0000-0003-3193-5549}{}
\author{Francesco Dagnino}{DIBRIS, University of Genova, Italy}{francesco.dagnino@dibris.unige.it}{https://orcid.org/0000-0003-3599-3535}{}
\author{Elena Zucca}{DIBRIS, University of Genova, Italy}{elena.zucca@unige.it}{http://orcid.org/0000-0002-6833-6470}{}
\authorrunning{D. Ancona, P. Barbieri, F. Dagnino, E. Zucca}
\keywords{Operational semantics, coinduction, programming paradigms, regular terms}
\newif\ifsubmit
\newcommand{\EZ}[1]{{#1}} 
\newcommand{\PB}[1]{{#1}}
\newcommand{\DAComm}[1]{} 
\newcommand{\EZComm}[1]{} 
\newcommand{\FDComm}[1]{}
\newcommand{\PBComm}[1]{}
\newcommand{\EZ}[1]{\textcolor{blue}{#1}} 
\newcommand{\PB}[1]{\textcolor{purple}{#1}} 
\newcommand{\DAComm}[1]{{\footnotesize\textcolor{red}{[\bf{Davide: }#1}]}}
\newcommand{\EZComm}[1]{{\footnotesize\textcolor{blue}{[\bf{Elena: }#1}]}}
\newcommand{\FDComm}[1]{{\footnotesize\textcolor{orange}{[\bf{Francesco: }#1}]}}
\newcommand{\PBComm}[1]{{\footnotesize\textcolor{purple}{[\bf{Pietro: }#1}]}}
\newcommand{\refToFigure}[1]{Figure~\ref{fig:#1}}
\newcommand{\refToSection}[1]{Section~\ref{sect:#1}}
\newcommand{\refToTheorem}[1]{Theorem~\ref{theo:#1}}
\newcommand{\refToLemma}[1]{Lemma~\ref{lemma:#1}}
\newcommand{\refToCor}[1]{Corollary~\ref{cor:#1}} 
\newcommand{\refToProp}[1]{Prop.~\ref{prop:#1}}
\newcommand{\refToPropItem}[2]{\refToProp{#1}(\ref{prop:#1:#2})}
\newcommand{\refToRule}[1]{{\small \textsc{(#1)}}}
\newcommand{\Space}{\hskip 0.8em}
\newcommand{\BigSpace}{\hskip 1.5em}
\newenvironment{proofOf}[1]{\begin{proof}[Proof of #1]}{\end{proof}} 
\newcommand{\fun}[3]{#1:#2\rightarrow#3}
\newcommand{\dom}[1]{\mathit{dom}(#1)}
\newcommand{\callEnvCo}{\callEnv^{\neg\checklabel}}
\newcommand{\callEnvCk}{\callEnv^\checklabel}
\newcommand{\Tuple}[1]    {({#1})}
\newcommand{\Pair}[2]     {\Tuple{{#1},{#2}}}
\newcommand{\FV}[1]{\textit{FV}(#1)}
\newenvironment{grammatica}{$\begin{array}[t]{llll}}{\end{array}$}
\newcommand{\produzione}[3]{#1&{:}{:}=&#2 & \mbox{{\small{#3}}}}
\newcommand{\produzioneCo}[3]{#1&{:}{:}=_{\textsc{co}}&#2 & \mbox{{\small{#3}}}}
\newcommand{\terminale}[1]{\texttt{#1}}
\newcommand{\nonterminale}[1]{\textit{#1}}
\newcommand{\validInd}[2]{#1 \!\vdash\! {#2} }
\newcommand{\valid}[3]{ \Pair{#1}{#2} \!\vdash\! {#3} }
\newcommand{\judg}{\mathit{j}}
\newcommand{\prem}{\textit{Pr}}
\newcommand{\cons}{\textit{c}}
\newcommand{\is}{{\cal I}}
\newcommand{\cois}{{\cal I}^{co}}
\newcommand{\universe}{{\cal U}}
\newcommand{\myrule}{\Rule{\prem}{\cons}}
\newcommand{\Rule}[2]{
\displaystyle                  
\frac{#1}{#2}     
}
\newcommand{\RuleNoName}[3]{\displaystyle\frac{#1}{#2}\ \begin{array}{l} #3 \end{array}}
\newcommand{\thickfrac}[2]{\genfrac{}{}{1.5pt}{}{#1}{#2}}
\newcommand{\CoRule}[2]{\genfrac{}{}{1.7pt}{0}{#1}{#2}}
\newcommand{\NamedRule}[4]{{\tiny\textsc{({#1})}}\displaystyle\frac{#2}{#3}\ \begin{array}{l} #4 \end{array}}
\newcommand{\NamedRuleSimple}[3]{{\tiny\textsc{({#1})}}\displaystyle\frac{#2}{#3}}
\newcommand{\NamedCoRule}[4]{{\tiny\textsc{({#1})}}\displaystyle\thickfrac{#2}{#3}\ \begin{array}{l} #4 \end{array}}
\newcommand{\NamedCoRuleSimple}[3]{{\tiny\textsc{({#1})}}\displaystyle\thickfrac{#2}{#3}}
\newcommand{\Op}[1]{{\textit{F}_{#1}}}
\newcommand{\Ind}[1]{\textit{Ind}(#1)}
\newcommand{\CoInd}[1]{\textit{CoInd}(#1)}
\newcommand{\Generated}[2]{\textit{Gen}(#1,#2)}
\newcommand{\Restricted}[2]{{#1_{{\sqcap}#2}}}
\newcommand{\Spec}{\textit{S}}
\newcommand{\RuleNoFrac}[2]{\genfrac{}{}{0pt}{0}{#1}{#2}}
\newcommand{\FJ}{\ensuremath{\textsc{FJ}\xspace}}
\newcommand{\coFJ}{\ensuremath{\textsc{coFJ}\xspace}}
\newcommand{\CD}{\nonterminale{cd}} 
\newcommand{\CDBar}{\overline{\CD}}
\newcommand{\FDBar}{\overline{\FD}} 
\newcommand{\FD}{\mathit{fd}} 
\newcommand{\f}{\mathit{f}} 
\newcommand{\C}{\mathit{C}} 
\newcommand{\kwThis}{\terminale{this}}
\newcommand{\kwNew}{\terminale{new}}
\newcommand{\kwClass}{\terminale{class}}
\newcommand{\kwExtends}{\terminale{extends}}
\newcommand{\MDBar}{\overline{\MD}}
\newcommand{\MD}{{\nonterminale{md}}}
\newcommand{\m}{\nonterminale{m}}
\newcommand{\x}{\nonterminale{x}} 
\newcommand{\xBar}{\overline{\x}}
\newcommand{\E}{\nonterminale{e}}
\newcommand{\EBar}{\overline{\E}}
\newcommand{\MethBody}[1]{\{#1 \}}
\newcommand{\val}{\nonterminale{v}}
\newcommand{\vBar}{\overline{\val}}
\newcommand{\corec}{\terminale{corec}}
\newcommand{\Any}{\terminale{any}}
\newcommand{\FJvals}{{\cal V}}
\newcommand{\FJexp}{{\cal E}}
\newcommand{\coFJaVals}{\FJvals\textsuperscript{a}}
\newcommand{\coFJaExp}{\FJexp\textsuperscript{a}}
\newcommand{\coFJcVals}{\FJvals\textsuperscript{op}}
\newcommand{\FieldDec}[2]{#1\, #2\texttt{;}}
\newcommand{\NewExpr}[2]{\kwNew\ #1 (#2)}
\newcommand{\FieldAccess}[2]{#1.#2}
\newcommand{\MethodCall}[3]{#1.#2(#3)}
\newcommand{\BigStepFJ}[2]{\eval{#1}{#2}}
\newcommand{\mbody}{\mathit{mbody}}
\newcommand{\fields}{\mathit{fields}}
\newcommand{\Subst}[3]   {#1 [#2/#3]}
\newcommand{\isFJ}{\is_{\scriptscriptstyle \FJ}}
\newcommand{\false}{\texttt{false}}
\newcommand{\true}{\texttt{true}}
\newcommand{\minElem}[2]{\textit{min}(#1,#2)}
\newcommand{\combody}{\textit{co-mbody}}
\newcommand{\eval}[2]{#1\!\Downarrow\!#2}
\newcommand{\coisFJ}{\cois_{\scriptscriptstyle \FJ}}
\newcommand{\evalExtended}[2]{\validInd{\isFJ{\cup}\coisFJ}{\eval{#1}{#2}}}
\newcommand{\sumAll}{\texttt{sum}}
\newcommand{\Caps}[2]{\Pair{#1}{#2}}
\newcommand{\simCaps}{\approx}
\newcommand{\bisim}[3][{}]{ #2 {\simCaps_{#1}} #3} 
\newcommand{\varRel}{\alpha} 
\newcommand{\Quotient}[1]{\mathit{undet}_{\leftrightarrow}(#1)}
\newcommand{\Undetermined}[1]{\mathit{undet}(#1)}
\newcommand{\varEquiv}[3]{ #2 \stackrel{#1}{\leftrightarrow} #3 }
\newcommand{\opsem}[5]{#1,#2,#3\!\Downarrow\!#4,#5}
\newcommand{\intsem}[5]{#1,#2,#3\!\Downarrow_{\textsc{IN}}\!#4,#5}
\newcommand{\open}{{\mathrm{v}}}
\newcommand{\openBar}{\overline{\open}}
\newcommand{\callEnv}{\tau}
\newcommand{\mapEnv}{\sigma}
\newcommand{\resEnv}{\rho}
\newcommand{\callSetRes}{\callSet'}
\newcommand{\callSetResCheck}{{\callSet''}}
\newcommand{\RcallSet}{\EZ{\hat{\callSet}}}
\newcommand{\callSet}{S}
\newcommand{\resEnvOp}[2]{\resEnv^{#1}_{#2}}
\newcommand{\callSetOp}[3]{\callSet^{#1,#2}_{#3}}
\newcommand{\callSetTr}[2]{\callSet^{#1}_{#2}}
\newcommand{\callSetDif}[4]{\callSet^{#1,#2,#3}_{#4}}
\newcommand{\mapEnvPrime}{{\mapEnv'}}
\newcommand{\AppMap}[2]{ #1[#2] }
\newcommand{\unf}{\mathit{unfold}}
\newcommand{\undef}[1]{#1\uparrow} 
\newcommand{\call}{\mathit{c}}
\newcommand{\mapEnvU}{{\widehat{\mapEnv}}}
\newcommand{\envItem}[3]{#1:\!#3^{#2}}
\newcommand{\lbcall}{{t}}
\newcommand{\checklabel}{\textsf{ck}}
\newcommand{\UpdateEnv}[3]{{#1}\{\envItem{#2}{}{#3}\}}
\newcommand{\UpdateEnvCo}[3]{{{#1}\{\envItem{#2}{}{#3}\}}}
\newcommand{\UpdateEnvCk}[3]{{#1}\{\envItem{#2}{{\checklabel}}{#3}\}}
\newcommand{\semCaps}[2]{\llbracket#1,#2\rrbracket}
\newcommand{\expCaps}[2]{\mathit{T}(#1, #2)} 
\newcommand{\mapUnion}[2]{#1{\sqcup} #2}
\newcommand{\bigmapUnion}[2]{\bigsqcup_{#1} #2}
\newcommand{\uptobisim}[2]{#1_{\approx #2}}
\newcommand{\eqclass}[1]{[#1]}
\newcommand{\ApplySubst}[2]{#1\,#2}
\newcommand{\subst}{\theta}
\newcommand{\solutions}{\textsf{Sol}}
\newcommand{\op}{^{\circ}}
\begin{document}

\maketitle

\EZComm{limite per versione finale: 26 pagine esclusa biblio}

\begin{abstract}
{The aim of the paper is to provide solid foundations for a programming paradigm natively supporting the creation and manipulation of cyclic data structures. To this end, we describe $\coFJ$, a Java-like calculus where objects can be \emph{infinite} and methods are equipped with a \emph{codefinition} (an alternative body). 
We provide an abstract semantics of the calculus based on the framework of \emph{inference systems with corules}. In $\coFJ$ with this semantics, $\FJ$ recursive methods on finite objects can be extended to infinite objects as well, and behave as desired by the programmer, by specifying a codefinition. \PB{We also} describe an operational semantics which can be directly implemented in a programming language, and prove the soundness of such semantics with respect to the abstract one.}
\end{abstract}
\section*{Introduction}
Applications often deal with data structures which are conceptually infinite, such as streams or infinite trees. Thus, a major problem for programming languages is how to finitely represent something which is infinite, and, even harder, how to  correctly manipulate such finite representations to reflect the expected behaviour on the infinite structure. 

A well-established solution is \emph{lazy evaluation}, as, e.g., in Haskell. In this approach, the conceptually infinite structure is represented as the result of a function call, which is evaluated only as much as needed.
Focusing on the paradigmatic example of streams (infinite lists) of integers, we can define \texttt{two\_one = 2:1:two\_one}, or even represent the list of natural numbers as \texttt{from 0}, where \texttt{from n = n:from(n+1)}. In this way, functions which only need to inspect a finite portion of the structure, e.g., getting the $i$-th element, can be correctly implemented. On the other hand, functions which need to inspect the whole structure,  e.g., \texttt{min} getting the minimal element, or \texttt{allPos} checking that all elements are positive, \EZComm{ mi pare che in questo approccio non ci sia neanche un equivalente di cose tipo \texttt{incr} perch\`e le funz. ricorsive non sono valori di prima classe} have an undefined result (that is, non-termination, operationally). 

More recently, a different, in a sense complementary\footnote{As we will discuss further in the Conclusion.}, approach has been considered \cite{Jeannin17,AnconaZ12,AnconaDZ17}, which focuses on \emph{cyclic} structures (e.g., cyclic lists, trees and graphs). 
They can be regarded as a particular case of
infinite structures: abstractly, they correspond to regular terms (or trees), that is, finitely branching trees whose depth can be infinite, but contain only a finite set of subtrees. For instance, the list \texttt{two\_one} is regular, whereas the list of natural numbers is not. 
Typically, cyclic data structures are handled by  programming languages by relying on imperative features or ad hoc data structures for bookkeeping. For instance, we can build a cyclic object by assigning to a field of an object a reference to the object itself, or we can visit a graph by marking already encountered nodes.
In this approach \cite{Jeannin17,AnconaZ12,AnconaDZ17}, instead, the programming language natively supports regular structures, as outlined below:
\begin{itemize}
\item Data constructors are enriched by allowing equations, e.g., $\x = 2 : 1 : \x$.
\item Functions are \emph{regularly corecursive}, that is, execution keeps \PB{track} of pending function calls, so that, when the same call is encountered the second time, this is detected, avoiding non-termination as  with ordinary recursion. For instance, when calling \texttt{min} on the list $\x = 2 : 1 : \x$, after an intermediate call on the list $y = 1 : 2 : y$, the same call is encountered.
\end{itemize} 
Regular corecursion originates from \emph{co-SLD resolution} \cite{Simon06,SimonEtAl07,AnconaDovier15}, where already encountered goals
(up to unification), called \emph{coinductive hypotheses}, are considered successful. However, co-SLD resolution is not flexible enough to
to correctly express certain predicates on regular terms; for instance, in the \texttt{min} example,
the intuitively correct  corecursive definition\EZComm{cut: , stating that the minimum of a non-empty list is the minimum between its head and
the minimum of its tail,} is not sound, because the predicate succeeds for all lower bounds of $l$\EZ{, as shown in the following}.   

When moving from goals to functions calls, the same problem manifests more urgently because a result should always be provided for already
encountered calls. To solve this issue, the mechanism of \emph{flexible regular corecursion} can be adopted to allow the programmer to correctly
specify  the behaviour of recursive functions on cyclic structures. For instance, for function \texttt{min}, the programmer specifies that the head of the list should be returned when detecting a cyclic call; in this way, on the list $\x = 2 : 1 : \x$, the result of the cyclic call is $2$, {so that} the result of the original call is $1$, as expected. 

Flexible regular corecursion as outlined above has been proposed in the object-oriented \cite{AnconaZ12}, functional \cite{Jeannin17}, and logic \cite{AnconaDZ17} paradigms (see \refToSection{related} for more details).
However, none of these proposals provides formal arguments for the correctness of  the given operational semantics, by proving
that it is sound with respect to some model of the behaviour of functions (or predicates) on infinite structures. 
The aim of this paper is to bridge this gap, by providing solid foundations for a programming paradigm natively supporting cyclic data structures. This is achieved thanks to the recently introduced framework of \emph{inference systems with corules} \cite{AnconaDZ@ESOP17,Dagnino19}, allowing definitions which are neither inductive, nor purely coinductive. We present the approach in the context of Java-like languages, namely on an extension of Featherweight Java ($\FJ$) \cite{IgarashiEtAl99} called $\coFJ$, outlined as follows:
\begin{itemize}
\item $\FJ$ objects are smoothly generalized from finite to infinite by interpreting their definition coinductively, and methods are equipped with a \emph{codefinition} (an alternative body).
\item We provide an abstract big-step semantics for $\coFJ$ by an inference system with corules. In $\coFJ$ with this semantics, $\FJ$ recursive methods on finite objects can be extended to infinite objects as well, and behave as desired by the programmer, by specifying a codefinition. For instance, if the codefinitions for \texttt{min} and \texttt{allPos} are specified to return the head, and \texttt{true}, respectively, then \texttt{min} returns $1$ on $\x = 2 : 1 : \x$, and $0$ on the list of the natural \PB{numbers}, whereas \texttt{allPos} returns \texttt{true} on both lists. 
\item Then, we provide an operational (hence, executable) semantics where infinite objects are restricted to regular ones and methods are regularly corecursive, and we show that such operational semantics is sound with respect to the abstract one.
\end{itemize}
\EZ{At \url{https://person.dibris.unige.it/zucca-elena/coFJ_implementation.zip}} we provide a prototype implementation of $\coFJ$, \PB{briefly} described in the Conclusion. 
\EZ{A preliminary version of the operational semantics, with no soundness proof with respect to a formal model, has been given in \cite{BDZA19}.} 

\refToSection{corules} is a quick introduction to inference systems with corules. \refToSection{informal} \PB{describes} $\FJ$ and informally introduces our approach. In \refToSection{abstract} we define $\coFJ$ and its abstract semantics, in \refToSection{operational} the operational semantics, \PB{in \refToSection{examples} we show some advanced examples}, and in \refToSection{soundness} we prove soundness. Finally, we discuss related work and draw conclusions in \refToSection{related} and \refToSection{conclu}, respectively.

\section{Inference systems with corules}\label{sect:corules}

First we recall standard notions on inference systems \cite{Aczel77,LeroyGrall09}.
Assuming a \emph{universe} $\universe$ of \emph{judgments}, an \emph{inference system} {$\is$} 
is a set of \emph{(inference) rules}, which are pairs $\myrule$, with $\prem\subseteq\universe$ the set of \emph{premises}, and $\cons\in\universe$ the \emph{consequence} (a.k.a. \emph{conclusion}).  {A rule with an empty set of premises is an \emph{axiom}.} A \emph{proof tree} {(a.k.a. \emph{derivation}) for a judgment $\judg$} is a tree whose nodes are (labeled with) judgments, $\judg$ is the root, and there is a node $c$ with children $\prem$ only if there is a rule $\myrule$. 

The \emph{inductive} and the \emph{coinductive interpretation} of $\is$, denoted $\Ind{\is}$ and $\CoInd{\is}$, are the sets of judgments with, respectively, a finite\footnote{Under the common assumption that sets of premises are finite, otherwise we should say \EZ{well-founded}.}, and a \PB{possibly} infinite proof tree. In set-theoretic terms, let $\fun{\Op{\is}}{\wp(\universe)}{\wp(\universe)}$ be defined by \mbox{$\Op{\is}(S)=\{\cons \mid \prem\subseteq S, \Rule{\prem}{\cons}\in\is\}$}, and say
that a set $S$ is \emph{closed} if $\Op{\is}(S)\subseteq S$, \emph{consistent} if $S\subseteq\Op{\is}(S)$. 
Then,  it can be proved that $\Ind{\is}$ is the smallest closed set, and $\CoInd{\is}$ is the largest consistent set. {We write $\validInd{\is}{\judg}$  when $\judg$ has a finite derivation in $\is$, that is, $\judg\in\Ind{\is}$.


An \emph{inference system with corules}, or \emph{generalized inference system}, is a pair $\Pair{\is}{\cois}$ where $\is$ and $\cois$ are inference systems, whose elements are called \emph{rules} and \emph{corules}, respectively.
Corules can only be used in a special way, as defined below. 

For a subset $S$ of the universe, let $\Restricted{\is}{S}$ denote the inference system obtained from $\is$ by keeping only rules with 
consequence in $S$.
Let $\Pair{\is}{\cois}$ be a generalized inference system. Then, its \emph{interpretation} $\Generated{\is}{\cois}$ is defined by
$\Generated{\is}{\cois}=\CoInd{\Restricted{\is}{\Ind{\is\cup\cois}}}$.
 
In proof-theoretic terms, $\Generated{\is}{\cois}$ is the set of judgments that have a possibly infinite proof tree in $\is$, where all nodes have a finite proof tree in $\is\cup\cois$, that is, the (standard) inference system consisting of rules and corules. 
We write $\valid{\is}{\cois}{\judg}$ when $\judg$ is derivable in $\Pair{\is}{\cois}$, that is, $\judg\in\Generated{\is}{\cois}$. {Note that $\valid{\is}{\emptyset}{\judg}$ is the same as $\validInd{\is}{\judg}$.}

{We illustrate these notions by a simple example. As usual, sets of rules are expressed by \emph{meta-rules} with side conditions, and analogously sets of corules are expressed by \emph{meta-corules} with side conditions.
(Meta-)corules will be written with thicker lines, to be distinguished from (meta-)rules.
The following inference system defines the minimum element of a list, where {$[\x]$ is the list  consisting of only $\x$}, and $\x:u$ the list with head $\x$ and tail $u$.
\[
\Rule{}{\minElem{{[\x]}}{\x}}\BigSpace\Rule{\minElem{u}{y}}{\minElem{x{:}u}{z}}z=\min(x,y) 
\]
The inductive interpretation gives the correct result only on finite lists, since for infinite lists an infinite proof is clearly needed. However, the coinductive one fails to be a function. 
For instance, \PB{for} $L$ the infinite list $2:1:2:1:2:1:\ldots$, any judgment $\minElem{L}{x}$ with $x\leq 1$ can be derived{, as shown below}.
\[
\Rule{
\Rule{
\Rule{\ldots}{\minElem{L}{1}}
}{\minElem{1{:}L}{1}}
}{\minElem{2{:}1{:}L}{1}}
\BigSpace
\Rule{
\Rule{
\Rule{\ldots}{\minElem{L}{0}}
}{\minElem{1{:}L}{0}}
}{\minElem{2{:}1{:}L}{0}}
\]
By adding a corule (in this case a coaxiom), wrong results are ``filtered out'':
\[
\Rule{}{\minElem{x{:}\epsilon}{x}}\BigSpace\Rule{\minElem{u}{y}}{\minElem{x{:}u}{z}}z=\min(x,y) 
\BigSpace\CoRule{}{\minElem{x{:}u}{x}}
\]
Indeed, the judgment $\minElem{2{:}1{:}L}{1}$ has the infinite proof tree shown above, and each node has a finite proof tree in the inference system extended by the corule:
\[
\Rule{
\Rule{
\Rule{\ldots}{\minElem{L}{1}}
}{\minElem{1{:}L}{1}}
}{\minElem{2{:}1{:}L}{1}}
\BigSpace
\Rule{
\CoRule{
}{\minElem{1{:}L}{1}}
}{\minElem{2{:}1{:}L}{1}}
\]
The judgment $\minElem{2{:}1{:}L}{0}$\EZ{, instead,} has the infinite proof tree shown above, but has \emph{no finite proof tree} in the inference system extended by the corule. Indeed, since $0$ does not belong to the list, the corule can never be applied. 
\EZ{On the other hand, the judgment $\minElem{L}{2}$ has a finite proof tree with the corule, but cannot be derived since they it has no infinite proof tree.}
We refer to \cite{AnconaDZ@ESOP17,AnconaDZ@OOPSLA17,AnconaDZ@ECOOP18,Dagnino19} for other examples.}

\EZComm{frase da accorciare ma non so come}
\EZ{As final remark, note that requiring the existence of a finite proof tree with corules only for the root is not enough. For regular proof trees, the requirement to have such a proof tree for each node can be simplified in two ways:
\begin{itemize}
\item either requiring a sufficiently large finite proof-with-corules for the root, that is, a finite proof tree for the root which includes all the nodes of the regular proof tree
\item or requiring a finite proof-with-corules for one node taken from each infinite path.
\end{itemize}}



Let $\Pair{\is}{\cois}$ be a generalized inference system. The \emph{bounded coinduction principle} \PB{\cite{AnconaDZ@ESOP17}}, a generalization of the standard coinduction principle, can be used to prove \emph{completeness} of $\Pair{\is}{\cois}$ w.r.t.\ a set $\Spec$ (for ``specification'') of \emph{valid} judgments. 
\begin{theorem}[Bounded coinduction]\label{theo:bcoind}If the following two conditions hold:
\begin{enumerate}
\item $\Spec \subseteq \Ind{\is\cup\cois}$, that is, each valid judgment has a finite proof tree in $\is\cup\cois$;
\item $\Spec \!\subseteq\! \Op{\is}(\Spec)$, that is, each valid judgment is the consequence of a rule in $\is$ \mbox{with premises in $\Spec$}
\end{enumerate}
then $\Spec\subseteq\Generated{\is}{\cois}$.
\end{theorem}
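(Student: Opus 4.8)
The plan is to unwind the definition $\Generated{\is}{\cois}=\CoInd{\Restricted{\is}{\Ind{\is\cup\cois}}}$ and then apply the ordinary coinduction principle to the inference system $\Restricted{\is}{\Ind{\is\cup\cois}}$. Since, for any inference system $\is'$, the set $\CoInd{\is'}$ is the largest \emph{consistent} set for the operator $\Op{\is'}$, it is enough to prove that $\Spec$ itself is consistent for $\Op{\Restricted{\is}{\Ind{\is\cup\cois}}}$, that is, $\Spec\subseteq\Op{\Restricted{\is}{\Ind{\is\cup\cois}}}(\Spec)$; this immediately yields $\Spec\subseteq\CoInd{\Restricted{\is}{\Ind{\is\cup\cois}}}=\Generated{\is}{\cois}$, which is the thesis.

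To prove this inclusion, I would fix an arbitrary $\judg\in\Spec$ and show $\judg\in\Op{\Restricted{\is}{\Ind{\is\cup\cois}}}(\Spec)$. Unfolding the definitions of $\Op{}$ and of the restriction, this amounts to exhibiting a rule of $\is$ with consequence $\judg$ such that (a) all its premises belong to $\Spec$, and (b) its consequence $\judg$ belongs to $\Ind{\is\cup\cois}$, so that the rule is retained in $\Restricted{\is}{\Ind{\is\cup\cois}}$. Hypothesis~2, $\Spec\subseteq\Op{\is}(\Spec)$, provides exactly a rule of $\is$ with consequence $\judg$ whose premises are contained in $\Spec$, giving~(a); hypothesis~1, $\Spec\subseteq\Ind{\is\cup\cois}$, gives $\judg\in\Ind{\is\cup\cois}$, which is~(b). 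Hence $\judg\in\Op{\Restricted{\is}{\Ind{\is\cup\cois}}}(\Spec)$, and since $\judg$ was arbitrary, $\Spec$ is consistent, which completes the proof.

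I do not expect a genuine obstacle: once the definition of $\Generated{\is}{\cois}$ is expanded, the statement is a direct instance of coinduction, and the two hypotheses are tailored so that each discharges exactly one of the two obligations hidden in membership in $\Op{\Restricted{\is}{\Ind{\is\cup\cois}}}(\Spec)$ — premises in $\Spec$ from condition~2, consequence in $\Ind{\is\cup\cois}$ from condition~1. The only point deserving a word of care is precisely this separation of obligations, together with the observation that condition~1 really needs the full system $\is\cup\cois$ and not just $\is$: it is the presence of the corules that makes $\Ind{\is\cup\cois}$ large enough to contain $\Spec$ in the intended applications, such as the minimum-of-a-list example discussed above.
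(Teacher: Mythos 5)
Your proof is correct: the paper itself states this theorem without proof (citing prior work on inference systems with corules), and your argument is exactly the standard one — unfold $\Generated{\is}{\cois}$ as $\CoInd{\Restricted{\is}{\Ind{\is\cup\cois}}}$, and verify that $\Spec$ is consistent for the restricted system, with hypothesis~2 supplying the rule with premises in $\Spec$ and hypothesis~1 guaranteeing that this rule survives the restriction because its consequence lies in $\Ind{\is\cup\cois}$. Nothing is missing.
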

\EZComm{cut: The standard coinduction principle can be obtained when $\cois=\{\Rule{\emptyset}{\cons}\mid\cons\in\universe\}$; for this particular case the first condition trivially holds.}

\section{From $\FJ$ to $\coFJ$}\label{sect:informal}

We \EZ{recall  $\FJ$}, and informally explain its extension with infinite objects and codefinitions.

\smallskip
\noindent\textbf{Featherweight Java} The standard syntax and semantics in big-step style of $\FJ$  are \PB{shown} in \refToFigure{FJ}. 
\EZ{We omit cast since this feature does not add significant issues.}
We adopt a big-step, rather than a small-step style as in the original $\FJ$ definition, since in this way the semantics is directly defined by an inference system{, denoted $\is_\FJ$ in the following,} which will be equipped with corules to support infinite objects.  
We write $\CDBar$ as metavariable for $\CD_1 \ldots \CD_n$, $n\geq 0$, and analogously for other sequences.
We sometimes use the wildcard $\_$ when the corresponding metavariable is not relevant.

A sequence of class declarations $\CDBar$ is called a \emph{class table}. Each class has a canonical constructor whose parameters match the fields of the class, the inherited \PB{ones} first.
We assume standard $\FJ$ constraints, e.g., no field hiding and no method overloading. {The only variables occurring in method bodies are parameters (including $\kwThis)$.}
Values are \emph{objects}, that is, constructor invocations where arguments are values in turn. 

The judgment $\BigStepFJ{\E}{\val}$ is implicitly parameterized on a fixed class table. In the rules we use standard $\FJ$ auxiliary functions, \EZ{omitting their formal definition.}
Notably, $\fields(\C)$ returns the sequence $\f_1\ldots\f_n$ of the field names\footnote{\EZ{We omit types since not relevant here. We discuss about type systems for $\coFJ$ in the conclusion.}} of the class, in declaration order with the inherited first, and $\mbody(\C,\m)$, for method $\m$ of the class, the pair of the sequence of parameters and the definition. Substitution
$\Subst{\E}{\EBar}{\xBar}$, for $\EBar$ and $\xBar$ of the same length, is defined in the customary manner.
Finally, for $\EBar=\E_1\ldots\E_n$ and $\vBar=\val_1\ldots\val_n$, $\BigStepFJ{\EBar}{\vBar}$ is an abbreviation for $\BigStepFJ{\E_1}{\val_1}\ldots\BigStepFJ{\E_n}{\val_n}$.

\begin{figure}
\hrule
\begin{grammatica}
\\
\produzione{\CD}{\kwClass\
\C\ \kwExtends\ \C'\ \{\ \FDBar\  \MDBar\ \}}{class declaration}\\
\produzione{\FD}{\FieldDec{\C}{\f}}{field declaration}\\
\produzione{\MD}{\C\ \m({\C_1\,\x_1,\ldots,\C_n\,\x_n})\ \MethBody{\E} }{method declaration}\\
\produzione{\E\in\FJexp}{\x\mid\FieldAccess{\E}{\f}\mid\NewExpr{\C}{\EBar}\mid\MethodCall{\E}{\m}{\EBar}}{expression}\\
\\
\produzione{\val\in\FJvals}{\NewExpr{\C}{\vBar}}{{(finite)} object}
\end{grammatica}
\\[2ex]

\hrule

$\begin{array}{l}
\\
\NamedRule{$\FJ$-field}{\BigStepFJ{\E}{\val}}{\BigStepFJ{\FieldAccess{\E}{\f}}{\val_i}}
{\begin{array}{l}
\val=\NewExpr{\C}{{\val_1,\ldots,\val_n}}\\
\fields(\C)={\f_1\ldots\f_n}\\
\f=\f_i, i\in 1..n
\end{array}}
\BigSpace
\NamedRule{$\FJ$-new}{\BigStepFJ{\EBar}{\vBar}}{\BigStepFJ{\NewExpr{\C}{\EBar}}{\NewExpr{\C}{\vBar}}}
{
}
\\[6ex]
\NamedRule{$\FJ$-invk}{\BigStepFJ{\E_0}{\val_0}\Space\BigStepFJ{\EBar}{\vBar}\Space\BigStepFJ{\Subst{\Subst{\E}{\val_0}{\kwThis}}{\vBar}{\xBar}}{\val}}{\BigStepFJ{\MethodCall{\E_0}{\m}{\EBar}}{\val}}
{\begin{array}{l}
\val_0=\NewExpr{\C}{\_}\\
\mbody(\C,\m)=\Pair{\xBar}{\E}
\end{array}}
\\[2ex]
\end{array}$
\\[2ex]

%
\hrule
\caption{$\FJ$ syntax and big-step rules}\label{fig:FJ}
\end{figure}

Rule \refToRule{{FJ-field}} models field access. 
If the selected field is actually a field of the receiver's class, then the corresponding value is returned as result.
Rule \refToRule{FJ-new} models object creation: if the {argument expressions} $\EBar$ evaluate to values $\vBar$, {then} the result is an object of {class} \textit{C}. 
\EZComm{cut: The rule is faithful to the original semantics of $\FJ$ \cite{IgarashiEtAl99}: no check is performed on the existence of the definition
of \textit{C} and on the number of arguments passed to the constructor.}
Rule \refToRule{FJ-invk} models method invocation. The receiver and argument expressions are evaluated first. Then, method look-up is performed, starting from the receiver's class,  by the auxiliary function $\mbody$. Lastly, the definition $\E$ of the method, where $\kwThis$ is replaced by the receiver, and the parameters by the arguments, is evaluated, and its result is returned.

\smallskip
\noindent\textbf{Infinite objects and codefinitions}
We take as running example the following $\FJ$ implementation of lists of integers, equipped with some typical methods: {\texttt{isEmpty} tests the emptiness,} \texttt{incr} returns the list where all elements have been incremented by one, \texttt{allPos} checks whether all elements are positive, \texttt{member} checks whether the argument is in the list, and \texttt{min} returns the minimal element.
\begin{lstlisting}
class List extends Object {
  bool isEmpty() {true}
  List incr() {new EmptyList()}
  bool allPos() {true}
  bool member(int x) {false}
}
class EmptyList extends List { }
class NonEmptyList extends List { 
  int head; List tail;
  bool isEmpty() {false}
  List incr() {new NonEmptyList(this.head+1,this.tail.incr())}
  bool allPos() {if (this.head<=0) false else this.tail.allPos()}
  bool member(int x) {if (this.head==x) true else this.tail.member(x)}
  int min() {
    if (this.tail.isEmpty()) this.head
    else Math.min(this.tail.min(),this.head)
  }
}
\end{lstlisting}
We used some additional standard constructs, such as conditional and primitive types \texttt{bool} and \texttt{int} with their operations; to avoid to use abstract methods, \texttt{List} provides the default implementation on empty lists, overridden in \texttt{NonEmptyList}, except for method \texttt{min} which is only defined on non empty lists.

In $\FJ$ we can represent finite lists. 
For instance, the object
\begin{lstlisting}
new NonEmptyList(2, new NonEmptyList(1, new EmptyList()))
 \end{lstlisting}
which we will abbreviate $[2,1]$, represents a list of two elements, and it is easy to see that all the above method definitions provide the expected meaning on finite lists.

On the other  hand, since the syntactic definition for objects is interpreted, \PB{like} the others, inductively, in $\FJ$ objects are \emph{finite}, hence we cannot represent, e.g., the infinite list of natural numbers $[0,1,2,3, \ldots]$, abbreviated $[0..]$, or the infinite list $[2,1,2,1,2,1,\ldots]$, abbreviated $[2,1]^\omega$. 
To move from finite to infinite objects, it is enough to interpret the syntactic definition for values \emph{coinductively}, so to obtain infinite terms as well.
However, to make the extension significant, we should be able to \emph{generate} such infinite objects as results of expressions, and to appropriately \emph{handle} them by methods.

To generate infinite objects, e.g., the infinite lists mentioned above,  a natural approach is to consider method definitions as \emph{corecursive}, that is, to take the \emph{coinductive} interpretation of the inference system in \refToFigure{FJ}. 
Consider the following class:
\begin{lstlisting}
class ListFactory extends Object {
  NonEmptyList from(int x) {new NonEmptyList(x, this.from(x+1)}
  NonEmptyList two_one() {new NonEmptyList(2, this.one_two())}
  NonEmptyList one_two() {new NonEmptyList(1, this.two_one())}
}
\end{lstlisting}
With the standard $\FJ$ semantics, given by the inductive interpretation of the inference system in \refToFigure{FJ}, the method invocation \texttt{new ListFactory().from(0)} (abbreviated $\texttt{from}_0$ in the following) has no result, since there is no finite proof tree for a judgment of shape $\eval{\texttt{from}_0}{\_}$. 
Taking the coinductive interpretation, instead, such call returns as result the infinite list of natural numbers $[0..]$, since there is an infinite proof tree for the judgment $\eval{\texttt{from}_0}{[0..]}$.
Analogously, the method invocation $\texttt{new ListFactory().two\_one()}$ returns $[2,1]^\omega$.
Moreover, the method invocations $\MethodCall{[0..]}{\texttt{incr}}{}$ and $\MethodCall{[2,1]^\omega}{\texttt{incr}}$ correctly return as result the infinite lists $[1..]$ and $[3,2]^\omega$, respectively.

However, in many cases to consider method definitions as corecursive is not satisfactory, since it leads to non-determinism\EZ{, as  shown for inference systems in \refToSection{corules}}. For instance, for the method invocation  $\MethodCall{[0..]}{\texttt{allPos}}{}$ both judgments $\eval{\MethodCall{[0..]}{\texttt{allPos}}{}}{\true}$ and $\eval{\MethodCall{[0..]}{\texttt{allPos}}{}}{\false}$ are derivable, and analogously for $\MethodCall{[2,1]^\omega}{\texttt{allPos}}{}$. In general, both results can be obtained for any infinite list of all positive numbers. 
A similar behavior is exhibited by method \texttt{member}: given an infinite list $L$ which does not contain $\x$, both judgments $\eval{\MethodCall{L}{\texttt{member}}{\x}}{\true}$ and $\eval{\MethodCall{L}{\texttt{member}}{\x}}{\false}$ are derivable.
Finally, for the method invocation $\MethodCall{[2,1]^\omega}{\texttt{min}}{}$, any judgment $\eval{\MethodCall{[2,1]^\omega}{\texttt{min}}{}}{\x}$ with $\x\leq 1$ can be derived. 

To solve this problem, $\coFJ$ allows the programmer to \emph{control} the semantics of corecursive methods by adding a \emph{codefinition}\footnote{\EZ{The term ``codefinition'' is meant to suggest ``alternative definition used to handle corecursion''.}}, that is, an alternative method body playing a special role. 
Depending on the codefinition, the purely coinductive interpretation is refined, by filtering out some judgments. In the example, to achieve the expected meaning, the programmer should provide the following codefinitions.

\begin{lstlisting}
class ListFactory extends Object {
  NonEmptyList from(int x) {
    new NonEmptyList(x, this.from(x+1)} corec {any}
  NonEmptyList one_two() {
    new NonEmptyList(1, this.two_one())} corec {any}
  NonEmptyList two_one() {
    new NonEmptyList(2, this.one_two())} corec {any}
}
class NonEmptyList extends List { 
  int head; List tail;
  bool isEmpty() {false}
  List incr() { 
    new NonEmptyList(this.head+1,this.tail.incr())} corec {any}
  bool allPos() { 
    if (this.head <= 0) false else this.tail.allPos()} corec {true}
  bool member(int x) { 
    if (this.head == x) true else this.tail.member(x)} corec {false} 
  int min() {
    if (this.tail.isEmpty()) this.head
    else Math.min(this.tail.min(),this.head) 
  } corec {this.head}
}
\end{lstlisting}
For the three methods of \texttt{ListFactory} and for the method \texttt{incr} the codefinition is \texttt{any}. This corresponds to \PB{keeping} the coinductive interpretation as it is, as \PB{is} appropriate in these cases since it provides only the expected result. In the other three methods, instead, the effect of the codefinition is to filter the results obtained by the coinductive \EZ{interpretation}. The way this is achieved is explained in the following section. {Finally, for method \texttt{isEmpty} no codefinition is added, since the inductive behaviour works on infinite lists as well.}\EZComm{stessa cosa per le versioni di default dei metodi, dire qualcosa? Mi viene anche in mente che in caso di inheritance dovremmo indagare se ci sono vincoli sulla ridefinizione della codefinizione}

\section{$\coFJ$ and its abstract semantics}\label{sect:abstract}
We formally define $\coFJ$, illustrate how the previous examples get the expected semantics, and show that, despite its non-determinism, $\coFJ$ is a conservative extension of $\FJ$.

\smallskip
\noindent\textbf{Formal definition of $\coFJ$} The $\coFJ$ syntax is given in \refToFigure{syntax}. 
\begin{figure}
\begin{small}
\begin{grammatica}
\produzione{\CD}{\kwClass\
\C\ \kwExtends\ \C'\ \{\ \FDBar\  \MDBar\ \}}{class declaration}\\
\produzione{\FD}{\FieldDec{\C}{\f}}{field declaration}\\
\produzione{\MD}{\C\ \m({\C_1\,\x_1,\ldots,\C_n\,\x_n})\ \MethBody{\E}\ [\corec\ \MethBody{\E'}]}{method declaration with codefinition}\\
\produzione{\E\in\FJexp}{\x\mid\FieldAccess{\E}{\f}\mid{\NewExpr{\C}{\EBar}\mid\MethodCall{\E}{\m}{\EBar}}}{expression}\\[2ex]
\produzioneCo{\val\in\coFJaVals}{\NewExpr{\C}{\vBar}}{{possibly infinite object}}\\
\produzione{\E\in\coFJaExp}{\x\mid\FieldAccess{\E}{\f}\mid{\NewExpr{\C}{\EBar}\mid\MethodCall{\E}{\m}{\EBar}}\mid\val}{{runtime expression}}
\end{grammatica}
\\[2ex]

\hrule

$\begin{array}{l}
\\
\NamedRule{abs-field}{\eval{\E}{\val}}{\eval{\FieldAccess{\E}{\f}}{\val_i}}
{\begin{array}{l}
\val=\NewExpr{\C}{{\val_1,\ldots,\val_n}}\\
\fields(\C)=\PB{\f_1...\f_n}\\
\f=\f_i, i\in 1..n
\end{array}
}
\BigSpace
\NamedRule{abs-new}{\eval{\EBar}{\vBar}}{\eval{\NewExpr{\C}{\EBar}}{\NewExpr{\C}{\vBar}}}
{}
\\[6ex]
\NamedRule{abs-invk}{\eval{\E_0}{\val_0}\Space\eval{\EBar}{\vBar}\Space\eval{\Subst{\Subst{\E}{\val_0}{\kwThis}}{\vBar}{\xBar}}{\val}}{\eval{\MethodCall{\E_0}{\m}{\EBar}}{\val}}
{
\val_0=\NewExpr{\C}{\_}\\
{\mbody(\C,\m)=\Pair{\xBar}{\E}}
}
\BigSpace
\NamedCoRule{{abs-co-val}}{}{\eval{\val}{\val}}
{}
\\[6ex]
\NamedCoRule{abs-co-invk}{\eval{\E_0}{\val_0}\Space\eval{\EBar}{\vBar}\Space\eval{\Subst{\Subst{\Subst{\E'}{\val_0}{\kwThis}}{\vBar}{\xBar}}{\val}{\Any}}{\val_\textit{co}}}{\eval{\MethodCall{\E_0}{\m}{\EBar}}{\val_\textit{co}}}
{
\val_0=\NewExpr{\C}{\_}\\
{\combody(\C,\m)=\Pair{\xBar}{\E'}}
}
\end{array}$
\end{small}
\caption{$\coFJ$ syntax and abstract semantics}\label{fig:syntax}
\end{figure}
As the reader can note, the only difference is that method declarations include now, besides a definition $\E$, an optional \emph{codefinition} $\E'${, as denoted by the square brackets in the production}. Furthermore, besides $\kwThis$, there is another special variable $\Any${, which can only occur in codefinitions}.
The codefinition will be used to provide an abstract semantics through an inference system with corules, where the role of $\Any$ is to be a placeholder for an arbitrary value.
For simplicity, \PB{we require the codefinition $\E'$ to be} statically restricted to avoid recursive (even indirect) calls to the same method \EZ{(we omit the standard formalization).}
{Note that $\FJ$ is a (proper) subset of $\coFJ$: indeed, an $\FJ$ class table is a $\coFJ$ class table with no codefinitions}. 

The syntactic definition for values is the same as before, but is now interpreted \emph{coinductively}, as indicated by the symbol $::=_\textsc{co}$. In this way, infinite objects are supported. 
By replacing method parameters by arguments, we obtain \emph{runtime expressions} admitting infinite objects as subterms.  
The sets $\FJvals$ and $\FJexp$ of $\FJ$ objects and expressions are subsets of $\coFJaVals$ and $\coFJaExp$, respectively. The judgment $\eval{\E}{\val}$, with $\E\in\coFJaExp$ and $\val\in\coFJaVals$,  is defined by an inference system with corules {$\Pair{\isFJ}{\coisFJ}$ where the rules $\isFJ$ are those\footnote{To be precise, meta-rules are the same, with meta-variables $\E$ and $\val$ ranging on $\coFJaExp$, and $\coFJaVals$, respectively. However, we could have taken this larger universe in $\FJ$ as well without affecting the defined relation.} of $\FJ$, as in \refToFigure{FJ}, and the corules $\coisFJ$ are instances of two metacorules.}

Corule \refToRule{{abs-co-val}} is needed to obtain a value for infinite objects\EZ{, as shown below.}
Corule \refToRule{abs-co-invk} is analogous to the standard rule for method invocation, but uses the codefinition, and  the variable $\Any$ can be non-deterministically substituted with an arbitrary value. The auxiliary function \textit{co-mbody} is defined analogously to \textit{mbody}, but it returns the codefinition. Note that, even when $\mbody(\C,\m)$ is defined,  $\combody(\C,\m)$ can be undefined since no codefinition has been specified. This can be done to force a purely inductive behaviour for the method.

\smallskip
\noindent\textbf{Examples} \EZComm{ho aggiustato un po' la figura rendendo pi\`u uniformi le notazioni}
As an example, we illustrate in \refToFigure{example_lf} the role of the two corules for the call \texttt{new ListFactory().from(0)}.
 For brevity, we write abbreviated class names. 
 Furthermore, $\texttt{from}_n$ stands for the call \texttt{new ListFactory().from(n)} and $[n..]$ for the infinite object \texttt{new NonEmptyList(n,new NonEmptyList(n+1,...)))}.

\begin{figure}
\begin{footnotesize}
\begin{math}
\EZ{T_n {=} }\NamedRuleSimple{abs-invk}{\NamedRuleSimple{abs-new}{}{\eval{\texttt{new LF()}}{\texttt{new LF()}}} \quad
\NamedRuleSimple{n-val}{}{\eval{n}{n}}\quad
\NamedRuleSimple{abs-new}{\NamedRuleSimple{n-val}{}{\eval{n}{n}}\qquad \RuleNoFrac{}{T_n}
}{\eval{\texttt{new NEL($n$,}\texttt{new LF().from($n$+1)})}{[n..]}}
}{\eval{\texttt{from}_n}{[n..]}}
\\[4ex]
T_{\EZ{n+1}} {=} \NamedRuleSimple{abs-invk}{\NamedRuleSimple{abs-new}{}{\eval{\texttt{new LF()}}{\texttt{new LF()}}} \quad
  \NamedRuleSimple{+}
  {\cdots}{\eval{\texttt{$n$+1}}{n{+}1}}\quad
{\NamedRuleSimple{abs-new}{\NamedRuleSimple{n-val}{}{\eval{n{+}1}{n{+}1}}\quad \RuleNoFrac{}{T_{\EZ{n+2}}}
}{\scriptsize \eval{\texttt{new NEL($n{+}1$,}\texttt{new LF().from($n{+}1$+1)})}{[n{+1}..]}}}
}{\eval{\texttt{new LF().from($n$+1)}}{[n{+}1..]}}
\\[10ex]
\NamedCoRule{abs-co-invk}{\NamedRule{abs-new}{}{\eval{\texttt{new LF()}}{\texttt{new LF()}}}{}\Space{\NamedRule{n-val}{}{\eval{n}{n}}{}}\Space\NamedCoRule{abs-co-val}{}{[n..]\equiv\eval{\Subst{\Subst{\Any}{{\texttt{new LF()}}}{\kwThis}}{[n..]}{\Any}}{[n..]}}{}}{\eval{\texttt{from}_n}{[n..]}}{}
\end{math}
\end{footnotesize}
\caption{Infinite (top) and finite (bottom) proof trees for $\eval{\texttt{from}_n}{[n..]}$}\label{fig:example_lf}
\end{figure}

In the top part of \refToFigure{example_lf}, we show the infinite proof tree \EZ{$T_n$} which can be constructed, for any natural number $n$,
for the judgment $\eval{\texttt{from}_n}{[n..]}$ without
the use of corules. We use standard rules \refToRule{n-val} and \refToRule{+} to deal with integer constants and addition.

To derive the judgment in the inference system with corules, each node in this infinite tree should have a finite proof tree \EZ{with} the corules.
Notably, this should hold  for nodes of shape $\eval{\texttt{from}_n}{[n..]}$, and indeed the finite proof tree for such nodes is shown in the bottom part of the figure.
Note that, in this example, the result for the call $\texttt{from}_n$
is uniquely determined by the rules, hence the role of the corules is just to ``validate'' this result. To this end, the codefinition of the method \texttt{from} is the special variable \texttt{any}, which, when evaluating the codefinition, can be replaced by any value, hence, in particular, by the correct result $[n..]$. Corule \refToRule{abs-co-val}  is needed to obtain a finite proof tree for the infinite objects of shape $[n..]$. Analogous infinite and finite proof trees can be constructed for the judgments $\eval{\texttt{new ListFactory().two\_one()}}{[2,1]^\omega}$, $\eval{\MethodCall{[0..]}{\texttt{incr}}{}}{[1..]}$ and $\eval{\MethodCall{[2,1]^\omega}{\texttt{incr}}}{[3,2]^\omega}$. 

For the method call $\MethodCall{[0..]}{\texttt{allPos}}{}$, instead, both judgments $\eval{\MethodCall{[0..]}{\texttt{allPos}}{}}{\true}$ and $\eval{\MethodCall{[0..]}{\texttt{allPos}}{}}{\false}$ have an infinite proof tree. However, no finite proof tree using the codefinition can be constructed for the latter, whereas this is trivially possible for the former. Analogously, given an infinite list $L$ which does not contain $\x$, only the judgment $\eval{\MethodCall{L}{\texttt{member}}{\x}}{\false}$ has a finite proof tree using the codefinition. 

Finally, for the method invocation $\MethodCall{[2,1]^\omega}{\texttt{min}}{}$, for any $\val\leq 1$ there is an infinite proof tree
built without corules for the judgment $\eval{\mathtt{[2,1]^\omega .min()}}{\val}$ as shown in \refToFigure{example_min_inf}.
\begin{figure}[h]
\begin{footnotesize}
$\NamedRuleSimple{abs-invk}{T_0 \qquad T_1}{\eval{\mathtt{[2,1]^\omega .min()}}{\val}} 
$
\qquad
$
T_0{=} \NamedRuleSimple{abs-new}{\NamedRuleSimple{n-val}{}{\eval{\mathtt{2}}{\mathtt{2}}}\qquad\NamedRuleSimple{abs-new}{\NamedRuleSimple{n-val}{}{\eval{\mathtt{1}}{\mathtt{1}}}\qquad \RuleNoFrac{}{T_0}}{\eval{\mathtt{[1,2]^\omega }}{\mathtt{[1,2]^\omega }}}}{\eval{\mathtt{[2,1]^\omega }}{\mathtt{[2,1]^\omega }}}
$
\\[3ex]
$T_1 {=}  \NamedRuleSimple{\EZ{if-f}}{\Rule{\vdots}{\eval{\mathtt{[2,1]^\omega .tail.isEmpty()}}{\mathtt{false}}}\quad \Rule{\Rule{\RuleNoFrac{T_2}{\vdots}}{\eval{\mathtt{[2,1]^\omega .tail.min()}}{\val}}\quad\Rule{\vdots}{\eval{\mathtt{[2,1]^\omega .head}}{\mathtt{2}}}} {\eval{\mathtt{Math.min([2,1]^\omega .tail.min(),[2,1]^\omega .head)}}{\val}}}{\eval{\mathtt{\mathbf{if}\ [2,1]^\omega .tail.isEmpty()\ \mathbf{then}\ [2,1]^\omega .head\ \mathbf{else}\ Math.min([2,1]^\omega .tail.min(),[2,1]^\omega .head)}}{\val}}
$
\\[3ex]
$T_2 {=}  \NamedRuleSimple{\EZ{if-f}}{\Rule{\vdots}{\eval{\mathtt{[1,2]^\omega .tail.isEmpty()}}{\mathtt{false}}}\quad \Rule{\Rule{\RuleNoFrac{T_1}{\vdots}}{\eval{\mathtt{[1,2]^\omega .tail.min()}}{\val}}\quad\Rule{\vdots}{\eval{\mathtt{[1,2]^\omega .head}}{\mathtt{1}}}} {\eval{\mathtt{Math.min([1,2]^\omega .tail.min(),[1,2]^\omega .head)}}{\val}}}{\eval{\mathtt{\mathbf{if}\ [1,2]^\omega .tail.isEmpty()\ \mathbf{then}\ [1,2]^\omega .head\ \mathbf{else}\ Math.min([1,2]^\omega .tail.min(),[1,2]^\omega .head)}}{\val}}
$
\end{footnotesize}
\caption{Infinite proof tree for $\eval{\mathtt{[2,1]^\omega .min()}}{\val}$ with $\val\leq 1$ (main tree at the top left corner)}\label{fig:example_min_inf}
\end{figure}
However, only the judgment $\eval{\MethodCall{[2,1]^\omega}{\texttt{min}}{}}{1}$  has a finite proof tree using the codefinition (\refToFigure{example_min_fin}). 
For space reasons in both figures ellipses are used to omit the less interesting parts of the proof trees;
we use the standard rule \refToRule{\EZ{if-f}} for conditional, and the predefined function \texttt{Math.min} on integers.

\begin{figure}[h]
\begin{footnotesize}
$
\NamedRule{abs-invk}{\RuleNoFrac{}{T_0}\ \NamedRule{\EZ{if-f}}{\Rule{\vdots}{\eval{\mathtt{[2,1]^\omega .tail.isEmpty()}}{\mathtt{false}}}{}\quad\Rule{\NamedCoRuleSimple{abs-co-invk}{\cdots\quad\Rule{\NamedCoRuleSimple{abs-co-val}{}{\eval{\mathtt{[1,2]^\omega }}{\mathtt{[1,2]^\omega }}}}{\eval{\mathtt{[1,2]^\omega .head}}{\mathtt{1}}}}{\eval{\mathtt{[2,1]^\omega .tail.min()}}{\mathtt{1}}}\quad\Rule{\vdots}{\eval{\mathtt{[2,1]^\omega .head}}{\mathtt{2}}}}{\eval{\mathtt{Math.min([2,1]^\omega .tail.min(),[2,1]^\omega .head)}}{\mathtt{1}}}}{\eval{\mathtt{\mathbf{if}\ [2,1]^\omega .tail.isEmpty()\ \mathbf{then}\ [2,1]^\omega .head\ \mathbf{else}\ Math.min([2,1]^\omega .tail.min(),[2,1]^\omega .head)}}{\mathtt{1}}}{}}{\eval{\mathtt{[2,1]^\omega .min()}}{\mathtt{1}}}{}
$
\end{footnotesize}
\caption{Finite proof tree with codefinition for $\eval{\mathtt{[2,1]^\omega .min()}}{\mathtt{1}}$ ($T_0$ as in \refToFigure{example_min_inf})}\label{fig:example_min_fin}
\end{figure}

\smallskip
\noindent\textbf{Non-determinism and conservativity} The $\coFJ$ abstract semantics is inherently non-deterministic. Indeed, depending on the codefinition, the non-determinism of the coinductive interpretation may be kept. For instance, consider the following method declaration:
\begin{lstlisting}
class C { 
  C m() { this.m() } corec { any }
}
\end{lstlisting}
Method \texttt{m()} recursively calls itself. In the abstract semantics, the judgment $\eval{\MethodCall{\NewExpr{\texttt{C}}{}}{\texttt{m}}{}}{\val}$ can be derived for any value $\val$. In the operational semantics defined in \refToSection{operational}, such method call evaluates to {$\Caps{\x}{\x:\x}$}, that is, the representation of \emph{undetermined}.

{However, determinism of $\FJ$ evaluation is preserved. Indeed, $\coFJ$ abstract semantics is a \emph{conservative} extension of $\FJ$ semantics, as formally stated below.}

\begin{theorem}[Conservativity] \label{theo:abs-conservative} 
If  $\validInd{\isFJ}{\BigStepFJ{\E}{\val}}$, then
\PB{$\valid{\isFJ}{\coisFJ}{\eval{\E}{\val'}}$ \EZ{iff} $\val = \val'$.}
\end{theorem}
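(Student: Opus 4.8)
The plan is to prove the two implications separately. Write $\Spec=\Ind{\isFJ}$ for the set of judgments derivable in $\FJ$, i.e.\ those $\eval{\E}{\val}$ with $\validInd{\isFJ}{\eval{\E}{\val}}$. The ``if'' direction ($\val=\val'$ implies $\valid{\isFJ}{\coisFJ}{\eval{\E}{\val'}}$) amounts to the inclusion $\Spec\subseteq\Generated{\isFJ}{\coisFJ}$, that is, soundness of the original $\FJ$ semantics inside the abstract one. The ``only if'' direction is a determinism statement: no derivation in $\Pair{\isFJ}{\coisFJ}$ can assign to $\E$ a value different from its (unique) $\FJ$ result $\val$.

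For the ``if'' direction I would apply bounded coinduction (\refToTheorem{bcoind}) to $\Spec$. Condition~(1), $\Spec\subseteq\Ind{\isFJ\cup\coisFJ}$, is immediate since $\Spec=\Ind{\isFJ}$ and adding corules can only enlarge the inductive interpretation. Condition~(2), $\Spec\subseteq\Op{\isFJ}(\Spec)$, is exactly the observation that the last rule of a \emph{finite} $\FJ$ proof tree is an $\isFJ$-rule each of whose premises again has a finite $\FJ$ proof tree, hence belongs to $\Spec$. Thus $\Spec\subseteq\Generated{\isFJ}{\coisFJ}$, which instantiated at $\val'=\val$ is the claim. Equivalently, one can argue directly: in a finite $\FJ$ proof tree $D$ for $\eval{\E}{\val}$ every node lies in $\Ind{\isFJ}\subseteq\Ind{\isFJ\cup\coisFJ}$, so every rule instance used in $D$ is retained in $\Restricted{\isFJ}{\Ind{\isFJ\cup\coisFJ}}$, whence $D$ witnesses $\eval{\E}{\val}\in\Ind{\Restricted{\isFJ}{\Ind{\isFJ\cup\coisFJ}}}\subseteq\CoInd{\Restricted{\isFJ}{\Ind{\isFJ\cup\coisFJ}}}=\Generated{\isFJ}{\coisFJ}$.

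For the ``only if'' direction, observe first that, by definition of $\Generated{\isFJ}{\coisFJ}$, a derivation of $\eval{\E}{\val'}$ in $\Pair{\isFJ}{\coisFJ}$ is a possibly infinite proof tree $T$ built with \emph{rules only} (a subsystem of $\isFJ$); the corules enter only through the requirement that each node of $T$ have a finite proof tree in $\isFJ\cup\coisFJ$, which is not needed here. Hence $\eval{\E}{\val'}\in\CoInd{\isFJ}$, and it suffices to show: if $\validInd{\isFJ}{\eval{\E}{\val}}$ and $\eval{\E}{\val'}\in\CoInd{\isFJ}$ then $\val=\val'$. I would prove this by induction on the finite $\FJ$ proof tree $D$ of $\eval{\E}{\val}$, by cases on its last rule. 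Since the rules of $\isFJ$ are syntax-directed by the shape of $\E$, the root of $T$ uses the same rule and its premises concern the same subexpressions as those of $D$. For \refToRule{FJ-field} and \refToRule{FJ-new}, the induction hypothesis on the subexpressions gives equality of all subvalues, hence of the result. For \refToRule{FJ-invk} with $\E=\MethodCall{\E_0}{\m}{\EBar}$, apply the induction hypothesis first to the receiver premise, obtaining $\val_0=\val_0'$ and thus the same class, so $\mbody$ returns the same parameters and body in $D$ and in $T$; then to the argument premises, obtaining $\vBar=\vBar'$; so the two substituted bodies $\Subst{\Subst{\E}{\val_0}{\kwThis}}{\vBar}{\xBar}$ coincide, and the induction hypothesis on the last premise yields $\val=\val'$. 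The same pattern handles the extra rules for conditionals and primitive operations used in the examples, which are likewise syntax-directed and deterministic in their relevant premise.

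I expect the ``only if'' direction to be the delicate point. The abstract semantics is genuinely non-deterministic --- e.g.\ \texttt{allPos} and \texttt{min} on suitable infinite lists admit several derivable results --- so a priori an infinite tree $T$ could assign $\E$ a spurious value. The resolution, made precise by the induction above, is that such spurious derivations can only arise for expressions lacking a finite $\FJ$ proof tree; the hypothesis $\validInd{\isFJ}{\eval{\E}{\val}}$ forces $T$ node by node to mirror $D$ (in particular $T$ must in fact be finite and essentially equal to $D$), pinning down the result. The only routine care needed is to make the case analysis exhaust all rule schemes of $\isFJ$, including those for the primitive constructs appearing in the running examples.
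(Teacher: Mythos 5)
Your proof is correct and follows essentially the same route as the paper's (one-line) argument: both directions by induction on the finite $\isFJ$ derivation, with the $\val=\val'\Rightarrow$ derivability direction resting on the fact that every node of that derivation already has a finite proof tree in $\isFJ\cup\coisFJ$, and the converse on the uniqueness of the applicable meta-rule for each syntactic category together with the observation that a derivation in $\Pair{\isFJ}{\coisFJ}$ uses only $\isFJ$-rules. Your packaging of the first direction via bounded coinduction with $\Spec=\Ind{\isFJ}$ is just a more explicit rendering of the same idea, so no substantive difference.
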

\PB{\begin{proof}
Both \EZ{directions} can be easily proved by induction on the definition of $\validInd{\isFJ}{\BigStepFJ{\E}{\val}}$. 
For \EZ{the left-to-right direction}, the fact that each syntactic category has a unique applicable meta-rule is crucial.
\end{proof}}

This theorem states that, {whichever the codefinitions} chosen, $\coFJ$ does not change the semantics of {expressions} evaluating to some value in $\FJ$.
That is, $\coFJ$ abstract semantics allows derivation of new values only for {expressions} whose semantics is undefined in standard $\FJ${, as in the examples shown above.} 
Note also that, if no codefinition is specified, then the $\coFJ$ abstract semantics \emph{coincides} with the $\FJ$ one,
because corule  \refToRule{abs-co-invk} cannot be applied, hence no infinite proof trees can be built
for the evaluation of $\FJ$ {expressions}.

\section{Operational semantics}\label{sect:operational}
We informally introduce the operational semantics of $\coFJ$, provide its formal definition, and prove that it is deterministic and conservative.

\smallskip
\noindent\textbf{Outline} \PB{In contrast to} the abstract semantics of the previous section, {the aim is to define a semantics which leads to an interpreter for the calculus}. To obtain this, there are two issues to be considered:
\begin{enumerate}
\item infinite (regular) objects should be represented in a finite way;
\item infinite (regular) proof trees should be replaced by finite proof trees.
\end{enumerate}
In the following we explain how these issues are handled in the $\coFJ$ {operational} semantics.

To obtain (1), we use an approach based on \emph{capsules} \cite{JeanninK12}, which are essentially expressions supporting cyclic references. In our context, capsules are pairs $\Caps{\E}{\mapEnv}$ where $\E$ is an $\FJ$ expression and $\mapEnv$ is an \emph{environment}, that is, a finite mapping from variables into $\FJ$ expressions. Moreover, the \EZ{following} \emph{capsule property} is satisfied: \EZ{writing} $\FV{\E}$ \EZ{for} the set of free variables in $\E$, $\FV{\E}\subseteq\dom{\mapEnv}$ and, for all $\x\in\dom{\mapEnv}$, $\FV{\mapEnv(\x)}\subseteq\dom{\mapEnv}$. An $\FJ$ source expression $\E$ is represented by the capsule $\Caps{\E}{\emptyset}$, where $\emptyset$ denotes the empty {environment}. In particular, values are pairs $\Caps{\open}{\mapEnv}$ where $\open$ is an \emph{open} $\FJ$ object, that is, an object possibly containing variables. In this way, cyclic objects can be obtained: for instance, {$\Caps{\x}{\x:\NewExpr{\texttt{NEL}}{2,\NewExpr{\texttt{NEL}}{1,\x}}}$ represents the infinite regular list $[2,1]^\omega$} considered before.

To obtain (2), methods are \emph{regularly corecursive}. This means that execution keeps track of the pending method calls, so that, when a call is encountered the second time, this is detected\footnote{\PB{The semantics detects an already encountered call by relying on capsule equivalence (\refToFigure{notations}).}}, avoiding non-termination as it would happen with ordinary recursion. Regular corecursion in $\coFJ$ is \emph{flexible}, since the behaviour of the method when a cycle is detected is specified by the codefinition. 

Consider, for instance,  the method call {$\texttt{new ListFactory().two\_one()}$; thanks to regular corecursion, the result is the cyclic object
$\Caps{\x}{\x:\NewExpr{\texttt{NEL}}{2,\NewExpr{\texttt{NEL}}{1,\x}}}$}. Indeed, the operational semantics associates \EZ{a fresh variable, say,  $\x$, to the initial call, so that,} when the same call is encountered the second time,  the association $\x:{\x}$ is added in the environment, and the codefinition is evaluated where \texttt{any} is replaced by $\x$. Hence, $\Caps{\x}{\x:\x}$ is returned  as result, so that the result of the original call is {$\Caps{\x}{\x:\NewExpr{\texttt{NEL}}{2,\NewExpr{\texttt{NEL}}{1,\x}}}$}. \EZ{The call \texttt{new ListFactory().from(0)}, instead,} does not terminate in the operational semantics, since no call is encountered more than once (the resulting infinite object is non-regular).

Consider now the call {$\MethodCall{[2,1]^\omega}{\texttt{allPos}}{}$}.
In this case, when the call is encountered the second time, after an intermediate call {$\MethodCall{[1,2]^\omega}{\texttt{allPos}}{}$}, the result of the evaluation of the codefinition is \texttt{true}, so that the result of the original call is \texttt{true} as well.\footnote{To be rigorous, a capsule of shape $\Caps{\true}{\_}$.} If the codefinition were \texttt{any}, then the result would be $\Caps{\x}{\x:\x}$, that is, undetermined.
Note that, if the list is finite, then no regular corecursion is involved, since the same call cannot occur more {than} once; the same holds if the list is cyclic, but contains
a non-positive element, hence the method invocation returns \lstinline{false}.
The only case requiring regular corecursion is when the method is invoked on a cyclic list with 
all positive elements, as {$[2,1]^\omega$}. 

{In the case of $\MethodCall{[2,1]^\omega}{\texttt{min}}{}$, when the call is encountered the second time the result of the evaluation of the codefinition is $2$, so that the result of the intermediate call $\MethodCall{[1,2]^\omega}{\texttt{min}}{}$ is $1$, and this is also the result of the original call.}

\smallskip
\noindent\textbf{Formal definition} To formally express the approach described above, the judgment of the operational semantics has shape $\opsem{\E}{\mapEnv}{\callEnv}{\open}{\mapEnvPrime}$ where: $\Caps{\E}{\mapEnv}$ is the capsule to be evaluated; $\callEnv$ is a \emph{call trace}, used to keep track of already encountered calls, that is, {an injective} map from \emph{calls} $\MethodCall{\open_0}{\m}{\openBar}$ to {(possibly tagged)} variables, and $\Caps{\open}{\mapEnvPrime}$ is the capsule result. {Variables in the codomain of the call trace {have a tag} $\checklabel$ during the checking step for the corresponding call, as detailed below.} The pair $\Caps{\E}{\mapEnv}$ and $\Caps{\open}{\mapEnvPrime}$ are assumed to satisfy the capsule property.  

\PB{The semantic rules} are given in \refToFigure{new-sem}. We denote by $\UpdateEnv{\mapEnv}{\x}{\open}$ the environment which gives $\open$ on $\x$, and is equal to $\mapEnv$ elsewhere{, and analogously for other maps}. Furthermore, we use the following notations, formally defined in \refToFigure{notations}.
\begin{itemize}
\item $\unf(\open,\mapEnv)$ is the \emph{unfolding} of $\open$ in $\mapEnv$, that is, the corresponding object, if any. 
\item $\mapUnion{\mapEnv_1}{\mapEnv_2}$ is the \emph{union of environments}, defined if they agree on the common domain.
\item $\bisim{\Caps{\open}{\mapEnv}}{\Caps{\open'}{\mapEnv'}}$ is the \emph{equivalence of capsules}. {As will be formalized in the first part of \refToSection{soundness}, equivalent capsules denote the same sets of abstract objects. }
This equivalence is extended by congruence to expressions, in particular to calls $\MethodCall{\open_0}{\m}{\openBar}$. 
\item {$\uptobisim{\callEnv}{\mapEnv}$ is obtained by extending $\callEnv$ \emph{up to {equivalence}} in $\mapEnv$. That is, detection of already encountered calls is performed up-to {equivalence} in the current environment. }
\end{itemize}
\begin{figure}
\begin{grammatica}
\produzione{\open\in\coFJcVals}{\NewExpr{\C}{\openBar}\mid x}{open \EZ{object}}\\
\produzione{\mapEnv}{\x_1 :\open_1\ldots\ \x_n : \open_n \Space (n\geq0)}{environment}\\
\produzione{\call}{\MethodCall{\open}{\m}{\openBar}}{call}\\
\produzione{\lbcall}{[\checklabel]}{{optional checking tag}}\\
\produzione{\callEnv}{{\envItem{\call_1}{\lbcall_1}{\x_1},\ldots,\envItem{\call_n}{\lbcall_n}{\x_n}} \Space (n\ge 0)}{call trace}
\end{grammatica}
\\[2ex]

\hrule 

\[\begin{array}{l}
\NamedRule{val}{}{\opsem{\open}{\mapEnv}{\callEnv}{\open}{{\mapEnv}}}{}  
\BigSpace
\NamedRule{field}{
  \opsem{\E}{\mapEnv}{\callEnv}{\open}{\mapEnvPrime}
}{\opsem{\FieldAccess{\E}{\f}}{\mapEnv}{\callEnv}{\open_i}{\mapEnvPrime} }
{\unf(\open,\mapEnvPrime)=\NewExpr{\C}{{\open_1,\ldots,\open_n}}\\
\fields(\C)=\f_1...\f_n\\
\f=\f_i, i\in 1..n}
\\[7ex]
\NamedRule{new}{
  \opsem{\E_i}{\mapEnv}{\callEnv}{\open_i}{\mapEnv'_i}\Space  \forall i \in 1..n
}{ \opsem{\NewExpr{\C}{\E_1,\ldots,\E_n}}{\mapEnv}{\callEnv}{\NewExpr{\C}{{\open_1,\ldots,\open_n}}}{\bigmapUnion{i \in 1..n}{{\mapEnv'_i}}}}
{}
\\[8ex]
\mbox{In all the following rules:}
\begin{array}{l}
\EBar=\E_1,\ldots,\E_n\\
\openBar=\open_1\ldots\open_n\\
\call=\MethodCall{\open_0}{\m}{\openBar}\\
\mapEnvU = \bigmapUnion{i \in 0..n}{{\mapEnv'_i}} \\
\unf(\open_0,\mapEnv'_0)=\NewExpr{\C}{\_}\\
\end{array}
\\[11ex]

\NamedRule{invk-ok}{
  \begin{array}{l}
    \opsem{\E_i}{\mapEnv}{\callEnv}{\open_i}{{\mapEnv'_i}}\Space \forall i \in 0..n
    \\
    \opsem{\Subst{\Subst{\E}{\open_0}{\kwThis}}{\openBar}{\xBar}}{\mapEnvU}{\PB{\UpdateEnvCo{\callEnv}{\call}{\x}}}{\open}{\mapEnvPrime}
  \end{array}
}{ \opsem{\MethodCall{\E_0}{\m}{\EBar}}{\mapEnv}{\callEnv}{\open}{\mapEnvPrime} }
{ {\call\not\in\dom{\uptobisim{\callEnv}{\mapEnvU}}}\\
 \x\ \mbox{fresh}\\ 
\mbody(\C,\m)=\Pair{\xBar}{\E}\\
 \x \not\in\dom{\mapEnvPrime}}
\\[8ex]
\NamedRule{invk-check}{
  \begin{array}{l}
    \opsem{\E_i}{\mapEnv}{\callEnv}{\open_i}{{\mapEnv'_i}}\Space \forall i \in 0..n
    \\
    \opsem{\Subst{\Subst{\E}{\open_0}{\kwThis}}{\openBar}{\xBar}}{\mapEnvU}{\UpdateEnvCo{\callEnv}{{\call}}{\x}}{\open}{\mapEnvPrime}
    \\
    \opsem{ \Subst{\Subst{\E}{\open_0}{\kwThis}}{\openBar}{\xBar}}{\mapEnvU \sqcup\UpdateEnv{\mapEnv'}{\x}{\open}}{\UpdateEnvCk{\callEnv}{{\call}}{\x}}{\open'}{\mapEnv''}
  \end{array}
}{ \opsem{\MethodCall{\E_0}{\m}{\EBar}}{\mapEnv}{\callEnv}{\x}{\UpdateEnv{\mapEnvPrime}{\x}{\open}}}
{{\call\not\in\dom{\uptobisim{\callEnv}{\mapEnvU}}}\\
\x\ \mbox{fresh}\\ 
\mbody(\C,\m)=\Pair{\xBar}{\E}\\
\x \in\dom{\mapEnvPrime}\\
\bisim{\Caps{\x}{\UpdateEnv{\mapEnvPrime}{\x}{\open}}}{\Caps{\open'}{\mapEnv''}}
  }
\\[8ex]
\NamedRule{corec}{
\begin{array}{l}
  \opsem{\E_i}{\mapEnv}{\callEnv}{\open_i}{{\mapEnv'_i}}\Space  \forall i \in 0..n\\
  \opsem{\Subst{\Subst{\Subst{\E'}{\open_0}{\kwThis}}{\openBar}{\xBar}}{x}{\Any}}{\UpdateEnv{\mapEnvU}{\x}{\x}}{\callEnv}{\open}{{\mapEnvPrime}}
\end{array}
}{ \opsem{\MethodCall{\E_0}{\m}{\EBar}}{\mapEnv}{\callEnv}{\open}{ \UpdateEnv{\mapEnvPrime}{\x}{\x} }}
{
\uptobisim{\callEnv}{\mapEnvU}({\call})=\x\\
\combody(\C,\m)=\Pair{\xBar}{\E'}
}
\\[8ex]
\NamedRule{look-up}{
  \opsem{\E_i}{\mapEnv}{\callEnv}{\open_i}{{\mapEnv'_i}}\Space  \forall i \in 0..n
}{ \opsem{\MethodCall{\E_0}{\m}{\EBar}}{\mapEnv}{\callEnv}{\x}{ {\mapEnvU} } }
{
\uptobisim{\callEnv}{\mapEnvU}({\call})={\x^\checklabel}\\
}
\end{array}\]
\caption{$\coFJ$ operational semantics} \label{fig:new-sem}
\end{figure}
\EZComm{cut: When reading the rules, recall that they are expected to preserve the invariant that the result of evaluation $\Caps{\open}{\mapEnv}$ satisfies the capsule property, that is, $\mapEnv$ should be defined on all the variables possibly occurring in $\open$.}

Rule \refToRule{val} is needed for objects which are not $\FJ$ objects. Rule \refToRule{field} is similar to that of $\FJ$ except that the capsule $\Caps{\open}{\mapEnvPrime}$ must be unfolded to retrieve the corresponding object. Furthermore, the resulting environment is that obtained by evaluating the receiver. 
Rule \refToRule{new} is analogous to that of $\FJ$. The resulting environment is the union of those obtained by evaluating the arguments.

There are four rules for method invocation. In all of them, as in the $\FJ$ rule, the receiver and argument expressions are evaluated first to obtain the call
$\call=\MethodCall{\open}{\m}{\openBar}$. The  environment $\mapEnvU$ is the union of those obtained by these evaluations.
Then, the behavior is different depending whether such call (meaning a call equivalent to $\call$ in $\mapEnvU$) has been already encountered.

Rules \refToRule{invk-ok} and \refToRule{invk-check} handle\footnote{The two rules could be merged together, but we prefer to make explicit the difference  for sake of clarity.} a call $\call$ which is encountered the first time, as expressed by the side condition $\call\not\in{\dom{\uptobisim{\callEnv}{\mapEnvU}}}$. 
In both, the definition $\E$, where the receiver  replaces $\kwThis$ and the arguments replace the parameters, is evaluated.
Such evaluation is performed in the call trace $\callEnv$ updated to associate the call $\call$ with an unused variable $\x$ (in these two rules  ``$\x$ fresh'' means that $\x$ does not occur in the derivations of $\opsem{\E_i}{\mapEnv}{\callEnv}{\open_i}{{\mapEnv'_i}}$, for all $i\in 0..n$), and produces the capsule $\Pair{\open}{\mapEnvPrime}$.
Then there are two cases, depending on whether $\x\in\dom{\mapEnvPrime}$ holds. 

If $\x\not\in\dom{\mapEnvPrime}$, then the evaluation of the definition for $\call$ has been performed without evaluating the codefinition. That is, \EZ{the same call has not} been encountered, hence the result has been obtained by standard recursion, and no additional check is needed.

If $\x\in\dom{\mapEnvPrime}$, instead, then the evaluation of the definition for ${\call}$ has required to evaluate the codefinition. In this case, an additional check is required {(third premise)}. That is, $\Subst{\Subst{\E}{\open_0}{\kwThis}}{\openBar}{\xBar}$ is evaluated once more under the assumption that $\open$ is the result of the call. {Formally, evaluation takes  place in an environment updated to associate $\x$ with $\open$, and the variable $\x$ corresponding to the call is tagged with $\checklabel$.} The capsule result obtained in this way must be (equivalent to) that obtained by the first evaluation of the body of the method. 
{In \refToSection{examples} we discuss in detail the role of this additional check, showing an example where it is necessary.}
If the check succeeds, then the final result is the variable $\x$ in the environment updated to associate $\x$ with $\open$.
 Otherwise, rule \refToRule{invk-check} cannot be applied since the last premise does not hold. For simplicity, we assume the result of ${\call}$ to be undefined in this case; an additional rule could be added raising a runtime error in case the result is different from the expected one, as should be done in an implementation.

The remaining rules handle an already encountered call $\call$, that is, $\uptobisim{\callEnv}{\mapEnvU}(\call)$ {is defined. The behaviour is different depending on whether the corresponding variable $\x$ is tagged or not.}

{If $\x$ is not tagged,} then rule \refToRule{corec} evaluates the codefinition where the receiver object replaces $\kwThis$, the arguments replace the parameters, and, furthermore, the variable $\x$ found in the call trace replaces $\Any$. {In addition, $\mapEnvU$ is updated to
  associate  $\x$ with $\x$. In this way, the semantics keeps track of the application of rule \refToRule{corec}.} 
 
If {$\x$ is tagged, instead, then we are in a checking step for the corresponding call.} In this case, rule \refToRule{look-up} simply returns the associated variable for a call; by definition of the operational semantics, in this case such a variable is always {defined} in the environment.

{\refToFigure{notations} contains the formal definitions of the notations used in the rules.}

\begin{figure}
\begin{small}
\[\begin{array}{l}
\unf(\open,\mapEnv) =
\begin{cases}
\NewExpr{\C}{\openBar} &$ if $\open=\NewExpr{\C}{\openBar}\\
\unf(\mapEnv(\open),\mapEnv) &$ if $\open=\x\\
\end{cases}
\\ 
\Undetermined{\mapEnv}=\{\x\in\dom{\mapEnv} \mid \undef{\unf(\x,\mapEnv)} \}
\end{array}\]
\hrule
\[\begin{array}{l}
\text{For $\mapEnv_1$ and $\mapEnv_2$ such that $\mapEnv_1(\x) = \mapEnv_2(\x)$ for all $\x \in \dom{\mapEnv_1}\cap\dom{\mapEnv_2}$} \\[2ex]
(\mapUnion{\mapEnv_1}{\mapEnv_2})(\x) = 
\begin{cases}
\mapEnv_1(\x) & \x\in\dom{\mapEnv_1} \\
\mapEnv_2(\x) & \x\in\dom{\mapEnv_2} 
\end{cases} 
\end{array}\]
\hrule 
{Set $\varEquiv{\mapEnv}{}{}$ the least equivalence relation on $\Undetermined{\mapEnv}$ such that $\varEquiv{\mapEnv}{\x}{y}$ if 
 $\mapEnv( x) =  y$, {$\eqclass{\x}$ the equivalence class of $\x$,} and $\Quotient{\mapEnv}$ the quotient. 
A relation $\varRel \subseteq \Undetermined{\mapEnv_1} \times \Undetermined{\mapEnv_2}$ is a \emph{$\mapEnv_1,\mapEnv_2$-renaming}  if it induces a (partial) bijection from $\Quotient{\mapEnv_1}$, {still denoted $\varRel$,} to $\Quotient{\mapEnv_2}$.
Given $\varRel$ a $\mapEnv_1,\mapEnv_2$-renaming, the relation $\bisim[\varRel]{\Caps{\x}{\mapEnv_1}}{\Caps{\x'}{\mapEnv_2}}$ is coinductively defined by:}
\[\begin{array}{l}
\RuleNoName{}{\bisim[\varRel]{\Caps{\x}{\mapEnv}}{\Caps{\x'}{\mapEnv'}}}{
x\varRel x'
}  
\BigSpace
\RuleNoName{\bisim[\varRel]{\Caps{\open_i}{\mapEnv}}{\Caps{\open'_i}{\mapEnv'}}\quad\forall i \in 1..n}{\bisim[\varRel]{\Caps{\open}{\mapEnv}}{\Caps{\open'}{\mapEnv'}}}{\unf(\open,\mapEnv)=\NewExpr{\C}{\open_1,..,\open_n}\\ \unf(\open',\mapEnv')=\NewExpr{\C}{\open'_1,..,\open'_n}}
\end{array}\]

{A $\mapEnv_1,\mapEnv_2$-renaming $\varRel$ is \emph{strict} if,
for $\x,y\in\Undetermined{\mapEnv_1}\cap\Undetermined{\mapEnv_2}$, 
$\eqclass{\x}\varRel \eqclass{y}$ iff $\varEquiv{\mapEnv_1}{\x}{y}$ and $\varEquiv{\mapEnv_2}{\x}{y}$.}\\
We write $\bisim{\Caps{\open}{\mapEnv}}{\Caps{\open'}{\mapEnv'}}$ if $\bisim[\varRel]{\Caps{\open}{\mapEnv}}{\Caps{\open'}{\mapEnv'}}$ for some strict $\varRel$. 
\FDComm{dobbiamo dire qualcosa per il caso $\mapEnv_1 = \mapEnv_2$?}  

\hrule 

\[ \uptobisim{\callEnv}{\mapEnv}(\call')=\callEnv(\call)$ for each $\call'$ such that $\bisim{\Caps{\call'}{\mapEnv}}{\Caps{\call}{\mapEnv}} \]
\caption{$\coFJ$ auxiliary definitions} \label{fig:notations}
\end{small}
\end{figure}

Note that $\unf$, being inductively defined, can be undefined, denoted $\uparrow$, in presence of unguarded cycles among variables. 
Capsule equivalence, instead, is defined coinductively, so that, e.g., $\Caps{\x}{\x:\NewExpr{\C}{\x}}$ is equivalent to $\Caps{\x}{\x:\NewExpr{\C}{\NewExpr{\C}{\x}}}$. Capsule equivalence implicitly subsumes $\alpha$-equivalence of variables whose unfolding is defined, e.g.,  $\Caps{\x}{\x:\NewExpr{\C}{\x}}$ is equivalent to $\Caps{y}{y:\NewExpr{\C}{y}}$. Instead, $\alpha$-equivalence of undetermined variables is given by an explicit renaming, which should preserve disjointness of cycles. For instance,  $\Caps{\NewExpr{\C}{\x,y}}{(\x:y,y:\x)}$ is equivalent to $\Caps{\NewExpr{\C}{\x,\x}}{\x:\x}$, but is \emph{not} equivalent to $\Caps{\NewExpr{\C}{\x,y}}{(\x:\x,y:y)}$. Indeed, in the latter case $\x$ and $y$ can be instantiated independently. {We will prove in \refToSection{soundness} (\refToTheorem{eq-sem-caps}) that the relation $\bisim[\varRel]{}{}$, for some $\mapEnv_1,\mapEnv_2$-renaming $\varRel$, is the operational counterpart of the fact that two capsules denote the same set of abstract values. The stronger strictness condition prevents erroneous identification of objects during evaluation, e.g., $\Caps{\NewExpr{\C}{\x,y}}{(\x:\x,y:y)}$ is not equivalent to $\Caps{\NewExpr{\C}{y,\x}}{(y:y,x:x)}$.}

\smallskip
\noindent\textbf{Determinism and conservativity} 
\PB{In contrast to $\coFJ$ abstract semantics, but like $\FJ$,} $\coFJ$ operational semantics is deterministic. 

\begin{theorem}[Determinism] \label{theo:cofj-determinism}
If $\opsem{\E}{\mapEnv}{\callEnv_1}{\open_1}{\mapEnv_1}$ and $\opsem{\E}{\mapEnv}{\callEnv_2}{\open_2}{\mapEnv_2}$ hold and $\dom{\callEnv_1} = \dom{\callEnv_2}$, then 
$\Pair{\open_1}{\mapEnv_1}$ and $\Pair{\open_2}{\mapEnv_2}$ are equal up-to $\alpha$-equivalence. 
\end{theorem}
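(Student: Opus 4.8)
The plan is to prove the statement by induction on the derivation of $\opsem{\E}{\mapEnv}{\callEnv_1}{\open_1}{\mapEnv_1}$, with a case analysis on the last applied rule, showing in each case that the second derivation must end with the same rule, so that the conclusion follows from the induction hypothesis applied to the sub-derivations. Since evaluation introduces fresh variables that are recorded in the call trace and in the environments, two derivations of the ``same'' judgment will in general not use literally the same variables; hence I would first strengthen the statement to say that the operational semantics is \emph{deterministic and equivariant up to a variable renaming}. Concretely: whenever $\varphi$ is a bijection on variables such that $\E' = \varphi(\E)$, $\mapEnv' = \varphi(\mapEnv)$, $\callEnv_2$ is the image of $\callEnv_1$ under $\varphi$ (as a renaming of both the domain calls and the codomain variables, preserving the $\checklabel$ tags), and the two call traces have corresponding domains, then $\opsem{\E}{\mapEnv}{\callEnv_1}{\open_1}{\mapEnv_1}$ and $\opsem{\E'}{\mapEnv'}{\callEnv_2}{\open_2}{\mapEnv_2}$ imply that $\Pair{\open_1}{\mapEnv_1}$ and $\Pair{\open_2}{\mapEnv_2}$ coincide up to $\alpha$-equivalence of capsules. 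The theorem is then the instance where $\E = \E'$, $\mapEnv = \mapEnv'$, and $\varphi$ is the identity on the variables occurring in $\E$, $\mapEnv$ and the (common) domain of the call traces.

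For rules \refToRule{val}, \refToRule{field} and \refToRule{new} the argument is direct: each syntactic form of the evaluated expression admits exactly one of them, their side conditions (the shape of $\unf$, the field lookup via $\fields(\C)$) are functional, and the result follows from the induction hypothesis applied to the sub-evaluations, composing the renamings returned for the different sub-derivations (for \refToRule{new} one also checks that the renaming carries $\bigmapUnion{i}{\mapEnv'_i}$ on one side to the analogous union on the other). For a method call $\MethodCall{\E_0}{\m}{\EBar}$, applying the induction hypothesis to the $n{+}1$ sub-evaluations of receiver and arguments yields, in the two derivations, a call $\call$ and an environment $\mapEnvU$ that correspond under an extension of $\varphi$. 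Since $\uptobisim{\callEnv}{\mapEnvU}$ is determined by $\callEnv$, $\mapEnvU$ and capsule equivalence --- all stable under renaming --- the status of $\call$ is the same on both sides: either $\call$ is not in $\dom{\uptobisim{\callEnv}{\mapEnvU}}$, or it is mapped to an untagged variable, or to a tagged one. These three cases select, respectively, the pair of rules \refToRule{invk-ok}/\refToRule{invk-check}, the rule \refToRule{corec}, and the rule \refToRule{look-up}, which are pairwise exclusive, so the same rule (or the same pair) applies in both derivations.

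In the \refToRule{look-up} case the conclusion is immediate, since the returned variable is $\varphi$-related by construction; in the \refToRule{corec} case it follows from the induction hypothesis applied to the evaluation of the codefinition, with $\Any$ replaced by the $\varphi$-related variables and $\mapEnvU$ extended by the corresponding reflexive binding. In the \refToRule{invk-ok}/\refToRule{invk-check} case one picks corresponding fresh variables $\x$ and $\x'$, extends the renaming by $\x \mapsto \x'$, and applies the induction hypothesis to the evaluation of the method body; this also determines whether $\x$ (resp.\ $\x'$) occurs in the domain of the resulting environment, hence whether \refToRule{invk-ok} or \refToRule{invk-check} is used, consistently in both derivations. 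In the \refToRule{invk-check} subcase the extra ``check'' premise re-evaluates the body under the assumed result, but its outcome is fixed up to capsule equivalence by the side condition $\bisim{\Caps{\x}{\UpdateEnv{\mapEnvPrime}{\x}{\open}}}{\Caps{\open'}{\mapEnv''}}$, so it does not affect the final capsule $\Pair{\x}{\UpdateEnv{\mapEnvPrime}{\x}{\open}}$.

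I expect the main obstacle to be exactly this variable bookkeeping: formulating the renaming-indexed induction hypothesis so that it is genuinely preserved, and in particular verifying that the choice among the four method-invocation rules is invariant under the renaming --- which reduces to renaming-stability of $\uptobisim{\callEnv}{\mapEnvU}$ and of the test $\x \in \dom{\mapEnvPrime}$ --- and that the renamings are threaded correctly through the environment unions and the nested evaluations of \refToRule{invk-ok}, \refToRule{invk-check} and \refToRule{corec}. Once this invariant is set up, the rest --- determinism of the \FJ-like sub-evaluations, functionality of $\unf$, $\mbody$, $\combody$, $\fields$, and of the capsule-equivalence side conditions --- is routine.
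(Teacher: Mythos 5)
Your proposal follows essentially the same route as the paper: induction on the derivation, with the key observation that, once $\E$, $\mapEnv$ and $\dom{\callEnv_1}$ are fixed, a unique rule (or, for method invocation, a uniquely determined case among the four rules) applies, so both derivations proceed in lockstep. The paper's own proof is only a two-sentence sketch, and your strengthening to an equivariance statement indexed by a variable renaming is precisely the bookkeeping needed to make the ``up-to $\alpha$-equivalence'' conclusion go through in the inductive step, so your version is a correct and more careful elaboration of the same argument.
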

\begin{proof}
The proof is by induction on the derivation for $\opsem{\E}{\mapEnv}{\callEnv_1}{\open_1}{\mapEnv_1}$. 
The key point is that, once fixed $\E$, $\mapEnv$ and $\dom{\callEnv_1}$, there is a unique applicable rule, hence both $\opsem{\E}{\mapEnv}{\callEnv_1}{\open_1}{\mapEnv_1}$ and $\opsem{\E}{\mapEnv}{\callEnv_2}{\open_2}{\mapEnv_2}$ are derived by the same rule. 
\end{proof}

As the abstract one, the operational semantics is a conservative extension of the standard $\FJ$ semantics. {This result follows from soundness with respect to the abstract semantics in next section, however the direct proof below provides some useful insight.}

\begin{theorem}[Conservativity] \label{theo:op-conservative}
If $\validInd{\isFJ}{\BigStepFJ{\E}{\val}}$, then \PB{$\opsem{\E}{\emptyset}{\emptyset}{\EZ{\open}}{\mapEnv}$ holds \EZ{iff} $\EZ{\open} = \val$ and $\mapEnv = \emptyset$.}
\end{theorem}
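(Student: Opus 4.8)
The plan is to derive the ``iff'' from a single existence fact together with \refToTheorem{cofj-determinism}. Once we know that $\opsem{\E}{\emptyset}{\emptyset}{\val}{\emptyset}$ holds, \refToTheorem{cofj-determinism} (instantiated with $\callEnv_1=\callEnv_2=\emptyset$) tells us that any $\Pair{\open}{\mapEnv}$ with $\opsem{\E}{\emptyset}{\emptyset}{\open}{\mapEnv}$ equals $\Pair{\val}{\emptyset}$ up to $\alpha$-equivalence; since $\val$ is a finite $\FJ$ object containing no variables, this forces $\open=\val$ and $\mapEnv=\emptyset$. (Alternatively, the induction below can be carried out as an ``iff'', re-proving conservativity without appealing to \refToTheorem{cofj-determinism}, by also checking at each step that the mirrored operational rule is the only applicable one.) So the real content is: whenever $\validInd{\isFJ}{\BigStepFJ{\E}{\val}}$, then $\opsem{\E}{\emptyset}{\emptyset}{\val}{\emptyset}$.

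I would prove this through a strengthened claim, by induction on the (necessarily finite) derivation $D$ of $\BigStepFJ{\E}{\val}$ in $\isFJ$: let $\mathit{calls}(D)$ be the set of calls $\MethodCall{\val_0}{\m}{\vBar}$ occurring in $D$ (i.e.\ such that $D$ applies the $\FJ$ rule for method invocation with receiver and arguments evaluating to $\val_0$ and $\vBar$); then for \emph{every} call trace $\callEnv$ with $\dom{\callEnv}\cap\mathit{calls}(D)=\emptyset$ we have $\opsem{\E}{\emptyset}{\callEnv}{\val}{\emptyset}$. The cases \refToRule{val}, \refToRule{new} and \refToRule{field} are routine: the sub-derivations have call-sets contained in $\mathit{calls}(D)$, so the induction hypothesis applies with the same trace $\callEnv$, all produced environments are $\emptyset$ (so the environment unions in \refToRule{new} and $\unf(\cdot,\emptyset)$ in \refToRule{field} behave exactly as in $\FJ$), and the corresponding operational rule fires with result $\Pair{\val}{\emptyset}$.

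The crucial case is a method call $\MethodCall{\E_0}{\m}{\EBar}$. By the induction hypothesis on the sub-derivations for $\E_0$ and the $\E_i$'s we get $\opsem{\E_0}{\emptyset}{\callEnv}{\val_0}{\emptyset}$ and $\opsem{\E_i}{\emptyset}{\callEnv}{\val_i}{\emptyset}$, so $\mapEnvU=\emptyset$ and the call is $\call=\MethodCall{\val_0}{\m}{\vBar}\in\mathit{calls}(D)$. Since all objects are finite and the environment empty, capsule equivalence in $\mapEnvU$ is plain equality, hence $\dom{\uptobisim{\callEnv}{\mapEnvU}}=\dom{\callEnv}$, which does not contain $\call$; so \refToRule{corec} and \refToRule{look-up} are excluded and we must use \refToRule{invk-ok} or \refToRule{invk-check}. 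We then evaluate the body $\Subst{\Subst{\E}{\val_0}{\kwThis}}{\vBar}{\xBar}$ — exactly the expression at the third premise of the $\FJ$ invocation node, with sub-derivation $D_b$ — under the trace $\UpdateEnvCo{\callEnv}{\call}{\x}$ for a fresh $\x$ (freshness is unproblematic, the only variables in sight coming from the codomain of $\callEnv$). To invoke the induction hypothesis we need $\bigl(\dom{\callEnv}\cup\{\call\}\bigr)\cap\mathit{calls}(D_b)=\emptyset$: the part about $\dom{\callEnv}$ follows from $\mathit{calls}(D_b)\subseteq\mathit{calls}(D)$, and $\call\notin\mathit{calls}(D_b)$ is the key point. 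It holds because $\FJ$ derivations are unique — the applicable rule and its premises being determined by the (closed) expression being evaluated, analogously to \refToTheorem{cofj-determinism} — so no judgment can occur as a proper descendant of an equal judgment in a finite derivation; were $\call\in\mathit{calls}(D_b)$, the body judgment of $\call$, which is the root of $D_b$, would reappear strictly inside $D_b$, a contradiction. The induction hypothesis then gives $\opsem{\Subst{\Subst{\E}{\val_0}{\kwThis}}{\vBar}{\xBar}}{\emptyset}{\UpdateEnvCo{\callEnv}{\call}{\x}}{\val}{\emptyset}$; since the resulting environment is $\emptyset$ we have $\x\notin\dom{\mapEnvPrime}$, so rule \refToRule{invk-ok} (not \refToRule{invk-check}) applies and concludes $\opsem{\MethodCall{\E_0}{\m}{\EBar}}{\emptyset}{\callEnv}{\val}{\emptyset}$. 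Instantiating the strengthened claim with $\callEnv=\emptyset$ yields the desired fact.

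The main obstacle is pinpointing the right invariant on the call trace — that the call about to be evaluated is never already recorded — and justifying it, which ultimately rests on the observation that a finite $\FJ$ derivation cannot perform the same call twice along a branch, and hence on uniqueness of $\FJ$ derivations. The remaining ingredients (environments staying empty, capsule equivalence degenerating to syntactic equality on finite objects, freshness bookkeeping, and exclusion of the three other invocation rules) are straightforward.
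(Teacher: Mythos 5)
Your proposal is correct and follows essentially the same route as the paper: the existence of $\opsem{\E}{\emptyset}{\emptyset}{\val}{\emptyset}$ is established by the same strengthened induction over call traces disjoint from the calls occurring in the (unique) $\FJ$ derivation, with the crucial non-repetition of a call inside its own body derivation justified by uniqueness of $\FJ$ proof trees (the paper's Lemmas~\ref{lemma:strongdet} and~\ref{lemma:disjoint}), and the ``iff'' is then obtained exactly as in the paper by combining this with \refToTheorem{cofj-determinism}.
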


For the proof, we need some auxiliary lemmas and definitions.
First, we note that $\FJ$ has the \emph{strong determinism} property: each expression has at most one finite proof tree in $\isFJ$. 
\begin{lemma}[$\FJ$ strong determinism]\label{lemma:strongdet}
If $\validInd{\isFJ}{\BigStepFJ{\E}{\val_1}}$ by a proof tree $t_1$ and $\validInd{\isFJ}{\BigStepFJ{\E}{\val_2}}$ by a proof tree $t_2$, then $t_1 = t_2$ and $\val_1=\val_2$. 
\end{lemma}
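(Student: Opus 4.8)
The plan is to argue by structural induction on the proof tree $t_1$ (equivalently, on the derivation of $\validInd{\isFJ}{\BigStepFJ{\E}{\val_1}}$), exploiting the same syntax-directedness already used in the proof of \refToTheorem{abs-conservative}: in $\isFJ$ the outermost form of $\E$ uniquely determines which meta-rule can have a judgment $\BigStepFJ{\E}{\_}$ as its consequence. Concretely, a field access $\FieldAccess{\E_0}{\f}$ is derivable only by \refToRule{FJ-field}, an object creation $\NewExpr{\C}{\EBar}$ only by \refToRule{FJ-new}, a method call $\MethodCall{\E_0}{\m}{\EBar}$ only by \refToRule{FJ-invk}, and a bare variable has no applicable rule (so the statement holds vacuously in that case). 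Hence $t_1$ and $t_2$ necessarily end with the same rule, and it remains to match their premises and conclude via the induction hypothesis, using that the auxiliary functions $\fields$ and $\mbody$ are single-valued.

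For $\E = \FieldAccess{\E_0}{\f}$, both trees end with \refToRule{FJ-field}, whose single premise is $\BigStepFJ{\E_0}{\val}$ with $\val = \NewExpr{\C}{\val_1,\ldots,\val_n}$. The induction hypothesis applied to the immediate subtree gives that these two subtrees are equal and that the receiver value $\val$ is the same in both; since $\fields(\C)$ is fixed by $\C$ and the index $i$ with $\f = \f_i$ is then determined, the results $\val_i$ coincide, and $t_1 = t_2$ follows because subtrees and side-condition data agree. For $\E = \NewExpr{\C}{\E_1\ldots\E_n}$, both trees end with \refToRule{FJ-new} with premises $\BigStepFJ{\E_j}{\val_j}$, $j\in 1..n$; applying the induction hypothesis componentwise yields equal subtrees and equal $\val_j$, hence equal results and equal trees. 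For $\E = \MethodCall{\E_0}{\m}{\EBar}$, both end with \refToRule{FJ-invk}; applying the induction hypothesis first to the premises $\BigStepFJ{\E_0}{\val_0}$ and $\BigStepFJ{\E_j}{\val_j}$ ($j\in 1..n$) shows that these subtrees agree and that $\val_0$ and the $\vBar$ are the same in both derivations. Since $\mbody(\C,\m) = \Pair{\xBar}{\E}$ is functional, the substituted body $\Subst{\Subst{\E}{\val_0}{\kwThis}}{\vBar}{\xBar}$ is literally the same expression on both sides, so the induction hypothesis also applies to the last premise, giving equal subtrees and $\val_1 = \val_2$; thus $t_1 = t_2$.

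I expect the only delicate point to be the method-invocation case, where the third premise of \refToRule{FJ-invk} depends on the values $\val_0,\vBar$ produced by the first $n+1$ premises: one must apply the induction hypothesis to those premises \emph{before} invoking it on the last one, so that the substituted body is pinned down first; there is no real circularity, just an ordering constraint on how the induction hypothesis is used. Everything else reduces to matching rule shapes and to single-valuedness of $\fields$ and $\mbody$.
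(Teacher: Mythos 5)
Your proof is correct and follows essentially the same route as the paper, which argues by induction on the derivation and notes that each judgement is the consequence of exactly one rule; your write-up simply spells out the case analysis and the ordering of induction-hypothesis applications (receiver and arguments before the substituted body) that the paper leaves implicit.
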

\begin{proof}
By induction on the definition of $\BigStepFJ{\E}{\val_1}$.
The key point is that each judgement is the consequence of exactly one rule.
\end{proof}

By relying on strong determinism, it is easy to see that in $\FJ$ a proof tree for an expression cannot contain another node labelled by the same expression. 
In other words, if the evaluation of $\E$ requires to evaluate $\E$ again, then the $\FJ$ semantics is undefined on $\E$, as expected.

\begin{lemma}\label{lemma:disjoint}
A proof tree in $\isFJ$ for $\BigStepFJ{\E}{\val}$ cannot contain any other  node  $\BigStepFJ{\E}{\val'}$, for any $\val'$.
\end{lemma}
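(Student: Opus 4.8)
The plan is to derive a contradiction from \refToLemma{strongdet}. Suppose, towards a contradiction, that $t$ is a (finite) proof tree in $\isFJ$ with root labelled $\BigStepFJ{\E}{\val}$, and that $t$ contains a non-root node $n$ labelled $\BigStepFJ{\E}{\val'}$ for some $\val'$. Since, for every node of a proof tree, the subtree hanging from it is again a proof tree for the judgment labelling that node, the subtree $t'$ of $t$ rooted at $n$ is itself a proof tree in $\isFJ$ for $\BigStepFJ{\E}{\val'}$.

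Now I would apply \refToLemma{strongdet} to $t$ and $t'$, which are both finite proof trees in $\isFJ$ for the same expression $\E$ (with results $\val$ and $\val'$, respectively). Strong determinism yields $t = t'$ (and, incidentally, $\val = \val'$). But $n$ is a non-root node of $t$, so $t'$ is a \emph{proper} subtree of the finite tree $t$: it omits at least the root of $t$, hence has strictly fewer nodes than $t$. This contradicts $t = t'$, and the lemma follows.

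The only point requiring any care is the last observation, namely that a finite proof tree cannot coincide with a proper subtree of itself; this is immediate by counting nodes, using that proof trees in $\isFJ$ are finite (equivalently, by well-foundedness of the descendant relation restricted to $t$). Everything else is a direct instantiation of strong determinism, so I do not expect any real obstacle here; the lemma is essentially a corollary of \refToLemma{strongdet}.
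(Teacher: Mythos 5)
Your proof is correct and follows essentially the same route as the paper's: both apply strong determinism (\refToLemma{strongdet}) to the whole tree and to the subtree rooted at the repeated node, and both conclude from finiteness that a proof tree cannot equal a proper subtree of itself. The only difference is presentational — you phrase it as an explicit contradiction and spell out the node-counting step that the paper leaves implicit.
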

\begin{proof}
By \refToLemma{strongdet}, there is a unique proof tree $t$ for the expression $\E$. 
Hence, a node $\BigStepFJ{\E}{\val'}$ in $t$  would be necessarily the root of a subtree of $t$ equal to $t$, that is, it is the root of $t$. 
\end{proof}


\begin{definition}
{Let $\validInd{\isFJ}{\BigStepFJ{\E}{\val}}$.
A call trace $\callEnv$ is \emph{disjoint} from $\BigStepFJ{\E}{\val}$  if in its proof tree\footnote{Unique thanks to \refToLemma{strongdet}.}} there are 
no instances of \refToRule{$\FJ$-invk} where $\MethodCall{\val_0}{\m}{\vBar}\in\dom{\callEnv}$.
\FDComm{si capisce chi \` e $\MethodCall{\val_0}{\m}{\vBar}$?} \EZComm{penso di s\`i, in ogni caso non saprei come migliorare}
\end{definition}

\begin{lemma}  \label{lemma:disjoint-trace}
If $\validInd{\isFJ}{\BigStepFJ{\E}{\val}}$, then, for all $\callEnv$ disjoint from $\BigStepFJ{\E}{\val}$, we have $\opsem{\E}{\emptyset}{\callEnv}{\val}{\emptyset}$.
\end{lemma}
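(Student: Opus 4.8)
The plan is to prove the statement by induction on the derivation of $\validInd{\isFJ}{\BigStepFJ{\E}{\val}}$ — equivalently, by \refToLemma{strongdet}, on its unique proof tree — with a case analysis on the last applied rule. The guiding intuition is that, starting from empty environments, each operational rule faithfully simulates the corresponding $\FJ$ rule, always returning the empty environment, \emph{provided} every method invocation is treated as a first-time call, i.e.\ rule \refToRule{invk-ok} fires (never \refToRule{corec} nor \refToRule{look-up}); this is exactly what the disjointness hypothesis, combined with \refToLemma{disjoint}, will guarantee.

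Cases \refToRule{$\FJ$-field} and \refToRule{$\FJ$-new} are routine: the immediate $\FJ$-subderivations have proof trees that are subtrees of the one for $\BigStepFJ{\E}{\val}$, so $\callEnv$ is still disjoint from them and the induction hypothesis supplies the operational premises, all with empty result environment. Rule \refToRule{field} then applies since the evaluated receiver is a finite $\FJ$ object, hence $\unf(\val,\emptyset)=\val=\NewExpr{\C}{\ldots}$; rule \refToRule{new} applies since the union of empty environments is empty.

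The interesting case is \refToRule{$\FJ$-invk}, with $\E=\MethodCall{\E_0}{\m}{\EBar}$, premises $\BigStepFJ{\E_0}{\val_0}$, $\BigStepFJ{\EBar}{\vBar}$, $\BigStepFJ{\E^\bullet}{\val}$, where $\val_0=\NewExpr{\C}{\_}$, $\mbody(\C,\m)=\Pair{\xBar}{\E_{\mathrm b}}$ and $\E^\bullet=\Subst{\Subst{\E_{\mathrm b}}{\val_0}{\kwThis}}{\vBar}{\xBar}$; set $\call=\MethodCall{\val_0}{\m}{\vBar}$. By induction (disjointness is inherited by the argument subderivations) we get $\opsem{\E_i}{\emptyset}{\callEnv}{\val_i}{\emptyset}$ for $i\in 0..n$, so in rule \refToRule{invk-ok} the union of argument environments is $\mapEnvU=\emptyset$. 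Two facts remain. \emph{(a)} The side condition $\call\notin\dom{\uptobisim{\callEnv}{\emptyset}}$ holds: since $\call$ is a closed \emph{finite} $\FJ$ call, an easy analysis of capsule equivalence over the empty environment shows that $\bisim{\Caps{\call}{\emptyset}}{\Caps{\call''}{\emptyset}}$ forces $\call''=\call$ (the only renaming of $\Undetermined{\emptyset}=\emptyset$ is the empty one, so each component of $\call''$ must unfold to the same constructor as the corresponding component of $\call$, with componentwise equal subterms, all the way down); hence $\call\in\dom{\uptobisim{\callEnv}{\emptyset}}$ iff $\call\in\dom{\callEnv}$, and the latter fails because $\callEnv$ is disjoint from $\BigStepFJ{\E}{\val}$, whose proof tree has the instance of \refToRule{$\FJ$-invk} with runtime call $\call$ at its root. \emph{(b)} The extended trace $\UpdateEnvCo{\callEnv}{\call}{\x}$ (for $\x$ fresh, which exists as everything in sight is finite) is disjoint from $\BigStepFJ{\E^\bullet}{\val}$: disjointness from the $\callEnv$ part is inherited, and no instance of \refToRule{$\FJ$-invk} in the proof tree of $\E^\bullet$ can have runtime call $\call$, for otherwise its third premise would again be a judgment $\BigStepFJ{\E^\bullet}{\val'}$, whose proof tree by \refToLemma{strongdet} coincides with that of $\BigStepFJ{\E^\bullet}{\val}$, so the proof tree of $\E^\bullet$ would contain two nodes labelled $\BigStepFJ{\E^\bullet}{\_}$, contradicting \refToLemma{disjoint}. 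Hence induction gives $\opsem{\E^\bullet}{\emptyset}{\UpdateEnvCo{\callEnv}{\call}{\x}}{\val}{\emptyset}$, the side condition $\x\notin\dom{\emptyset}$ is trivial, and \refToRule{invk-ok} yields $\opsem{\MethodCall{\E_0}{\m}{\EBar}}{\emptyset}{\callEnv}{\val}{\emptyset}$, as required.

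I expect the invocation case to be the only real obstacle, concentrated in points \emph{(a)} and \emph{(b)}: \emph{(a)} needs the (easy, but must-be-spelled-out) fact that capsule equivalence over the empty environment collapses to syntactic equality on finite closed objects, while \emph{(b)} is where \refToLemma{strongdet} and \refToLemma{disjoint} do the essential work of propagating disjointness when the call trace is extended with $\call$. Everything else — freshness of $\x$, the bookkeeping that all result environments remain empty, and matching the shared side conditions $\unf(\val_0,\emptyset)=\NewExpr{\C}{\_}$ and $\mbody(\C,\m)=\Pair{\xBar}{\E_{\mathrm b}}$ — is immediate.
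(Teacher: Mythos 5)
Your proof is correct and follows essentially the same route as the paper's: induction on the $\FJ$ derivation, with the invocation case resolved by rule \refToRule{invk-ok} after showing, via \refToLemma{strongdet} and \refToLemma{disjoint}, that the extended call trace remains disjoint from the body judgment. Your point \emph{(a)}, observing that $\uptobisim{\callEnv}{\emptyset}$ collapses to $\callEnv$ on closed finite calls, is a detail the paper's proof leaves implicit, so it is a welcome addition rather than a divergence.
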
  
\begin{proof}
The proof is by induction on the definition of $\BigStepFJ{\E}{\val}$. 
\begin{description}
\item[\refToRule{\FJ-field}] 
Let $\callEnv$ be a call trace disjoint from $\BigStepFJ{\FieldAccess{\E}{\f}}{\val_i}$. 
Since $\validInd{\isFJ}{\BigStepFJ{\E}{\val}}$, with $\val = \NewExpr{\C}{\val_1,\ldots,\val_n}$, holds by hypothesis, and $\callEnv$ is, by definition,  also disjoint from $\BigStepFJ{\E}{\val}$, we get $\opsem{\E}{\emptyset}{\callEnv}{\val}{\emptyset}$ by induction hypothesis. 
Then, since $\unf(\val, \emptyset) = \val$, we get $\opsem{\FieldAccess{\E}{\f}}{\emptyset}{\callEnv}{\val_i}{\emptyset}$ by rule \refToRule{field}.
 
\item[\refToRule{$\FJ$-new}] 
Let $\callEnv$ be a call trace disjoint from $\BigStepFJ{\NewExpr{\C}{\E_1,\ldots,\E_n}}{\NewExpr{\C}{\val_1,\ldots,\val_n}}$.
For all $i \in 1..n$, since $\validInd{\isFJ}{\BigStepFJ{\E_1}{\val_i}}$ holds by hypothesis, and $\callEnv$ is, by definition, also disjoint from $\BigStepFJ{\E_i}{\val_i}$, we get $\opsem{\E_{{i}}}{\emptyset}{\callEnv}{\val_i}{\emptyset}$ by induction hypothesis. 
Then, we get $\opsem{\NewExpr{\C}{\E_1,\ldots,\E_n}}{\emptyset}{\callEnv}{\NewExpr{\C}{\val_1,\ldots,\val_n}}{\emptyset}$ by rule \refToRule{new}. 

\item[\refToRule{$\FJ$-invk}] 
Let $\callEnv$ be a call trace disjoint from $\BigStepFJ{\MethodCall{\E_0}{\m}{\E_1,\ldots,\E_n}}{\val}$.  
For all $i\in 0..n$, since $\validInd{\isFJ}{\BigStepFJ{\E_i}{\val_i}}$ holds by hypothesis, and $\callEnv$ is, by definition, also disjoint from $\BigStepFJ{\E_{{i}}}{\val_i}$, we get $\opsem{\E_i}{\emptyset}{\callEnv}{\val_i}{\emptyset}$ by induction hypothesis. 
Set $\vBar = \val_1\ldots\val_n$ and $\E' = \Subst{\Subst{\E}{\val_0}{\kwThis}}{\vBar}{\xBar}$. 
By hypothesis, $\validInd{\isFJ}{\BigStepFJ{\E'}{\val}}$ and, by definition, $\callEnv$ is also disjoint from $\BigStepFJ{\E'}{\val}$;
furthermore, by \refToLemma{disjoint}, $\E'$ cannot occur twice in the proof tree for $\BigStepFJ{\E'}{\val}$, hence $\UpdateEnvCo{\callEnv}{\MethodCall{\val_0}{\m}{\vBar}}{\x}$ is disjoint from $\BigStepFJ{\E'}{\val}$, for any fresh variable $\x$. 
Then, by induction hypothesis, we have $\opsem{\E'}{\emptyset}{\UpdateEnvCo{\callEnv}{\MethodCall{\val_0}{\m}{\vBar}}{\x}}{\val}{\emptyset}$, thus we get $\opsem{\MethodCall{\E_0}{\m}{\E_1,\ldots,\E_n}}{\emptyset}{\callEnv}{\val}{\emptyset}$ by rule \refToRule{invk-ok}. 
\end{description}
\end{proof}
 
We can now prove the conservativity result for $\coFJ$ operational semantics. 
\begin{proofOf}{\refToTheorem{op-conservative}} 
\PB{The right-to-left direction follows from \refToLemma{disjoint-trace}, since $\emptyset$ is disjoint from any expression, while the other direction follows from the right-to-left one} and \refToTheorem{cofj-determinism}. 
\end{proofOf} 

For $\coFJ$ operational semantics we can prove an additional result, characterizing derivable judgements which produce an empty environment. {The meaning is that all results obtained \emph{without using the codefinitions} are original $\FJ$ results.}

\begin{lemma}\label{lemma:op-conservative-plus}
If $\opsem{\E}{\emptyset}{\callEnv}{\open}{\emptyset}$ holds, then  \EZ{$\open$ is an $\FJ$ value $\val$, and $\validInd{\isFJ}{\BigStepFJ{\E}{\val}}$}. 
\end{lemma}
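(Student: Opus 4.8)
I would prove the slightly stronger statement that $\opsem{\E}{\mapEnv}{\callEnv}{\open}{\emptyset}$ implies $\mapEnv = \emptyset$, $\open\in\FJvals$, and $\validInd{\isFJ}{\BigStepFJ{\E}{\open}}$; the lemma is the case $\mapEnv=\emptyset$. Two of the three conclusions come almost for free. A routine preliminary induction on the operational rules shows that the result environment always extends the input one (each rule leaves the environment unchanged, as \refToRule{val}, or returns a $\sqcup$-union / single-variable extension of the environments produced by its premises), so an empty result environment forces an empty input environment; and since the result capsule $\Caps{\open}{\emptyset}$ satisfies the capsule property, $\FV{\open}\subseteq\dom{\emptyset}=\emptyset$, i.e.\ $\open$ is a closed (hence finite) open object, which is precisely an $\FJ$ value $\val$. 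The real content is thus the $\FJ$ evaluation judgement, which I would establish by induction on the derivation of $\opsem{\E}{\mapEnv}{\callEnv}{\open}{\emptyset}$.

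The crucial remark is that an empty result environment excludes exactly the rules that evaluate a codefinition: \refToRule{corec} and \refToRule{invk-check} both insert a (fresh) variable into the result environment, and by the monotonicity above that variable cannot be removed afterwards, so neither can be the last rule; moreover \refToRule{look-up} returns a variable $\x$, which by the capsule property must belong to the (then non-empty) result environment, so \refToRule{look-up} is excluded as well. Hence the whole derivation lives in the fragment \refToRule{val}, \refToRule{field}, \refToRule{new}, \refToRule{invk-ok}, which is exactly why it can be mirrored step-by-step in plain $\FJ$.

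For \refToRule{val}, $\open$ is a closed object $\val$ and $\validInd{\isFJ}{\BigStepFJ{\val}{\val}}$ holds by the standard structural induction on $\val$ using \refToRule{$\FJ$-new}. For \refToRule{field}, the premise has shape $\opsem{\E}{\mapEnv}{\callEnv}{\open}{\emptyset}$, so by induction $\open$ is a closed $\FJ$ object $\val=\NewExpr{\C}{\val_1,\ldots,\val_n}$ with $\validInd{\isFJ}{\BigStepFJ{\E}{\val}}$; since $\unf(\val,\emptyset)=\val$, the selected field is $\val_i\in\FJvals$ and \refToRule{$\FJ$-field} gives the result. For \refToRule{new}, from $\bigmapUnion{i\in1..n}{\mapEnv'_i}=\emptyset$ and $\dom{\mapUnion{\mapEnv_1}{\mapEnv_2}}=\dom{\mapEnv_1}\cup\dom{\mapEnv_2}$ each $\mapEnv'_i$ is $\emptyset$, so induction applies to every argument and \refToRule{$\FJ$-new} concludes. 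The interesting case is \refToRule{invk-ok}: its premises are $\opsem{\E_i}{\mapEnv}{\callEnv}{\open_i}{\mapEnv'_i}$ for $i\in0..n$ and $\opsem{\Subst{\Subst{\E}{\open_0}{\kwThis}}{\openBar}{\xBar}}{\mapEnvU}{\UpdateEnvCo{\callEnv}{\call}{\x}}{\open}{\emptyset}$ with $\mapEnvU=\bigmapUnion{i\in0..n}{\mapEnv'_i}$; applying the induction hypothesis to the last premise (empty result environment) gives $\mapEnvU=\emptyset$, hence every $\mapEnv'_i=\emptyset$, so the induction hypothesis applies to each $\E_i$, yielding $\open_i\in\FJvals$ and $\validInd{\isFJ}{\BigStepFJ{\E_i}{\open_i}}$, and to the (now $\FJ$) substituted body, yielding $\open\in\FJvals$ and $\validInd{\isFJ}{\BigStepFJ{\Subst{\Subst{\E}{\open_0}{\kwThis}}{\openBar}{\xBar}}{\open}}$. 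The side condition $\unf(\open_0,\emptyset)=\NewExpr{\C}{\_}$ together with closedness of $\open_0$ gives $\open_0=\NewExpr{\C}{\_}$, and $\mbody(\C,\m)=\Pair{\xBar}{\E}$ holds unchanged since $\FJ$ is a subset of $\coFJ$; hence \refToRule{$\FJ$-invk} derives $\validInd{\isFJ}{\BigStepFJ{\MethodCall{\E_0}{\m}{\EBar}}{\open}}$.

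The only step needing care is the remark on which rules may occur: one must check that the up-to-equivalence side conditions of the method-invocation rules do not interfere with the monotonicity argument (they do not — they only read the current call trace and environment, never shrinking it) and that the freshness conventions indeed guarantee that a variable introduced by \refToRule{corec} or \refToRule{invk-check} deep inside a derivation still appears at the root. Everything else is a mechanical transfer of each operational step to the corresponding $\FJ$ rule.
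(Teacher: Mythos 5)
Your proposal is correct and follows essentially the same route as the paper's proof: an induction on the operational derivation in which \refToRule{invk-check}, \refToRule{corec} and \refToRule{look-up} are excluded because their conclusions force a non-empty result environment, and the remaining rules \refToRule{val}, \refToRule{field}, \refToRule{new}, \refToRule{invk-ok} are mirrored by the corresponding $\FJ$ rules. The only cosmetic difference is that you generalize the input environment and derive its emptiness from monotonicity, whereas the paper keeps the input environment empty and uses \refToLemma{env_contained} to propagate emptiness of the result environment to the premises of \refToRule{invk-ok}.
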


\section{Advanced examples}\label{sect:examples}

This section provides some more complex examples to better understand the operational semantics of $\coFJ$ in \refToSection{operational}
and its relationship with the abstract semantics in \refToSection{abstract}.

\smallskip
\noindent\textbf{Examples on lists} 
We first show an example motivating the additional checking step (third premise) in rule \refToRule{invk-check}. Essentially, the success of this check for some capsule result corresponds to the existence of an infinite tree in the abstract semantics, whereas the fact that this capsule result is obtained by assuming the codefinition as result of the cyclic call (second premise) corresponds to the existence of a finite tree which uses the codefinition\EZComm{questo si vede in qualche modo nella prova formale? se s\`i mettere puntatore}.  

Assume to add to our running example of lists of integers a method that returns the sum of the elements. \EZ{For infinite regular lists, that is, lists ending with a cycle, a result should be returned if the cycle has sum $0$, for instance for a list ending with infinitely many $0$s,} and no result if the cycle has sum different from $0$.
This can be achieved \EZ{as follows}.\label{sum-example}
\begin{lstlisting}
class List extends Object { ...
   int sum() {0}
}
class NonEmptyList extends List { ...
     int sum() {this.head + this.tail.sum()} corec {0}
}
\end{lstlisting}
It is easy to see that the abstract semantics of the previous section formalizes the expected behavior. 
For instance,  an infinite tree for a judgment {$\eval{\MethodCall{[2,1]^\omega}{\sumAll}{}}{\val}$ only exists for $\val=2+1+\val$,} and there are no solutions of this equation, hence there is no result. In the operational semantics, by evaluating the body assuming the codefinition as result of the cyclic call (second premise of rule \refToRule{invk-check}) the spurious result $3$ would be returned.  This is avoided by the third premise, which evaluates the method body assuming $3$ as result of the cyclic call. Since we \emph{do not} get $3$ in turn as result, evaluation is stuck, as expected. 

\EZ{Note that the stuckness situation is detected: the last side-condition of rule \refToRule{invk-check} fails, and a dynamic error (not modeled for simplicity, see the comments to the rule) is raised, likely an exception in an implementation.
On the other hand, computations which \emph{never} reach (a base case or) an already encountered call still do not terminate in this operational semantics, exactly as in the standard one, and the fact that this does not happen should be \emph{proved} by suitable techniques, see the Conclusion.}

{All the examples shown until now have a constant codefinition. We show now an example where this is not enough. Consider the method \texttt{remPos()} that removes positive elements. A first attempt at a $\coFJ$ definition is the following:}

\begin{lstlisting}
class NonEmptyList extends List { ...
  List remPos() { 
    if(this.head > 0) this.tail.remPos()
    else new NonEmptyList(this.head,this.tail.remPos())}
  corec {new EmptyList()}
\end{lstlisting}

{Is this definition correct? Actually, it provides the expected behavior on finite lists, and cyclic lists where the cycle contains only positive elements. However, when the cycle contains at least one non positive element, there is no result. For instance, consider the method call $\MethodCall{[0,1]^\omega}{\texttt{remPos}}{}$. In the abstract semantics, an infinite tree can be constructed for the judgment $\eval{\MethodCall{[0,1]^\omega}{\texttt{remPos}}{}}{\val}$ only if $\val= 0: \val$, and this clearly only holds for $\val=[0]^\omega$. However, no finite tree can be constructed for this judgment using the codefinition. Note that, in the operational semantics, without the additional check (third premise of rule \refToRule{invk-check}), we would get the spurious result $[0]$. In order to have a $\coFJ$ definition complete with respect to the expected behavior, we should provide a different codefinition for lists with \PB{infinitely many} non-positive elements.}

\begin{lstlisting}
class NonEmptyList extends List { ...
  List remPos() { 
    if(this.head > 0) this.tail.remPos()
    else new NonEmptyList(this.head,this.tail.remPos())}
  corec {if (this.allPos() then new EmptyList() else any}
\end{lstlisting}

\smallskip
\noindent\textbf{Arithmetic with rational and real numbers}\EZComm{ho tagliuzzato qualche parola per accorciare}
All real numbers in the closed interval $\{0..1\}$ can be represented
by infinite lists $[d_1,d_2,\ldots]$ of decimal digits; more precisely,
the infinite list $[d_1,d_2,\ldots]$ represents
the real number which is the limit of the series $\sum_{i=1}^{\infty}10^{-i}d_i$.

It is well-known that all rational numbers in $\{0..1\}$ correspond to either a terminating or repeating decimal,
hence they can be represented by infinite regular lists of digits, where 
terminating decimals end with either an infinite sequence of $0$ or an infinite sequence of $9$;
for instance, the terminating decimal $\frac{1}{2}$ can be represented equivalently by either $[5,0,0,\ldots]$ or $[4,9,9,\ldots]$, while
the repeating decimal $\frac{1}{3}$ is represented by $[3,3,\ldots]$.

Therefore, in $\coFJ$ all rational numbers in $\{0..1\}$ can be effectively represented with infinite precision at the level
of the operational semantics; to this aim, we can declare a class \texttt{Number} with the two fields
\texttt{digit} of type \texttt{int} and \texttt{others} of type \texttt{Number}: \texttt{digit}
contains the leftmost digit, that is, the most significant, while \texttt{others} refers to
the remaining digits, that is, the number we would obtain by a single left shift (corresponding to multiplication by $10$).
Since also non-regular values are allowed, in the abstract semantics class \texttt{Number} can be used to represent also all
irrational numbers in $\{0..1\}$.

We now show how it is possible to compute in $\coFJ$ the addition of rational numbers in  $\{0..1\}$ with infinite precision.
We first define the method \texttt{carry} which computes the carry of the addition of two 
numbers: its result is $0$ if the sum belongs to $\{0..1\}$, $1$ otherwise.

\begin{lstlisting}
class Number extends Object { // numbers in {0..1} 
  int digit; // leftmost digit
  Number others; // all other digits                  

  int carry(Number num){ // returns 0 if this+num<=1, 1 otherwise
    if (this.digit+num.digit!=9) (this.digit+num.digit)/10
    else this.others.carry(num.others)
  } corec {0}
}
\end{lstlisting}

The two numbers \texttt{this} and \texttt{num} are inspected starting from the most
significant digits: if their sum is different from $9$, then the
carry can be computed without inspecting the other digits, hence the integer division by $10$
of the sum is returned. 
Corecursion is needed when the sum of the two digits equals $9$;
in this case the carry is the same obtained from the addition of \texttt{this.others} and \texttt{num.others}.

Finally, in the codefinition the carry $0$ is returned; indeed, the codefinition is evaluated only when the sum of
the digits for all positions inspected so far is $9$ and the same patterns of digits are encountered for the second time.
This can only happen for pairs of numbers whose addition is $[9,9,\ldots]$, that is, $1$, hence the computed carry must be $0$.

\EZComm{mi sembra questo si possa tagliare non dice nulla di nuovo: In the abstract semantics, when the addition of $n_1$ and $n_2$ yields $1$, without the corules it is possible to derive $\eval{n_1\mathtt{.carry}(n_2)}{\val}$ with an infinite proof tree  for any value $\val$, but thanks to the codefinition all spurious values are filtered and the only correct value $0$ is kept;
indeed, if \texttt{0} is replaced with \texttt{any} in the codefinition, then in the operational semantics the value
returned by $n_1\mathtt{.carry}(n_2)$ is undetermined when the addition of $n_1$ and $n_2$ yields $1$, whereas in the abstract semantics any
integer is returned for the same case. }

Based on method \texttt{carry}, we can define method \texttt{add} which computes the addition of two
numbers, excluding the possible carry in case of overflow.
\begin{lstlisting}
class Number extends Object { ... // declarations as above  
  Number add(Number num){ // returns this+num 
    new Number(
      (this.digit+num.digit+this.others.carry(num.others))%10,
      this.others.add(num.others))} corec {any}
}  
\end{lstlisting}
For each position, the corresponding digits of \texttt{this} and \texttt{num} are added
to the carry computed for the other digits (\texttt{this.others.carry(num.others)}),
then the reminder of the division by $10$ \EZ{gives} the most significant digit
of the result, whereas the others are obtained by corecursively
calling the method on the remaining digits (\texttt{this.others.add(num.others)}).
Since this call is guarded by a constructor call, the codefinition
is \texttt{any}.

\EZ{Note that, in the abstract semantics, methods \texttt{carry} and \texttt{add} correctly work} also
for irrational numbers.

Method \texttt{add} above is simple, but has the drawback that
the same carries are computed more times; hence, in the worst case,
the time complexity is quadratic in the period\footnote{Indeed, the worst case scenario is when the carry
propagates over all digits because their sum is always $9$, and this can happen only if the two numbers have the same period.} of the two involved
repeating decimals. To overcome this issue, we present a more elaborate example where carries are computed only once for any position;
this is achieved by method \texttt{all\_carries} below, which returns the sequence of all carries (hence, a list of binary digits).

Method \texttt{simple\_add} corecursively adds all digits without considering carries,
while method \texttt{add}, defined on top of \texttt{simple\_add} and \texttt{all\_carries}, 
computes the final result.
This new version of \texttt{add} is not recursive and, hence. does not need a codefinition.

\begin{lstlisting}
class Number extends Object { ... // declarations as above  
  Number all_carries(Number num){ // carries for all positions
    this.simple_carries(num).complete()
  } 
  Number simple_carries(Number num){ // carries computed immediately 
    if(this.digit+num.digit!=9)
      new Number((this.digit+num.digit)/10,
        this.others.simple_carries(num.others))
    else new Number(9,this.others.simple_carries(num.others))
  } corec {any}

  Number complete(){ // computes missing carries marked with 9
    if(this.digit!=9) new Number(this.digit,this.others.complete())
    else this.fill(this.carry_lookahead()).complete()
  } corec {any}
 
  Number fill(int dig){ // fills with dig all next missing carries 
    if(this.digit!=9) this else new Number(dig,this.others.fill(dig))
  } corec {any}

  int carry_lookahead(){ // returns the next computed carry
    if(this.digit!=9) this.digit else this.others.carry_lookahead()
  } corec {0}
  
  Number simple_add(Number num){ // addition without carries 
    new Number((this.digit+num.digit)%10,
      this.others.simple_add(num.others))
  } corec {any}

  Number add(Number num){
    this.simple_add(num).simple_add(this.all_carries(num).others)
  }
}
\end{lstlisting}

\newcommand{\NatInfty}{\texttt{Nat}^\infty}
\newcommand{\Nat}{\texttt{Nat}}
\newcommand{\Infty}{\texttt{Infty}}

\smallskip
\noindent\textbf{Distances on graphs}
\EZ{The last example of this section involves graphs, which are the paradigmatic example of cyclic data structure. Our aim is to compute the \emph{distance}, that is, the minimal length of a path, between two vertexes\footnote{The example can be easily adapted to weighted paths.}. Consider a graph ($V$, \textit{adj}) where $V$ is the set of vertexes and $adj:V\rightarrow\wp(V)$ gives, for each vertex, the set of the adjacent vertexes. Each vertex has an identifier \texttt{id} assumed to be unique.  We assume a class $\NatInfty$, with subclasses $\Nat$ with an integer field, and $\Infty$ with no fields, for naturals and $\infty$ (distance between unconnected nodes), respectively.
Such classes offer methods \texttt{succ()} for the successor, and \texttt{min($\NatInfty$ n)} for the minimum, with the expected behaviour (e.g., \texttt{succ} in class $\NatInfty$ returns $\infty$).}

\begin{lstlisting}
class Vertex extends Object {
  Id id; AdjList adjVerts;
  $\NatInfty$dist(Id id) {
    this.id==id?new Nat(0):this.adjVerts.dist(id).succ()}
  corec {new $\Infty$()}
}

class AdjList extends Object { }
class EAdjList extends AdjList {
  $\NatInfty$dist(Id id) { new $\Infty$() }
}
class NEAdjList extends AdjList {
  Vertex vert; AdjList adjVerts;
  $\NatInfty$dist(Id id) {this.vert.dist(id).min(this.adjVerts.dist(id))}
}
\end{lstlisting}

\EZ{Clearly, if the destination \texttt{id} and the source node coincide, then the distance is 0. Otherwise, the distance is obtained by incrementing by one the minimal distance from an adjacent to \texttt{id}, computed by method \texttt{dist()} of \texttt{AdjList} called on the adjacency list. The codefinition of method \texttt{dist()} of class \texttt{Vertex} is needed since, in presence of a cycle, $\infty$ is returned and non-termination is avoided. The same approach can be adopted for visiting a graph: instead of keeping trace of already encountered nodes, cycles are implicitly handled by the loop detection mechanism of $\coFJ$.}

\section{Soundness}\label{sect:soundness}
Soundness of the operational semantics with respect to the abstract one means, roughly, that a value derived using the rules in \refToFigure{new-sem} can also be derived by those in \refToFigure{syntax}. 
However, this statement needs to be refined, since values in the two semantics are different: \PB{possibly infinite objects in the abstract semantics, and capsules in the operational semantics.}

We define a relation from capsules to abstract objects, formally express soundness through this relation, and introduce an intermediate semantics to carry out the proof in two steps.

\smallskip
\noindent\textbf{From capsules to infinite objects} 
Intuitively, given a capsule $\Caps{\open}{\mapEnv}$, we get an abstract value by instantiating variables in $\open$ with abstract values, in a way consistent with $\mapEnv$. 
To make this formal, we need some preliminary definitions.

A \emph{substitution}  {$\subst$} is a function from variables to abstract values. We denote by $\ApplySubst{\E}{\subst}$ the {abstract} expression obtained by applying $\subst$ to $\E$.
In particular, if $\E$ is an open value $\open$, then $\ApplySubst{\open}{\subst}$ is an abstract value. 
\EZ{Given an environment $\mapEnv$ and a substitution $\subst$, the substitution  $\AppMap{\mapEnv}{\subst}$ is defined by}: 
\[ \AppMap{\mapEnv}{\subst}(\x) = \begin{cases}
\ApplySubst{\mapEnv(\x)}{\subst} & \x \in \dom{\mapEnv} \\
\subst(x) & \x \notin \dom{\mapEnv}
\end{cases} \]
Then, a \emph{solution} of $\mapEnv$ is a substitution $\subst$ such that $\AppMap{\mapEnv}{\subst} = \subst$. Let $\solutions(\mapEnv)$ be the set of solutions {of} $\mapEnv$.  
{Finally, if $\Caps{\E}{\mapEnv}$ is a capsule, we define the set of abstract expressions it denotes as 
$\semCaps{\E}{\mapEnv} = \{ \ApplySubst{\E}{\subst} \mid \subst \in \solutions(\mapEnv) \}$.
Note that $\semCaps{\open}{\mapEnv} \subseteq \coFJaVals$, for any capsule $\Caps{\open}{\mapEnv}$. 
We now show {an operational characterization of} the semantic equality.

\begin{theorem} \label{theo:eq-sem-caps}
$\semCaps{\open_1}{\mapEnv_1} {=}\semCaps{\open_2}{\mapEnv_2}$ iff $\bisim[\varRel]{\Caps{\open_1}{\mapEnv_1}}{\Caps{\open_2}{\mapEnv_2}}$, for some \mbox{$\mapEnv_1,\mapEnv_2$-renaming $\varRel$.}
\end{theorem}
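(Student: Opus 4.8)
The plan is to prove the two implications separately, both resting on a preliminary analysis of $\solutions(\mapEnv)$. First I would record three facts about an environment $\mapEnv$ and a solution $\subst\in\solutions(\mapEnv)$. (i) $\subst$ is constant on each equivalence class of $\Undetermined{\mapEnv}$ determined by $\mapEnv$, since $\mapEnv(\x)=y$ forces $\subst(\x)=\subst(y)$. (ii) Conversely, every map from $\Quotient{\mapEnv}$ to $\coFJaVals$ extends to a unique solution: given the values on the undetermined classes, define $\subst$ on a determined variable $\x$ as the abstract object obtained by fully unfolding $\x$ in $\mapEnv$ and plugging the chosen values at the undetermined leaves, then check the fixed-point equation. (iii) If $\unf(\open,\mapEnv)$ is defined then $\ApplySubst{\open}{\subst}=\ApplySubst{\unf(\open,\mapEnv)}{\subst}$, by induction on the definition of $\unf$; consequently, along any path of field selections, $\ApplySubst{\open}{\subst}$ has the head class dictated by the ``determined shape'' of $\Caps{\open}{\mapEnv}$ up to the first undetermined position, where instead it takes the value $\subst$ assigns to that hole's class. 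Throughout I assume the calculus is non-degenerate, i.e.\ there exist abstract values with distinct top-level classes (true for any reasonable class table); this is needed for the reverse implication.

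For the ``if'' direction, let $\varRel$ be a $\mapEnv_1,\mapEnv_2$-renaming with $\bisim[\varRel]{\Caps{\open_1}{\mapEnv_1}}{\Caps{\open_2}{\mapEnv_2}}$, and let $\varRel$ also denote the induced partial bijection on quotients. Given $\subst_1\in\solutions(\mapEnv_1)$, define $\subst_2$ on $\Undetermined{\mapEnv_2}$ by transporting the class-values of $\subst_1$ along $\varRel$ (an arbitrary fixed value outside its range) and extend to a solution by (ii). The core claim is that $\bisim[\varRel]{\Caps{\open}{\mapEnv_1}}{\Caps{\open'}{\mapEnv_2}}$ implies $\ApplySubst{\open}{\subst_1}=\ApplySubst{\open'}{\subst_2}$; this is proved by coinduction on abstract objects, showing that the set of such pairs $\Pair{\ApplySubst{\open}{\subst_1}}{\ApplySubst{\open'}{\subst_2}}$ together with the identity is a bisimulation: the first $\bisim[\varRel]$-rule gives two variables whose classes correspond under $\varRel$, hence equal images by construction and (i); the second rule, together with (iii), exposes a common head class and componentwise related children. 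Instantiating with $\open_1,\open_2$ gives $\semCaps{\open_1}{\mapEnv_1}\subseteq\semCaps{\open_2}{\mapEnv_2}$, and the symmetric construction (transporting along $\varRel^{-1}$) gives the other inclusion.

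For the ``only if'' direction, assume $\semCaps{\open_1}{\mapEnv_1}=\semCaps{\open_2}{\mapEnv_2}=:\mathcal S$. Writing $\open^p_i$ for the open subterm of $\Caps{\open_i}{\mapEnv_i}$ at field-path $p$, I first show the two capsules have the \emph{same determined shape}: for every $p$ for which $\open^p_1$ is defined with $\unf(\open^p_1,\mapEnv_1)$ a constructor application $\NewExpr{\C}{\_}$ of $n$ fields, also $\open^p_2$ is defined with $\unf(\open^p_2,\mapEnv_2)=\NewExpr{\C}{\_}$ of $n$ fields, and conversely. One argues by contradiction at a shortest disagreeing $p$: above $p$ the shapes agree, so at $p$ one side shows a constructor while the other shows either a hole or a different constructor; since the value read at a hole ranges over all of $\coFJaVals$ whereas the value read at a $\C$-constructor is always a $\C$-object, comparing $\{t@p\mid t\in\mathcal S\}$ computed from the two sides contradicts non-degeneracy. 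The same kind of argument shows the induced sharing relations on hole-paths coincide: $\open^p_1,\open^q_1$ reach equivalent undetermined variables iff $\open^p_2,\open^q_2$ do (otherwise one side forces $t@p=t@q$ for all $t\in\mathcal S$ while the other admits $t@p\neq t@q$). Hence $\varRel:=\{\Pair{\open^p_1}{\open^p_2}\mid p\text{ a hole-path, both endpoints undetermined variables}\}$ is well defined and induces a partial bijection on the reachable classes, i.e.\ a $\mapEnv_1,\mapEnv_2$-renaming. Finally $\bisim[\varRel]{\Caps{\open_1}{\mapEnv_1}}{\Caps{\open_2}{\mapEnv_2}}$ follows by coinduction, taking as candidate relation $\{\Pair{\Caps{\open^p_1}{\mapEnv_1}}{\Caps{\open^p_2}{\mapEnv_2}}\mid p\text{ a valid path}\}$ and checking it is closed under the two $\bisim[\varRel]$-rules using shape coincidence.

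I expect the main obstacle to be the first step of the ``only if'' direction — that equal denotations force coinciding determined shapes and sharing. This is where the interaction of $\unf$ (partial, because of unguarded variable cycles), undetermined variables, and substitution has to be handled precisely, and where non-degeneracy of $\coFJaVals$ is genuinely used. The two coinductive arguments are otherwise routine, but care is needed to keep the single global renaming $\varRel$ fixed while recursing into subterms, and to ensure the candidate relations are actually post-fixed points of the relevant rule functionals.
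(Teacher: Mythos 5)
Your proposal is correct and follows essentially the same route as the paper's proof, which factors both directions through the tree expansion of a capsule (your ``determined shape'' along field-paths), establishes the one-step shape and sharing coincidence from equal denotations (the paper's \refToProp{sem-eq}), builds the transported solution by solving a regular system of equations, and concludes each direction by coinduction on the respective relations. The only inessential difference is your non-degeneracy hypothesis: the paper avoids it by noting that the untyped coinductive value grammar always contains, for any class $\C$ with $n$ fields read off one side, a $\C$-object with a different number of fields, so no assumption on the class table is required.
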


To prove this result we need some auxiliary definitions and lemmas. 
The \emph{tree expansion} of a capsule $\Caps{\open}{\mapEnv}$ is the possibly  infinite open value coinductively defined as follows: 
\[\expCaps{\open}{\mapEnv} = \begin{cases}
\x  &  \open = \x \text{ and } \undef{\unf(\x, \mapEnv)} \\
\NewExpr{\C}{\expCaps{\open_1}{\mapEnv}, \ldots, \expCaps{\open_n}{\mapEnv}} & \unf(\open, \mapEnv) = \NewExpr{\C}{\open_1,\ldots,\open_n} 
\end{cases}\]
The next proposition shows relations between solutions and tree expansion of a capsule. 

\begin{proposition} \label{prop:solutions}
Let $\Caps{\open}{\mapEnv}$ be a capsule and $\subst \in \solutions(\mapEnv)$, then 
\begin{enumerate}
\item\label{prop:solutions:1} if $\undef{\unf(\open,\mapEnv)}$ then $\open = \x$ and $\varEquiv{\mapEnv}{\x}{\x}$
\item\label{prop:solutions:2} $\FV{\expCaps{\open}{\mapEnv}} \subseteq \{\x \in \dom{\mapEnv} \mid \varEquiv{\mapEnv}{\x}{\x} \}$
\item\label{prop:solutions:3} if $\varEquiv{\mapEnv}{x}{y}$ then $\subst(x) = \subst(y)$
\item\label{prop:solutions:4} if $\unf(\open, \mapEnv) = \NewExpr{\C}{\open_1,\ldots,\open_n}$ then $\ApplySubst{\open}{\subst} = \NewExpr{\C}{\ApplySubst{\open_1}{\subst},\ldots,\ApplySubst{\open_n}{\subst}}$
\item\label{prop:solutions:5} $\ApplySubst{\open}{\subst} = \ApplySubst{\expCaps{\open}{\mapEnv}}{\subst}$ 
\end{enumerate}
\end{proposition}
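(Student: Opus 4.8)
The plan is to prove the five items essentially independently, each by a short induction or coinduction over the relevant definition, and then use the first and fourth to close the last one by coinduction. For item~\ref{prop:solutions:1} I would first note that $\unf$ has exactly two defining clauses, the one for constructor open values always producing a defined result; hence $\undef{\unf(\open,\mapEnv)}$ forces $\open$ to be a variable $\x$, which by the capsule property lies in $\FV{\open}\subseteq\dom{\mapEnv}$, so $\x\in\Undetermined{\mapEnv}$ by definition of $\Undetermined{\mapEnv}$. Since $\varEquiv{\mapEnv}{}{}$ is by definition an equivalence relation on $\Undetermined{\mapEnv}$, reflexivity yields $\varEquiv{\mapEnv}{\x}{\x}$. (Concretely, $\unf(\x,\mapEnv)$ diverges because the chain $\x,\mapEnv(\x),\mapEnv(\mapEnv(\x)),\dots$ stays in $\dom{\mapEnv}$ and consists only of variables by the capsule property, hence is cyclic since $\dom{\mapEnv}$ is finite; but the statement needs only the membership just derived.)

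For item~\ref{prop:solutions:3} I would observe that $\{(\x,y)\mid \x,y\in\Undetermined{\mapEnv},\ \ApplySubst{\x}{\subst}=\ApplySubst{y}{\subst}\}$ is an equivalence relation on $\Undetermined{\mapEnv}$ containing each generating pair: if $\mapEnv(\x)=y$ with $\x\in\Undetermined{\mapEnv}$, then $\ApplySubst{\x}{\subst}=\subst(\x)=\AppMap{\mapEnv}{\subst}(\x)=\ApplySubst{\mapEnv(\x)}{\subst}=\ApplySubst{y}{\subst}$, using $\subst\in\solutions(\mapEnv)$ and $\x\in\dom{\mapEnv}$; minimality of $\varEquiv{\mapEnv}{}{}$ concludes. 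For item~\ref{prop:solutions:4} I would do induction on the derivation of $\unf(\open,\mapEnv)=\NewExpr{\C}{\open_1,\dots,\open_n}$: the constructor clause is immediate from the definition of substitution application, and in the case $\open=\x$ we have $\unf(\mapEnv(\x),\mapEnv)=\NewExpr{\C}{\open_1,\dots,\open_n}$, so by the induction hypothesis $\ApplySubst{\mapEnv(\x)}{\subst}=\NewExpr{\C}{\ApplySubst{\open_1}{\subst},\dots,\ApplySubst{\open_n}{\subst}}$, while $\ApplySubst{\x}{\subst}=\subst(\x)=\AppMap{\mapEnv}{\subst}(\x)=\ApplySubst{\mapEnv(\x)}{\subst}$ because $\subst$ is a solution and $\x\in\dom{\mapEnv}$.

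For item~\ref{prop:solutions:2}, a free variable $\x$ of $\expCaps{\open}{\mapEnv}$ occurs at some leaf of the tree expansion; following the finite path from the root to that leaf, each step applies the constructor clause of $\expCaps{\cdot}{\mapEnv}$ and descends into one component, so one reaches a sub-open value $\open'$ with $\expCaps{\open'}{\mapEnv}=\x$, i.e.\ $\open'=\x$ and $\undef{\unf(\x,\mapEnv)}$. A routine sub-induction on $\unf$ shows that every open value obtained by unfolding from $\open$ still has its free variables in $\dom{\mapEnv}$, so $\Caps{\x}{\mapEnv}$ is again a capsule, and item~\ref{prop:solutions:1} applied to it gives $\x\in\dom{\mapEnv}$ and $\varEquiv{\mapEnv}{\x}{\x}$, as required.

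Finally, for item~\ref{prop:solutions:5}: since $\coFJaVals$ is the coinductive interpretation of the value grammar, equality of abstract values is characterized by bisimilarity, so I would exhibit the relation $R=\{(\ApplySubst{\open}{\subst},\ApplySubst{\expCaps{\open}{\mapEnv}}{\subst})\mid \Caps{\open}{\mapEnv}\text{ a capsule},\ \subst\in\solutions(\mapEnv)\}$ and check that $R$, enlarged with the diagonal, is a bisimulation. Given a pair in $R$ I would case split on $\unf(\open,\mapEnv)$: if it is undefined, item~\ref{prop:solutions:1} gives $\open=\x$ and $\expCaps{\x}{\mapEnv}=\x$, so both components coincide with $\subst(\x)$; if $\unf(\open,\mapEnv)=\NewExpr{\C}{\open_1,\dots,\open_n}$, then $\expCaps{\open}{\mapEnv}=\NewExpr{\C}{\expCaps{\open_1}{\mapEnv},\dots,\expCaps{\open_n}{\mapEnv}}$ and, by item~\ref{prop:solutions:4}, $\ApplySubst{\open}{\subst}=\NewExpr{\C}{\ApplySubst{\open_1}{\subst},\dots,\ApplySubst{\open_n}{\subst}}$, so the two components share the head constructor $\C$ and their $i$-th arguments again form a pair in $R$ (each $\Caps{\open_i}{\mapEnv}$ being a capsule by the sub-induction used for item~\ref{prop:solutions:2}). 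Hence $R\subseteq{=}$ and item~\ref{prop:solutions:5} follows. I expect this coinductive step to be the only mildly delicate point, essentially the bookkeeping needed to see that the companion subterms $\open_i$ still satisfy the capsule property so that $R$ is well-defined; everything else is routine unwinding of the inductive definitions of $\unf$ and of substitution application.
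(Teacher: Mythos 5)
Your proposal is correct and follows essentially the same route as the paper's proof: item (1) by case analysis on the clauses of $\unf$, item (2) by reduction to item (1) via the definition of tree expansion, item (3) by checking the generating pairs of the least equivalence relation, item (4) by induction on the unfolding, and item (5) by coinduction using items (1) and (4). Your extra bookkeeping (the sub-induction that unfolding preserves the capsule property, and the explicit bisimulation relation for item (5)) only makes explicit what the paper leaves implicit.
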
 

Given a relation $\varRel$ on variables, we will denote by $\varRel\op$ the opposite relation and by $=_\varRel$ the equality of possibly infinite open values up-to $\varRel$, coinductively defined by the following rules: 
\[
\RuleNoName{}{x =_\varRel y}{x\varRel y} 
\BigSpace
\RuleNoName{
  t_i =_\varRel s_i\quad \forall i \in 1..n 
}{ \NewExpr{\C}{t_1, \ldots, t_n} =_\varRel \NewExpr{\C}{s_1, \ldots, s_n} }
{}
\]
It is easy to check that 
\begin{itemize}
\item $\varRel$ is a $\mapEnv_1,\mapEnv_2$-renaming iff $\varRel\op$ is a $\mapEnv_2,\mapEnv_1$-renaming,
\item $\bisim[\varRel]{\Caps{\open_1}{\mapEnv_1}}{\Caps{\open_2}{\mapEnv_2}}$ iff $\bisim[\varRel\op]{\Caps{\open_2}{\mapEnv_2}}{\Caps{\open_1}{\mapEnv_1}}$, 
\item $t_1 =_\varRel t_2$ iff $t_2 =_{\varRel\op} t_1$.
\end{itemize}

We have the following lemmas:
\begin{lemma} \label{lemma:bisim-tree}
$\bisim[\varRel]{\Caps{\open_1}{\mapEnv_1}}{\Caps{\open_2}{\mapEnv_2}}$ iff $\expCaps{\open_1}{\mapEnv_1} {=_\varRel} \expCaps{\open_2}{\mapEnv_2}$, \mbox{for each $\mapEnv_1,\mapEnv_2$-renaming $\varRel$}. 
\end{lemma}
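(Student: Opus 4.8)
The plan is a straightforward double coinduction, exploiting that $\bisim[\varRel]{}{}$, the equality ${=_\varRel}$ on possibly infinite open values, and the tree expansion $\expCaps{\cdot}{\cdot}$ are all defined coinductively. First I would record one elementary observation about the tree expansion of a capsule (so with the capsule property in force): $\expCaps{\open}{\mapEnv}$ is a variable exactly when $\open$ is itself a variable $\x$ with $\undef{\unf(\x,\mapEnv)}$, i.e.\ $\x\in\Undetermined{\mapEnv}$, and in that case the leaf is $\x$ itself; otherwise $\unf(\open,\mapEnv)=\NewExpr{\C}{\open_1,\ldots,\open_n}$ and $\expCaps{\open}{\mapEnv}=\NewExpr{\C}{\expCaps{\open_1}{\mapEnv},\ldots,\expCaps{\open_n}{\mapEnv}}$. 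These two cases are exhaustive and mutually exclusive, since $\unf(\open,\mapEnv)$ is undefined precisely in the first one.

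For the left-to-right direction I would consider the relation
\[ R=\{\Pair{\expCaps{\open}{\mapEnv_1}}{\expCaps{\open'}{\mapEnv_2}}\mid\bisim[\varRel]{\Caps{\open}{\mapEnv_1}}{\Caps{\open'}{\mapEnv_2}}\} \]
and show it is consistent with respect to the inference system defining ${=_\varRel}$; since that relation is the largest consistent set, this yields $R\subseteq{=_\varRel}$, and the instance $\open=\open_1$, $\open'=\open_2$ gives the claim. Consistency is a case split on the last rule used to derive $\bisim[\varRel]{\Caps{\open}{\mapEnv_1}}{\Caps{\open'}{\mapEnv_2}}$: if it is the first rule, whose side condition is $\x\varRel\x'$, then $\x\in\Undetermined{\mapEnv_1}$ and $\x'\in\Undetermined{\mapEnv_2}$ because $\varRel$ relates only undetermined variables, so both tree expansions are the variables $\x$ and $\x'$ and the pair is an instance of the axiom of ${=_\varRel}$; if it is the second (constructor) rule, the tree expansions are $\NewExpr{\C}{\expCaps{\open_1}{\mapEnv_1},\ldots}$ and $\NewExpr{\C}{\expCaps{\open'_1}{\mapEnv_2},\ldots}$ with componentwise premises $\bisim[\varRel]{\Caps{\open_i}{\mapEnv_1}}{\Caps{\open'_i}{\mapEnv_2}}$, so each pair $\Pair{\expCaps{\open_i}{\mapEnv_1}}{\expCaps{\open'_i}{\mapEnv_2}}$ lies in $R$ and the pair is justified by the constructor rule of ${=_\varRel}$ with premises in $R$.

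The right-to-left direction is symmetric: I would take
\[ R=\{\Pair{\Caps{\open}{\mapEnv_1}}{\Caps{\open'}{\mapEnv_2}}\mid\expCaps{\open}{\mapEnv_1}{=_\varRel}\expCaps{\open'}{\mapEnv_2}\} \]
and show it is consistent with respect to the inference system defining $\bisim[\varRel]{}{}$, getting $R\subseteq\bisim[\varRel]{}{}$. The case split is now on the last rule deriving $\expCaps{\open}{\mapEnv_1}{=_\varRel}\expCaps{\open'}{\mapEnv_2}$, with the elementary observation used in reverse: if it is the leaf rule, both expansions are variables, which forces $\open$ to equal that variable $\x$ with $\x\in\Undetermined{\mapEnv_1}$ (and likewise for $\open'$ and $\mapEnv_2$), so the first rule of $\bisim[\varRel]{}{}$ applies; if it is the constructor rule, then $\unf(\open,\mapEnv_1)$ and $\unf(\open',\mapEnv_2)$ are constructor applications of the same class whose components have the $=_\varRel$-related subtrees as expansions, hence the corresponding sub-capsules are again in $R$ and the second rule of $\bisim[\varRel]{}{}$ applies with premises in $R$.

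The only point needing a little care — hence the ``main obstacle'', modest as it is — is the bookkeeping around variable leaves: one must check that a variable occurring in a tree expansion can only originate from an undetermined variable of the environment, so that the leaf rule of each coinductive system is the only applicable one on that side, and that the leaf rule of $\bisim[\varRel]{}{}$ does force $\undef{\unf}$ on both sides, via the typing of $\varRel$ as a relation between undetermined variables. This is precisely the elementary observation recorded at the outset; everything else is the routine verification that the two candidate relations are post-fixpoints of the respective generating operators.
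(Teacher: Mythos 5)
Your proof is correct and takes the same route as the paper, which simply states that the lemma is ``immediate by coinduction in both directions''; you have merely spelled out the two consistency checks and the key observation that $\expCaps{\open}{\mapEnv}$ is a variable exactly when $\open$ is an undetermined variable of $\mapEnv$, which is indeed the only point requiring any care.
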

\begin{proof}
The proof is immediate by coinduction in both directions.
\end{proof}

\begin{lemma} \label{lemma:tree-sem-sound}
If $\expCaps{\open_1}{\mapEnv_1} {=_\varRel} \expCaps{\open_2}{\mapEnv_2}$, where $\varRel$ is a $\mapEnv_1,\mapEnv_2$-renaming, then $\semCaps{\open_1}{\mapEnv_1} = \semCaps{\open_2}{\mapEnv_2}$.
\end{lemma}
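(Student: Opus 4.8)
The plan is to turn the set-level equality $\semCaps{\open_1}{\mapEnv_1}=\semCaps{\open_2}{\mapEnv_2}$ into a statement about the tree expansions, exactly so that the hypothesis $\expCaps{\open_1}{\mapEnv_1} =_\varRel \expCaps{\open_2}{\mapEnv_2}$ can be fed in directly. First I would use \refToPropItem{solutions}{5} to rewrite $\semCaps{\open}{\mapEnv} = \{\ApplySubst{\expCaps{\open}{\mapEnv}}{\subst}\mid\subst\in\solutions(\mapEnv)\}$, so that what has to be compared are the images of $\solutions(\mapEnv_1)$ and $\solutions(\mapEnv_2)$ under application to the respective tree expansions. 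By the three observations stated just before the lemma ($\varRel$ is a $\mapEnv_1,\mapEnv_2$-renaming iff $\varRel\op$ is a $\mapEnv_2,\mapEnv_1$-renaming, and $t_1=_\varRel t_2$ iff $t_2=_{\varRel\op}t_1$), the hypothesis is symmetric under swapping the two sides and replacing $\varRel$ by $\varRel\op$, so it suffices to prove one inclusion, say $\semCaps{\open_1}{\mapEnv_1}\subseteq\semCaps{\open_2}{\mapEnv_2}$.

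Next I would isolate the part of a solution that actually matters. By \refToPropItem{solutions}{2}, $\FV{\expCaps{\open_i}{\mapEnv_i}}$ is contained in $\{\x\in\dom{\mapEnv_i}\mid\varEquiv{\mapEnv_i}{\x}{\x}\}$, which is exactly $\Undetermined{\mapEnv_i}$; hence $\ApplySubst{\expCaps{\open_i}{\mapEnv_i}}{\subst}$ depends only on $\subst$ restricted to $\Undetermined{\mapEnv_i}$, and by \refToPropItem{solutions}{3} this restriction is constant on $\varEquiv{\mapEnv_i}{}{}$-classes, hence factors through a map $\Quotient{\mapEnv_i}\to\coFJaVals$. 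Conversely, the key auxiliary fact I would establish is an \emph{extension lemma}: every map $\alpha$ defined on $\Undetermined{\mapEnv}$ and constant on $\varEquiv{\mapEnv}{}{}$-classes extends to a solution of $\mapEnv$. This is proved by \emph{defining} $\subst(\x)=\ApplySubst{\expCaps{\x}{\mapEnv}}{\alpha}$ for $\x\in\dom{\mapEnv}$ (which is legitimate since $\expCaps{\x}{\mapEnv}$ is defined independently of $\subst$, and its free variables lie in $\Undetermined{\mapEnv}$), with arbitrary values outside $\dom{\mapEnv}$; note $\subst$ agrees with $\alpha$ on $\Undetermined{\mapEnv}$. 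A straightforward structural induction on finite open values $\open$ with $\FV{\open}\subseteq\dom{\mapEnv}$ gives $\ApplySubst{\open}{\subst}=\ApplySubst{\expCaps{\open}{\mapEnv}}{\alpha}$; applying this to $\open=\mapEnv(\x)$ and observing that $\expCaps{\mapEnv(\x)}{\mapEnv}$ equals $\expCaps{\x}{\mapEnv}$ when $\x$ is determined, and is a variable $\varEquiv{\mapEnv}{}{}$-equivalent to $\x$ (hence assigned the same $\alpha$-value) when $\x$ is undetermined, yields the solution equation $\ApplySubst{\mapEnv(\x)}{\subst}=\subst(\x)$. Putting this together with the previous remarks, $\semCaps{\open_i}{\mapEnv_i}=\{\ApplySubst{\expCaps{\open_i}{\mapEnv_i}}{\alpha}\mid\alpha\colon\Quotient{\mapEnv_i}\to\coFJaVals\}$.

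Finally I would use $\varRel$ to transport assignments. Given $\alpha_1\colon\Quotient{\mapEnv_1}\to\coFJaVals$, define $\alpha_2\colon\Quotient{\mapEnv_2}\to\coFJaVals$ by $\alpha_2(\eqclass{y})=\alpha_1(\eqclass{x})$ whenever $\eqclass{x}\varRel\eqclass{y}$ (well defined because $\varRel$ induces a partial \emph{bijection} on quotients), with arbitrary values on classes outside the image of $\varRel$. By inspection of the rules defining $=_\varRel$, a variable leaf on the left side of $\expCaps{\open_1}{\mapEnv_1}=_\varRel\expCaps{\open_2}{\mapEnv_2}$ must be matched by a variable leaf on the right, related by $\varRel$; hence every free-variable class occurring in $\expCaps{\open_1}{\mapEnv_1}$ lies in the domain of $\varRel$ and is sent by $\varRel$ to the class occurring at the corresponding position of $\expCaps{\open_2}{\mapEnv_2}$. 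Therefore, by a coinduction on equality of (possibly infinite) abstract values, $\ApplySubst{\expCaps{\open_1}{\mapEnv_1}}{\alpha_1}=\ApplySubst{\expCaps{\open_2}{\mapEnv_2}}{\alpha_2}$, which by the characterization above witnesses $\semCaps{\open_1}{\mapEnv_1}\subseteq\semCaps{\open_2}{\mapEnv_2}$; the reverse inclusion follows by symmetry.

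I expect the main obstacle to be the extension lemma: one must check that the definition $\subst(\x)=\ApplySubst{\expCaps{\x}{\mapEnv}}{\alpha}$ is genuinely non-circular and that substitution into possibly infinite trees is well behaved, and above all the undetermined-variable case of the solution equation has to go through \emph{using} that $\alpha$ is constant on $\varEquiv{\mapEnv}{}{}$-classes — this is precisely the point at which the quotient $\Quotient{\mapEnv}$, rather than the raw set $\Undetermined{\mapEnv}$, is what the renaming must biject, and hence where the hypothesis "$\varRel$ is a $\mapEnv_1,\mapEnv_2$-renaming" is essential.
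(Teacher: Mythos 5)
Your proposal is correct, and its overall skeleton matches the paper's proof: both reduce the claim to one inclusion by the symmetry of renamings ($\varRel$ vs.\ $\varRel\op$), both transport a solution from one environment to the other along the bijection that $\varRel$ induces on $\Quotient{\mapEnv_1},\Quotient{\mapEnv_2}$, and both conclude with a coinduction showing that $t_1 =_\varRel t_2$ forces the two substituted trees to coincide (your variable case uses the definition of the transported assignment exactly where the paper uses the definition of its system of equations). The genuine difference is in how the matching solution is obtained. The paper defines a regular system of equations $s$ over $\dom{\mapEnv_1}$ (sending undetermined variables to $\subst_2(y)$ via $\varRel$ and the rest to $\mapEnv_1(\x)$) and invokes Courcelle's existence theorem for its solution, then separately verifies that this solution lies in $\solutions(\mapEnv_1)$. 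You instead first establish the clean characterization $\semCaps{\open}{\mapEnv}=\{\ApplySubst{\expCaps{\open}{\mapEnv}}{\alpha}\mid \alpha:\Quotient{\mapEnv}\to\coFJaVals\}$ via an explicit extension lemma, building the solution directly as $\subst(\x)=\ApplySubst{\expCaps{\x}{\mapEnv}}{\alpha}$; your case analysis (determined $\x$ via $\expCaps{\mapEnv(\x)}{\mapEnv}=\expCaps{\x}{\mapEnv}$, undetermined $\x$ via constancy of $\alpha$ on $\varEquiv{\mapEnv}{}{}$-classes) is exactly right and is where the renaming hypothesis is consumed. What your route buys is self-containedness — no appeal to Courcelle, only the already-given coinductive definitions of tree expansion and substitution — plus a reusable description of $\semCaps{\open}{\mapEnv}$ in terms of assignments on the quotient; what the paper's route buys is brevity, since the same Courcelle-style argument is reused in the proof of \refToLemma{strict-bisim}. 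One small point to make explicit if you write this up: the observation that every variable leaf of $\expCaps{\open_2}{\mapEnv_2}$ is matched by a $\varRel$-related variable leaf of $\expCaps{\open_1}{\mapEnv_1}$ (so that $\alpha_2$ is determined wherever it matters) deserves a one-line inversion argument on the rules defining $=_\varRel$, but it is immediate.
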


\begin{proposition}\label{prop:sem-eq}
If $\semCaps{\open_1}{\mapEnv_1} = \semCaps{\open_2}{\mapEnv_2}$ then 
\begin{enumerate}
\item\label{prop:sem-eq:1} if $\undef{\unf(\open_1,\mapEnv_1)}$ then $\undef{\unf(\open_2,\mapEnv_2)}$, 
\item\label{prop:sem-eq:2} if $\unf(\open_1,\mapEnv_1) = \NewExpr{\C}{\open_{1,1},\ldots,\open_{1,n}}$ then $\unf(\open_2,\mapEnv_2) = \NewExpr{\C}{\open_{2,1},\ldots,\open_{2,n}}$ and, for all $i \in 1..n$, $\semCaps{\open_{1,i}}{\mapEnv_1} = \semCaps{\open_{2,i}}{\mapEnv_2}$. 
\end{enumerate}
\end{proposition}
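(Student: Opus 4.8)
The key preliminary step is to pin down the semantics of an undetermined variable. I would first record that $\solutions(\mapEnv)$ is always nonempty and, more precisely, that the value of an undetermined variable in a solution is unconstrained: given any choice function $h$ on the quotient $\Quotient{\mapEnv}$, one obtains a solution $\subst$ by setting $\subst(\x)=h(\eqclass{\x})$ for $\x\in\Undetermined{\mapEnv}$ and $\subst(\x)=\ApplySubst{\expCaps{\x}{\mapEnv}}{\subst}$ for the remaining variables of $\dom{\mapEnv}$ — well defined since the tree expansion $\expCaps{\x}{\mapEnv}$ has free variables only in $\Undetermined{\mapEnv}$ — and one checks $\AppMap{\mapEnv}{\subst}=\subst$ by coinduction, using that $\mapEnv(\x)$ is necessarily a variable when $\undef{\unf(\x,\mapEnv)}$. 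Combined with \refToPropItem{solutions}{1} (an $\open$ with undefined unfolding is exactly an undetermined variable), this yields the observation: if $\undef{\unf(\open,\mapEnv)}$ then $\semCaps{\open}{\mapEnv}=\coFJaVals$.

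\emph{Part 1} then follows at once: if $\undef{\unf(\open_1,\mapEnv_1)}$ we get $\semCaps{\open_1}{\mapEnv_1}=\coFJaVals$, hence $\semCaps{\open_2}{\mapEnv_2}=\coFJaVals$; but if $\unf(\open_2,\mapEnv_2)=\NewExpr{\C}{\ldots}$ were defined then, by \refToPropItem{solutions}{4}, every element of $\semCaps{\open_2}{\mapEnv_2}$ would be an object with top constructor $\C$, which is impossible since $\coFJaVals$ contains objects with distinct top constructors (as it does for every class table declaring more than one object shape, in particular all those in the paper). Hence $\undef{\unf(\open_2,\mapEnv_2)}$.

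\emph{Part 2}: assume $\unf(\open_1,\mapEnv_1)=\NewExpr{\C}{\open_{1,1},\ldots,\open_{1,n}}$. Applying part 1 with the two capsules exchanged (legitimate, since $\semCaps{\open_2}{\mapEnv_2}=\semCaps{\open_1}{\mapEnv_1}$) excludes $\undef{\unf(\open_2,\mapEnv_2)}$, so $\unf(\open_2,\mapEnv_2)=\NewExpr{\C'}{\open_{2,1},\ldots,\open_{2,m}}$ for some $\C'$ and $m$. Take any $\subst_1\in\solutions(\mapEnv_1)$; since $\ApplySubst{\open_1}{\subst_1}\in\semCaps{\open_1}{\mapEnv_1}=\semCaps{\open_2}{\mapEnv_2}$, there is $\subst_2\in\solutions(\mapEnv_2)$ with $\ApplySubst{\open_1}{\subst_1}=\ApplySubst{\open_2}{\subst_2}$, and \refToPropItem{solutions}{4} applied on both sides rewrites this as $\NewExpr{\C}{\ApplySubst{\open_{1,1}}{\subst_1},\ldots,\ApplySubst{\open_{1,n}}{\subst_1}}=\NewExpr{\C'}{\ApplySubst{\open_{2,1}}{\subst_2},\ldots,\ApplySubst{\open_{2,m}}{\subst_2}}$, forcing $\C'=\C$ and $m=n$. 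For the componentwise equalities, fix $i\in 1..n$ and show $\semCaps{\open_{1,i}}{\mapEnv_1}\subseteq\semCaps{\open_{2,i}}{\mapEnv_2}$ (the converse being symmetric): if $v\in\semCaps{\open_{1,i}}{\mapEnv_1}$ is witnessed by $\subst_1'\in\solutions(\mapEnv_1)$, then $\ApplySubst{\open_1}{\subst_1'}=\NewExpr{\C}{\ApplySubst{\open_{1,1}}{\subst_1'},\ldots}$ lies in $\semCaps{\open_2}{\mapEnv_2}$, so it equals $\ApplySubst{\open_2}{\subst_2'}=\NewExpr{\C}{\ApplySubst{\open_{2,1}}{\subst_2'},\ldots}$ for some $\subst_2'\in\solutions(\mapEnv_2)$; comparing the $i$-th components gives $v=\ApplySubst{\open_{1,i}}{\subst_1'}=\ApplySubst{\open_{2,i}}{\subst_2'}\in\semCaps{\open_{2,i}}{\mapEnv_2}$.

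I expect the genuinely delicate point to be the preliminary step — constructing a solution out of the tree expansion and verifying coinductively that it is one, which is also where the standing hypotheses of \refToProp{solutions} get discharged; everything afterwards is a short chase through $\solutions$ and \refToPropItem{solutions}{4}. The only other thing to keep in mind is the mild non-degeneracy of the class table used in part 1.
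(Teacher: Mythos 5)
Your proof is correct and follows essentially the same route as the paper's: both reduce part~1 to the observation that an undetermined variable denotes all of $\coFJaVals$ while a defined unfolding forces a fixed top constructor and arity on every denoted value, and both prove part~2 by choosing a matching pair of solutions and applying \refToPropItem{solutions}{4} componentwise. The only notable differences are that you make explicit the construction of solutions (which the paper leaves implicit when asserting $\semCaps{\x}{\mapEnv_1}=\coFJaVals$ for undetermined $\x$), and that your part-1 contradiction appeals to a value with a \emph{different} top constructor---hence your non-degeneracy caveat on the class table---whereas the paper exhibits $\NewExpr{\C}{\val'_1,\ldots,\val'_{n+1}}$, the \emph{same} constructor applied to $n+1$ arguments, which exists for any class table because the value syntax does not constrain constructor arities.
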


\begin{lemma} \label{lemma:tree-sem-complete}
If $\semCaps{\open_1}{\mapEnv_1} {=} \semCaps{\open_2}{\mapEnv_2}$ then $\expCaps{\open_1}{\mapEnv_1} =_\varRel \expCaps{\open_2}{\mapEnv_2}$, for some $\mapEnv_1,\mapEnv_2$-renaming $\varRel$. 
\end{lemma}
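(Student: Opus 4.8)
The plan is to build a concrete $\mapEnv_1,\mapEnv_2$-renaming $\varRel$ out of the two tree expansions, and then verify the renaming axioms by \emph{probing} the capsules with suitable solutions. As a first step I would establish a \emph{shape agreement} between $\expCaps{\open_1}{\mapEnv_1}$ and $\expCaps{\open_2}{\mapEnv_2}$: for every address $p$ (a finite sequence of component indices), the subterm at $p$ is defined in the first tree iff it is defined in the second, and, when both are defined, they are either both variables (leaves) or both of shape $\NewExpr{\C}{\ldots}$ with the same $\C$ and the same arity. This follows by induction on the length of $p$, using \refToProp{sem-eq} — in both directions, via $\semCaps{\open_1}{\mapEnv_1}=\semCaps{\open_2}{\mapEnv_2}$ and its symmetric reading — at each step: part~(\ref{prop:sem-eq:1}) rules out ``leaf on one side, constructor on the other'', and part~(\ref{prop:sem-eq:2}) makes the denotational equality descend into the components. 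By \refToPropItem{solutions}{2}, every leaf of either tree is an undetermined variable.

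Now define $\x \mathrel{\varRel} y$ to hold exactly when there is an address $p$ at which $\expCaps{\open_1}{\mapEnv_1}$ has the leaf $\x$ and $\expCaps{\open_2}{\mapEnv_2}$ has the leaf $y$; by the above, $\varRel \subseteq \Undetermined{\mapEnv_1}\times\Undetermined{\mapEnv_2}$. The core step is to show that the induced relation on classes, $\{\,(\eqclass{\x},\eqclass{y}) \mid \x\mathrel{\varRel}y\,\}$, is a well-defined injective partial function from $\Quotient{\mapEnv_1}$ to $\Quotient{\mapEnv_2}$, i.e.\ that $\varRel$ is a $\mapEnv_1,\mapEnv_2$-renaming. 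For well-definedness, assume $\x\mathrel{\varRel}y$ and $\x'\mathrel{\varRel}y'$ with $\varEquiv{\mapEnv_1}{\x}{\x'}$, witnessed by addresses $p$ and $q$; I claim $\varEquiv{\mapEnv_2}{y}{y'}$. Indeed, for every $\subst\in\solutions(\mapEnv_1)$ we have $\ApplySubst{\x}{\subst}=\ApplySubst{\x'}{\subst}$ by \refToPropItem{solutions}{3}, and by \refToPropItem{solutions}{5} the subterms of $\ApplySubst{\open_1}{\subst}$ at $p$ and $q$ are precisely $\ApplySubst{\x}{\subst}$ and $\ApplySubst{\x'}{\subst}$; hence \emph{every} element of $\semCaps{\open_1}{\mapEnv_1}$ has equal subterms at $p$ and $q$, and so, since $\semCaps{\open_1}{\mapEnv_1}=\semCaps{\open_2}{\mapEnv_2}$, does every element of $\semCaps{\open_2}{\mapEnv_2}$. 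If $\eqclass{y}\neq\eqclass{y'}$ in $\Quotient{\mapEnv_2}$, we could choose a solution $\subst'$ of $\mapEnv_2$ assigning distinct abstract values to $\eqclass{y}$ and $\eqclass{y'}$ — using the straightforward fact that any assignment of abstract values to the classes of $\Quotient{\mapEnv}$ extends to a solution of $\mapEnv$ — and then $\ApplySubst{\open_2}{\subst'}$ would have \emph{distinct} subterms at $p$ and $q$, a contradiction. Taking $\x'=\x$ here gives the special case needed for $\varRel$ to induce a partial function on classes at all; injectivity is the mirror statement read through $\varRel\op$: since $\semCaps{\open_2}{\mapEnv_2}=\semCaps{\open_1}{\mapEnv_1}$, the very same argument with the two capsules swapped shows $\varRel\op$ is a $\mapEnv_2,\mapEnv_1$-renaming, which is exactly injectivity of the class-level relation of $\varRel$.

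Finally, $\expCaps{\open_1}{\mapEnv_1} =_\varRel \expCaps{\open_2}{\mapEnv_2}$ follows by coinduction on the two rules defining $=_\varRel$: at each address, shape agreement gives either the same constructor on both sides — apply the constructor rule and recurse on the components — or two leaves $\x$ and $y$, in which case $\x\mathrel{\varRel}y$ holds by the definition of $\varRel$ and we apply the variable rule. This is exactly the statement of the lemma; composed with \refToLemma{bisim-tree} it also recovers $\bisim[\varRel]{\Caps{\open_1}{\mapEnv_1}}{\Caps{\open_2}{\mapEnv_2}}$, i.e.\ the missing direction of \refToTheorem{eq-sem-caps}.

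The main obstacle is the probing argument in the third paragraph: turning ``all solutions agree at addresses $p$ and $q$'' into ``$\varEquiv{\mapEnv_2}{y}{y'}$''. This relies on the auxiliary observation that distinct classes of $\Quotient{\mapEnv}$ can be instantiated to genuinely independent abstract values (so that a disagreement of classes is witnessed by an actual solution), together with the bookkeeping — mediated by \refToProp{solutions} — relating addresses in a tree expansion to subterms of the objects obtained by applying a solution. Establishing the shape agreement rigorously (the coinduction/induction on addresses, handling the ``$\unf$ undefined'' leaves) is routine but still needs some care.
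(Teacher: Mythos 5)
Your proposal is correct and follows essentially the same route as the paper's proof: an induction on tree addresses via \refToProp{sem-eq} to establish shape agreement, the definition of $\varRel$ as the pairs of variables occurring at a common leaf address, verification of the renaming property by exhibiting a solution that separates two supposedly distinct classes and deriving a contradiction from the equality of denotations (with injectivity by the symmetric argument through $\varRel\op$), and a final coinduction to conclude $\expCaps{\open_1}{\mapEnv_1} =_\varRel \expCaps{\open_2}{\mapEnv_2}$. The auxiliary fact you flag --- that any assignment of values to the classes of $\Quotient{\mapEnv}$ extends to a solution --- is likewise assumed without proof in the paper.
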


\begin{proofOf}{\refToTheorem{eq-sem-caps}} 
The right-to-left direction follows from \refToLemma{bisim-tree} and \refToLemma{tree-sem-sound}, 
while the other direction follows from \refToLemma{tree-sem-complete} and \refToLemma{bisim-tree}. 
\end{proofOf}

Since by definition $\bisim{}{}$ is equal to $\bisim[\varRel]{}{}$ for some $\varRel$, applying \refToLemma{bisim-tree} and \refToLemma{tree-sem-sound} we get that if $\bisim{\Caps{\open_1}{\mapEnv_1}}{\Caps{\open_2}{\mapEnv_2}}$ then $\semCaps{\open_1}{\mapEnv_1} = \semCaps{\open_2}{\mapEnv_2}$. 
Actually we can prove a stronger result: 

\begin{lemma} \label{lemma:strict-bisim}
If $\bisim[\varRel]{\Caps{\open_1}{\mapEnv_1}}{\Caps{\open_2}{\mapEnv_2}}$ for some strict $\mapEnv_1,\mapEnv_2$-renaming $\varRel$, then, 
for each solution $\subst \in \solutions(\mapEnv_1\cap\mapEnv_2)$, there are $\subst_1 \in \solutions(\mapEnv_1)$ and $\subst_2\in\solutions(\mapEnv_2)$ such that
$\ApplySubst{\open_1}{\subst_1} = \ApplySubst{\open_2}{\subst_2}$ and, 
for all $\x \in \dom{\mapEnv_1\cap\mapEnv_2}$, $\subst_1(\x) = \subst(\x) = \subst_2(\x)$. 
\end{lemma}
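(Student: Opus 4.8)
The plan is to translate the hypothesis into a statement about tree expansions, build the two solutions by choosing abstract values for the undetermined variables \emph{one $\varEquiv{}{}{}$-class at a time}, and then verify the three requirements using \refToProp{solutions}. First I would invoke \refToLemma{bisim-tree}: from $\bisim[\varRel]{\Caps{\open_1}{\mapEnv_1}}{\Caps{\open_2}{\mapEnv_2}}$ we get $\expCaps{\open_1}{\mapEnv_1} =_\varRel \expCaps{\open_2}{\mapEnv_2}$, and from the first rule defining $\bisim[\varRel]{}{}$ also $\bisim[\varRel]{\Caps{\x}{\mapEnv_1}}{\Caps{y}{\mapEnv_2}}$, hence $\expCaps{\x}{\mapEnv_1} =_\varRel \expCaps{y}{\mapEnv_2}$, for every $\x \varRel y$. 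Throughout I would use that $\FV{\expCaps{\open}{\mapEnv}} \subseteq \Undetermined{\mapEnv}$ (\refToPropItem{solutions}{2}) and that $\ApplySubst{\open}{\subst} = \ApplySubst{\expCaps{\open}{\mapEnv}}{\subst}$ for $\subst \in \solutions(\mapEnv)$ (\refToPropItem{solutions}{5}), together with \refToPropItem{solutions}{4} to see that ``substituting into tree expansions'' does produce a solution.

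Write $\mapEnv_\cap = \mapEnv_1 \cap \mapEnv_2$ and fix $\subst \in \solutions(\mapEnv_\cap)$. I would define $\subst_1$ (and symmetrically $\subst_2$) in two stages. First, assign an abstract value to each $\varEquiv{\mapEnv_1}{}{}$-class of $\Undetermined{\mapEnv_1}$: if the class contains some $\x \in \dom{\mapEnv_\cap}$, take $\subst(\x)$; otherwise, if the bijection $\varRel\colon\Quotient{\mapEnv_1}\to\Quotient{\mapEnv_2}$ matches the class to one already assigned a value on the $\mapEnv_2$ side, reuse that value; in all remaining cases pick an arbitrary abstract value. The two constructions are carried out simultaneously so that $\varRel$-matched classes always receive the \emph{same} value. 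Second, extend to a full substitution by setting $\subst_1(\x) = \ApplySubst{\expCaps{\x}{\mapEnv_1}}{\subst_1}$ on the determined variables of $\mapEnv_1$ (a sound definition by \refToPropItem{solutions}{2}, yielding $\subst_1 \in \solutions(\mapEnv_1)$ by \refToPropItem{solutions}{4} and \refToPropItem{solutions}{5}) and $\subst_1(\x) = \subst(\x)$ for $\x \notin \dom{\mapEnv_1}$.

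It then remains to check that (i) $\subst_1$ and $\subst_2$ agree with $\subst$ on $\dom{\mapEnv_\cap}$ and (ii) $\ApplySubst{\open_1}{\subst_1} = \ApplySubst{\open_2}{\subst_2}$. For (i), agreement on the variables of $\dom{\mapEnv_\cap}$ that are undetermined in $\mapEnv_\cap$ is built into the construction, and it propagates to the other variables of $\dom{\mapEnv_\cap}$ because $\subst$ solves $\mapEnv_\cap$ and $\subst_1$ solves $\mapEnv_1 \supseteq \mapEnv_\cap$, so both are obtained by following the same chains out of any $\x \in \dom{\mapEnv_\cap}$; a short induction on $\unf$ closes this. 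For (ii), by \refToPropItem{solutions}{5} it suffices to compare $\expCaps{\open_1}{\mapEnv_1}$ and $\expCaps{\open_2}{\mapEnv_2}$ after substitution, and since $\expCaps{\open_1}{\mapEnv_1} =_\varRel \expCaps{\open_2}{\mapEnv_2}$ while, by construction, $\subst_1(\x) = \subst_2(y)$ whenever $\x \varRel y$, a straightforward coinduction on the structure of the two infinite objects gives the equality.

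The main obstacle is the well-definedness of the class-value assignment in the second step: a single $\varEquiv{\mapEnv_1}{}{}$-class may meet $\dom{\mapEnv_\cap}$ in several variables, and through $\varRel$ it is tied to an $\mapEnv_2$-class which may meet $\dom{\mapEnv_\cap}$ in still other variables, so one must show all of them carry the same $\subst$-value, and that ``forced'' and ``free'' choices never clash across $\varRel$. This is precisely where \emph{strictness} of $\varRel$ is used: for $\x, y \in \Undetermined{\mapEnv_1} \cap \Undetermined{\mapEnv_2}$, $\eqclass{\x} \varRel \eqclass{y}$ forces $\varEquiv{\mapEnv_1}{\x}{y}$ and $\varEquiv{\mapEnv_2}{\x}{y}$, whence $\subst(\x) = \subst(y)$ because $\subst \in \solutions(\mapEnv_\cap)$ identifies any two variables linked by a chain of $\dom{\mapEnv_\cap}$-edges (with \refToPropItem{solutions}{3} covering the pure-cycle case); conversely strictness prevents a class from being forced to two distinct $\subst$-values through the bijection. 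Carrying out this bookkeeping carefully, and keeping the choices on $\mapEnv_1$ and $\mapEnv_2$ mutually consistent, is the crux of the argument.
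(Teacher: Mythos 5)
Your proposal is correct and follows essentially the same route as the paper's proof: fix $\subst$ on $\dom{\mapEnv_1\cap\mapEnv_2}$, transport values between the $\varEquiv{}{}{}$-classes of undetermined variables along $\varRel$ (with strictness doing exactly the consistency work you identify), fill in the determined variables from the environments, and conclude $\ApplySubst{\open_1}{\subst_1}=\ApplySubst{\open_2}{\subst_2}$ by coinduction. The only differences are presentational: the paper builds $\subst_2$ first and then $\subst_1$ as solutions of two regular systems of equations (invoking Courcelle for existence), whereas you assign values class-by-class simultaneously and extend via tree expansions and \refToProp{solutions}.
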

}

\smallskip
\noindent\textbf{Soundness statement}
We can now formally state the soundness result:
\begin{theorem} \label{theo:soundness-main}
If $\opsem{\E}{\emptyset}{\emptyset}{\open}{\mapEnv}$, then, for all $\val \in \semCaps{\open}{\mapEnv}$, $\valid{\isFJ}{\coisFJ}{\eval{\E}{\val}}$.
\end{theorem}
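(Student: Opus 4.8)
The plan is to factor soundness through an intermediate big-step semantics on capsules, with judgment $\intsem{\E}{\mapEnv}{\callEnv}{\open}{\mapEnv'}$, which has the proof-theoretic shape of the abstract semantics — an inference system with corules — but still keeps cyclic data symbolic via environments. Concretely, I would obtain it from the operational rules \refToRule{val}, \refToRule{field}, \refToRule{new}, \refToRule{invk-ok} by dropping the freshness and bookkeeping side conditions, and add two corules mirroring \refToRule{abs-co-val} and \refToRule{abs-co-invk} but acting on capsules; in this system the loop-closing done operationally by \refToRule{corec} becomes a use of the capsule-level co-invk corule, and \refToRule{look-up} is absorbed. Soundness then splits as: (1)~$\opsem{\E}{\mapEnv}{\callEnv}{\open}{\mapEnv'}$ implies $\intsem{\E}{\mapEnv}{\callEnv}{\open}{\mapEnv'}$ (up to the environment and call-trace extensions the operational rules perform); and (2)~$\intsem{\E}{\emptyset}{\emptyset}{\open}{\mapEnv}$ implies $\valid{\isFJ}{\coisFJ}{\eval{\E}{\val}}$ for every $\val\in\semCaps{\open}{\mapEnv}$. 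Composing the two, and noting that $\semCaps{\open}{\mapEnv}$ is exactly the set of values to be covered, yields the theorem.

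Step~(1) I would prove by induction on the operational derivation. The structural rules, and \refToRule{invk-ok} with $\x\notin\dom{\mapEnvPrime}$ (no cycle), map directly. The crux is \refToRule{invk-check}/\refToRule{corec}: the first occurrence of a call $\call$ allocates a fresh $\x$; a later \refToRule{corec} closes the cycle by evaluating the codefinition with $\Any\mapsto\x$; and the third premise of \refToRule{invk-check} re-evaluates the body assuming $\x$ is the result and checks that the result is again (equivalent to) $\x$. In the intermediate semantics I would \emph{unroll} this finite regular derivation into a regular infinite tree — each \refToRule{corec} node replaced by a fresh copy of the intermediate subderivation for the first occurrence of $\call$ — and give the node $\eval{\call}{\x}$ (in the appropriate environment) its finite proof-with-corules by one application of the capsule co-invk corule, which is exactly what \refToRule{corec} computed. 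Here I would appeal to the remark in \refToSection{corules} that for regular proof trees it suffices to exhibit a finite proof-with-corules on one node per infinite path, using the checking premise of \refToRule{invk-check} precisely to justify that the value threaded around the cycle is a fixed point of the body.

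Step~(2) I would prove by instantiating capsules with a solution. Fix $\subst\in\solutions(\mapEnv)$ with $\val=\ApplySubst{\open}{\subst}$. Applying $\subst$ node-wise to the intermediate derivation should yield a valid (possibly infinite) proof tree in $\isFJ$ for $\eval{\E}{\val}$: $\FJ$-substitution commutes with the structural rules, and \refToProp{solutions} lets each capsule-value $\open_i$ occurring in the tree be replaced by the abstract value $\ApplySubst{\open_i}{\subst}=\ApplySubst{\expCaps{\open_i}{\mapEnv}}{\subst}$ consistently, since $\subst$ is a solution. The finite proofs-with-corules transfer along $\subst$ as well, turning capsule co-val/co-invk witnesses into \refToRule{abs-co-val}/\refToRule{abs-co-invk} witnesses; where the operational check only gave an ``up-to-equivalence'' conclusion, I would use \refToLemma{strict-bisim} (and \refToTheorem{eq-sem-caps}) to transport capsule equivalences into the required equalities of abstract values. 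Since the argument about the infinite tree is coinductive, I would set it up as a genuine coinduction, e.g.\ via bounded coinduction (\refToTheorem{bcoind}) on the set of judgments obtained by substitution from nodes of intermediate derivations.

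I expect the main obstacle to be step~(1): the environment grows and is tagged as evaluation proceeds, ``already encountered'' is tested up to capsule equivalence in the current environment, and making the unrolling into a genuinely regular infinite tree precise — so that the copies of a subderivation really agree, the finite proof-with-corules sits on the right node, and the environment the intermediate derivation produces is exactly the one $\subst$ must solve — is delicate. A secondary point is that stuckness (the last side condition of \refToRule{invk-check} failing) must be threaded through the induction: since we only claim soundness of derivable judgments it does not endanger the statement, but the induction must record that the check succeeded, as that is what licenses the corule application in the intermediate tree.
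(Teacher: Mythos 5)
Your two-step factorization through an intermediate semantics is the same overall architecture the paper uses, but the decomposition is transposed. The paper's intermediate judgment $\intsem{\E}{\resEnv}{\callSet}{\val}{\callSetRes}$ (\refToFigure{coFJIntSem}) already lives on \emph{abstract} (possibly infinite) values: the substitution $\subst$ is applied in the first step (\refToTheorem{concrToInt}), while proof trees stay \emph{finite} and cycles are still detected explicitly, via a set $\callSet$ of pending abstract calls and a map $\resEnv$ recording the value assumed during the checking step; only the second step (\refToTheorem{intToAbs}), proved by bounded coinduction using \refToLemma{boundedness}, \refToLemma{callSet} and \refToLemma{rho}, trades the finite call-detecting derivation for the infinite-tree-plus-corules format. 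You go the other way: first finite-to-corules on capsules, then substitution. Both orders exploit the orthogonality the paper itself points out, but the paper's choice buys a simpler first step --- a plain induction from a finite derivation to a finite derivation, where the capsule-equivalence machinery (\refToTheorem{eq-sem-caps}, \refToCor{strict-bisim}) is needed only to discharge the up-to-equivalence side conditions under $\subst$ --- and confines all coinductive reasoning to the second step, after the environment and call-trace bookkeeping have been dissolved. Your order front-loads the delicate part into step~(1), which must simultaneously manage growing environments, equivalence-based call detection, and the construction of a regular infinite tree; you rightly identify this as the main obstacle. One point you must make precise there: the infinite tree for a checked call has to be obtained by unrolling the derivation of the \emph{third} premise of \refToRule{invk-check} (where the cyclic occurrence returns the final value via \refToRule{look-up}), not the second premise (where \refToRule{corec} makes it return the codefinition's value, which is in general different --- e.g.\ $2$ versus $1$ for \texttt{min} on $[2,1]^\omega$). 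Your phrasing ``each \refToRule{corec} node replaced by a copy of the subderivation for the first occurrence'' blurs exactly this distinction, although your appeal to the checking premise as the fixed-point witness shows you have the right ingredient in hand.
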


This main result is about the evaluation of source expressions, \EZ{hence} both the environment and the call trace are empty. 
To carry out the proof we need to generalize the statement.\EZComm{cambiato un po' l'ordine}
\begin{theorem}[\PB{Soundness}]\label{theo:soundness}
If  $\opsem{\E}{\mapEnv}{\emptyset}{\open}{\mapEnvPrime}$, then, for all $\subst\in \solutions(\mapEnvPrime)$, $\valid{\isFJ}{\coisFJ}{\eval{\ApplySubst{\E}{\subst}}{\ApplySubst{\open}{\subst}}}$. 
\end{theorem}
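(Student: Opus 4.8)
The key obstacle to overcome is that a \emph{finite} operational derivation in general witnesses an \emph{infinite} abstract proof tree: the regular-corecursion bookkeeping (the call trace together with rule \refToRule{corec}) folds cycles that the abstract semantics, which has no call trace, must unroll forever. Hence a plain structural induction on the operational derivation cannot build the abstract derivation directly, and the natural instrument is the bounded coinduction principle \refToTheorem{bcoind} (here used, from the abstract semantics' viewpoint, to prove \emph{completeness} w.r.t.\ the specification ``operational results'', which is exactly our soundness statement). A second complication is that when $\mapEnvPrime\neq\emptyset$ the capsule $\Caps{\open}{\mapEnvPrime}$ keeps free variables, each standing for an arbitrary abstract object consistent with $\mapEnvPrime$ --- this is precisely why the statement must quantify over all $\subst\in\solutions(\mapEnvPrime)$. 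I would therefore factor the argument through the intermediate semantics $\intsem{\cdot}{\cdot}{\cdot}{\cdot}{\cdot}$, proving two lemmas: (1) every operational derivation induces an intermediate one, which isolates the task of \emph{unfolding} regular corecursion into the infinitary derivations of the corules framework; and (2) every intermediate derivation, read under a solution $\subst$, yields an abstract derivation, which isolates the task of moving from capsules to abstract objects.

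\emph{From operational to intermediate.} I would prove this by induction on the operational derivation, generalising to arbitrary call traces $\callEnv$ and carrying an invariant relating $\callEnv$ to the surrounding context. The cases \refToRule{val}, \refToRule{field}, \refToRule{new} are immediate, and \refToRule{invk-ok} is translated into the intermediate analogue of \refToRule{abs-invk} on the method \emph{body}, with the three sub-derivations as premises. The delicate cases touch the call trace. An application of \refToRule{corec} corresponds to the point where an infinite branch is ``tied back'': the call $\call$ was first met higher up at an \refToRule{invk-ok}/\refToRule{invk-check} step, and the codefinition there supplies, via the intermediate corule, the finite proof that \refToTheorem{bcoind} will demand for that node. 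For \refToRule{invk-check}, its third premise (the check) is precisely what certifies that re-evaluating the body under the hypothesis ``$\call$ returns $\open$'' reproduces $\open$, i.e.\ that the folded result is a genuine fixed point of the body rule; without it a folded result need not unroll into any infinite body-only tree (cf.\ the spurious $3$ for \sumAll{} and $[0]$ for \texttt{remPos} of \refToSection{examples}). Since cycle detection is performed up to capsule equivalence via $\uptobisim{\callEnv}{\cdot}$, I would appeal to \refToTheorem{eq-sem-caps} to know that equivalent capsules denote the same abstract objects, so that this identification is semantically sound.

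\emph{From intermediate to abstract.} Here I would apply bounded coinduction to
\[ \Spec=\{\,\eval{\ApplySubst{\E}{\subst}}{\ApplySubst{\open}{\subst}}\mid\intsem{\E}{\mapEnv}{\callEnv}{\open}{\mapEnvPrime},\ \subst\in\solutions(\mapEnvPrime)\,\} \]
Condition~(2) of \refToTheorem{bcoind}, $\Spec\subseteq\Op{\isFJ}(\Spec)$, is discharged by case analysis on the last rule of the intermediate derivation: since that semantics shares the rule shapes of the abstract one, each case maps to the matching instance of \refToRule{abs-field}, \refToRule{abs-new} or \refToRule{abs-invk} with premises again in $\Spec$, using \refToPropItem{solutions}{4} and \refToPropItem{solutions}{5} to commute $\unf$ with the application of $\subst$, and \refToLemma{strict-bisim} to reconcile the solutions of the environments produced by sub-derivations that must agree on shared variables (rules \refToRule{new} and method invocation). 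Condition~(1), $\Spec\subseteq\Ind{\isFJ\cup\coisFJ}$, follows from the finite-proof property established in the first step, translated node-wise: an occurrence of the intermediate corule becomes an application of \refToRule{abs-co-invk} or \refToRule{abs-co-val}, yielding a finite $\isFJ\cup\coisFJ$ proof of each member of $\Spec$. Finally, \refToTheorem{soundness-main} is the special case $\mapEnv=\emptyset$, $\callEnv=\emptyset$.

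\emph{Main obstacle.} I expect the hard part to be the \refToRule{invk-check}/\refToRule{corec} interplay in the first lemma --- proving that the capsule returned operationally is the fold of a genuine \emph{infinite} abstract tree, not merely some value with a finite corule-proof. This is where the check premise must be combined with \refToLemma{strict-bisim}, so as to turn the equivalence $\bisim{\Caps{\x}{\cdot}}{\Caps{\open'}{\cdot}}$ appearing in the side condition of \refToRule{invk-check} into an actual equality of abstract objects under compatible solutions, and where the call-trace invariant must be stated just strongly enough --- neither too weak to support the induction, nor so strong as to fail at \refToRule{corec} --- for everything to go through.
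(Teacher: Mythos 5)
Your overall architecture is exactly the paper's: factor through the intermediate semantics, prove the operational-to-intermediate step by induction on the operational derivation with a sandwich invariant on the produced call set and an appeal to \refToLemma{strict-bisim}/\refToCor{strict-bisim} for the capsule-equivalence side conditions, and prove the intermediate-to-abstract step by bounded coinduction. (One labelling quibble: the paper's intermediate semantics already works on abstract values, so the capsule-to-object translation lives in the \emph{first} step, \refToTheorem{concrToInt}, and the finite-to-infinite proof-tree translation in the second, \refToTheorem{intToAbs} --- the reverse of how you assign the two sub-tasks.)

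There is, however, a genuine gap in your second step: the specification you feed to bounded coinduction is too large. You take $\Spec$ to range over intermediate judgments with \emph{arbitrary} call-tracking state, but then the consistency condition $\Spec\subseteq\Op{\isFJ}(\Spec)$ fails (or at least cannot be discharged by the case analysis you describe) for elements whose intermediate derivation ends in the corec or look-up rule. Such a judgment is obtained by evaluating the \emph{codefinition} (or by reading a value off $\resEnv$), and $\isFJ$ contains no rule with that conclusion whose premises involve the codefinition --- only \refToRule{abs-invk} on the method body. Showing that the body also evaluates to the same value is essentially the whole theorem, so the coinduction does not close. The paper avoids this by restricting the specification to $A=\{\Pair{\E}{\val}\mid\intsem{\E}{\emptyset}{\emptyset}{\val}{\emptyset}\}$, which makes the \refToRule{IN-corec} and \refToRule{IN-look-up} cases vacuous; the price is that the premises produced in the invocation cases live in a singleton call set or an extended $\resEnv$, and one needs dedicated contraction lemmas (\refToLemma{callSet}, to drop a call from $\callSet$ when it does not occur in $\callSetRes$, and \refToLemma{rho}, to discharge an entry of $\resEnv$ that is itself derivable) to bring them back into $A$. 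Your sketch mentions neither the restriction nor these lemmas, and the sentence ``each case maps to the matching instance \ldots{} with premises again in $\Spec$'' glosses over precisely this point; similarly, boundedness does not ``follow from the first step'' but needs its own induction on the intermediate derivation (\refToLemma{boundedness}).
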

\EZ{To show that this is actually a generalization, set $\mapEnv_1\leq\mapEnv_2$ if $\dom{\mapEnv_1}\subseteq\dom{\mapEnv_2}$, and, for all $\x\in\dom{\mapEnv_1}$, $\mapEnv_1(\x)=\mapEnv_2(\x)$. We use the following lemmas.}
\begin{lemma} \label{lemma:env-leq-sol}
If $\mapEnv_1\leq\mapEnv_2$, then $\solutions(\mapEnv_2) \subseteq \solutions(\mapEnv_1)$. 
\end{lemma}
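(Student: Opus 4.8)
The plan is to unfold the definition of $\solutions$ and verify the fixed-point equation pointwise. Fix an arbitrary $\subst \in \solutions(\mapEnv_2)$, so that $\AppMap{\mapEnv_2}{\subst} = \subst$ as functions on variables; the goal is to derive $\AppMap{\mapEnv_1}{\subst} = \subst$, that is, $\AppMap{\mapEnv_1}{\subst}(\x) = \subst(\x)$ for every variable $\x$, which by definition gives $\subst \in \solutions(\mapEnv_1)$.

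First I would split on whether $\x \in \dom{\mapEnv_1}$. If $\x \notin \dom{\mapEnv_1}$, the definition of $\AppMap{\mapEnv_1}{\subst}$ gives directly $\AppMap{\mapEnv_1}{\subst}(\x) = \subst(\x)$, with nothing to prove. If $\x \in \dom{\mapEnv_1}$, then $\mapEnv_1 \leq \mapEnv_2$ yields both $\x \in \dom{\mapEnv_2}$ and $\mapEnv_1(\x) = \mapEnv_2(\x)$; hence $\AppMap{\mapEnv_1}{\subst}(\x) = \ApplySubst{\mapEnv_1(\x)}{\subst} = \ApplySubst{\mapEnv_2(\x)}{\subst} = \AppMap{\mapEnv_2}{\subst}(\x) = \subst(\x)$, where the last step is the assumption $\subst \in \solutions(\mapEnv_2)$. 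Combining the two cases, $\AppMap{\mapEnv_1}{\subst} = \subst$, as required.

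I do not expect any real obstacle: the statement is essentially immediate from the definitions of $\leq$ on environments and of the operation $\AppMap{\mapEnv}{\subst}$. The only point requiring a little care is keeping the case analysis on membership in $\dom{\mapEnv_1}$ aligned with the clause in the definition of $\AppMap{\mapEnv}{\subst}$ that returns $\subst(\x)$ for $\x \notin \dom{\mapEnv}$, so that the two environments are compared only on variables where $\mapEnv_1$ is actually defined.
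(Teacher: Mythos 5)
Your proof is correct and follows essentially the same route as the paper's: unfold the definition of solution, use $\mapEnv_1(\x)=\mapEnv_2(\x)$ on $\dom{\mapEnv_1}$, and note that $\AppMap{\mapEnv_1}{\subst}$ agrees with $\subst$ off $\dom{\mapEnv_1}$ by definition. The explicit case split on $\x\in\dom{\mapEnv_1}$ is a minor (and welcome) addition of detail over the paper's version.
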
 
\EZComm{cut: \begin{proof} 
Let $\subst \in \solutions(\mapEnv_2)$. 
Since $\mapEnv_1\leq\mapEnv_2$, we have that, for all $\x \in \dom{\mapEnv_1}$, $\mapEnv_1(\x)=\mapEnv_2(\x)$. 
Hence, $\AppMap{\mapEnv_1}{\subst}(\x) = \ApplySubst{\mapEnv_1(\x)}{\subst}=\ApplySubst{\mapEnv_2(\x)}{\subst} = \AppMap{\mapEnv_2}{\subst}(\x) = \subst(\x)$.
Thus we get $\subst \in \solutions(\mapEnv_2)$. 
\end{proof}}

\begin{lemma} \label{lemma:env_contained}
If $\opsem{\E}{\mapEnv}{\callEnv}{\open}{\mapEnvPrime}$, then $\mapEnv\leq\mapEnvPrime$. 
\end{lemma}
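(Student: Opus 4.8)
The plan is to prove the statement by a routine induction on the derivation of $\opsem{\E}{\mapEnv}{\callEnv}{\open}{\mapEnvPrime}$, analysing the last rule applied. Before starting I would record two elementary facts: $\leq$ is transitive, and if $\mapEnv\leq\mapEnv_i$ for every $i$ and $\bigmapUnion{i}{\mapEnv_i}$ is defined, then $\mapEnv\leq\bigmapUnion{i}{\mapEnv_i}$, since the union agrees with each $\mapEnv_i$ on its domain and hence with $\mapEnv$ on $\dom{\mapEnv}$.

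The easy cases are \refToRule{val}, where the output environment is literally $\mapEnv$; \refToRule{field}, which is exactly the inductive hypothesis on its single premise; \refToRule{new}, where the inductive hypothesis gives $\mapEnv\leq\mapEnv'_i$ for every argument and the union fact yields $\mapEnv\leq\bigmapUnion{i\in 1..n}{\mapEnv'_i}$; and \refToRule{look-up}, whose output $\mapEnvU=\bigmapUnion{i\in 0..n}{\mapEnv'_i}$ is handled the same way. For \refToRule{invk-ok} I would first obtain $\mapEnv\leq\mapEnvU$ from the argument premises as above, then $\mapEnvU\leq\mapEnvPrime$ from the inductive hypothesis on the body premise (evaluated in $\mapEnvU$), and conclude by transitivity.

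The two remaining rules, \refToRule{invk-check} and \refToRule{corec}, need a little more care, since their output environments are $\mapEnvPrime$ updated at the trace variable $\x$. In \refToRule{invk-check} the inductive hypothesis on the argument and body premises gives $\mapEnv\leq\mapEnvU\leq\mapEnvPrime$; moreover $\x$ is declared fresh, so $\x\notin\dom{\mapEnv}$, and updating $\mapEnvPrime$ at $\x$ leaves all values on $\dom{\mapEnv}$ unchanged, giving $\mapEnv\leq\UpdateEnv{\mapEnvPrime}{\x}{\open}$. In \refToRule{corec} the body premise is evaluated in $\UpdateEnv{\mapEnvU}{\x}{\x}$, so the inductive hypothesis gives $\UpdateEnv{\mapEnvU}{\x}{\x}\leq\mapEnvPrime$; in particular $\mapEnvPrime(\x)=\x$, hence the output $\UpdateEnv{\mapEnvPrime}{\x}{\x}$ coincides with $\mapEnvPrime$. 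It then suffices to show $\mapEnv\leq\UpdateEnv{\mapEnvU}{\x}{\x}$, which — since $\mapEnv\leq\mapEnvU$ — reduces to the claim that either $\x\notin\dom{\mapEnv}$ or $\x\in\dom{\mapEnv}$ with $\mapEnv(\x)=\x$.

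I expect this last point to be the real obstacle. Unlike in \refToRule{invk-check}, the variable $\x$ in \refToRule{corec} is not fresh — it is read back from the call trace — so it may genuinely occur in $\mapEnv$. To close the case I would establish, by a parallel induction on derivations, the auxiliary invariant that in every judgement $\opsem{\E}{\mapEnv}{\callEnv}{\open}{\mapEnvPrime}$ arising in an evaluation, any variable lying both in the codomain of $\callEnv$ and in $\dom{\mapEnv}$ (resp.\ $\dom{\mapEnvPrime}$) is mapped to itself: a trace variable enters the environment only through rule \refToRule{corec}, which inserts the binding $\x\!:\!\x$, while the fresh variable introduced by an \refToRule{invk-check} cannot be re-bound to anything else as long as the corresponding call is still recorded in the trace (this is exactly where the side condition $\call\notin\dom{\uptobisim{\callEnv}{\mapEnvU}}$ of \refToRule{invk-ok} and \refToRule{invk-check} is used). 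Granting this invariant, $\mapEnv(\x)=\x$ whenever $\x\in\dom{\mapEnv}$ in the \refToRule{corec} case, and the induction goes through; all the other steps are routine bookkeeping about domains and the well-definedness of $\sqcup$.
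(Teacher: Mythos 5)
Your proposal is correct and follows essentially the same route as the paper: a rule-by-rule induction on the derivation, using transitivity of $\leq$ and compatibility with $\sqcup$ for the easy cases. The one substantive difference is that you are more careful than the paper in the \refToRule{corec} case: the paper simply asserts $\mapEnv\leq\UpdateEnv{\mapEnvU}{\x}{\x}$ without explaining why a trace variable $\x$ already in $\dom{\mapEnv}$ must be bound to itself, whereas you correctly identify that this needs the auxiliary invariant that trace variables enter the environment only through \refToRule{corec} (with binding $\x\!:\!\x$) or as fresh variables in \refToRule{invk-check}.
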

In the statement of \refToTheorem{soundness}, thanks to \refToLemma{env_contained}, we know that $\mapEnv \leq \mapEnvPrime$, hence, by \refToLemma{env-leq-sol}, 
$\subst \in \solutions(\mapEnv)$, thus $\ApplySubst{\E}{\subst} \in \semCaps{\E}{\mapEnv}$. 
\refToTheorem{soundness} implies \refToTheorem{soundness-main}, since, when $\mapEnv=\emptyset$, $\E$ is closed, hence $\ApplySubst{\E}{\subst} = \E$, and all elements in $\semCaps{\open}{\mapEnvPrime}$ have shape $\ApplySubst{\open}{\subst}$ with $\subst \in \solutions(\mapEnvPrime)$. 

\smallskip
\noindent\textbf{Proof through intermediate semantics} 
In order to prove \refToTheorem{soundness}, we introduce a new semantics called \emph{intermediate}\EZ{, defined in  \refToFigure{coFJIntSem}. Values are those of the abstract semantics, hence calls are of shape $\MethodCall{\val}{\m}{\vBar}$ (\emph{abstract} calls). 
The judgment has shape $\intsem{\E}{\resEnv}{\callSet}{\val}{\callSetRes}$, with $\callSet,\callSetRes$ sets of abstract calls, $\resEnv$ map from abstract calls to values. Comparing with $\opsem{\E}{\mapEnv}{\callEnv}{\open}{\mapEnvPrime}$ in the operational semantics, no variables are introduced for calls;
$\resEnv$ and $\callSet$ play the role of the $\checklabel$ and non $\checklabel$ part of  $\callEnv$, respectively, keeping trace of already encountered calls. Moreover, $\resEnv$ directly associates to a call its value to be used in the checking step, which in $\mapEnv$ is associated to the corresponding variable. Finally, $\callSetRes$ plays the role of $\mapEnvPrime$, tracing the calls for which the codefinition has been evaluated, hence the checking step will be needed. This correspondence is made precise below.
The rules are analogous to those of \refToFigure{new-sem}, with the difference that, for an already encountered call $\call\in\callSet$, either rule \refToRule{IN-invk-ok} or rule \refToRule{IN-corec} can be applied. In other words, evaluation of the codefinition is not necessarily triggered when the \emph{first} cycle is detected. This non-determinism makes the relation with the abstract semantics simpler. }
\EZComm{congettura: la semantica intermedia dovrebbe essere anche completa rispetto a quella astratta limitata ad alberi regolari, vedi conclusioni}

\begin{figure}
\begin{grammatica}
\produzioneCo{\val\in\coFJaVals}{\NewExpr{\C}{\vBar}}{possibly infinite object}\\
\produzione{\call}{\MethodCall{\val}{\m}{\vBar}}{abstract call}\\
\produzione{\callSet}{\call_1\ \ldots\ \call_n \Space (n\geq0)}{set of \EZ{abstract} calls}\\
\produzione{\resEnv}{\call_1:\val_1 \ldots \call_n:\val_n \Space (n\geq0)}{}\\
\end{grammatica}
\\[2ex]

\hrule 

\[\begin{array}{l}
\NamedRule{IN-val}{}{\intsem{\val}{\resEnv}{\callSet}{\val}{\emptyset}}{}  
\BigSpace
\NamedRule{IN-field}{
  \intsem{\E}{\resEnv}{\callSet}{\val}{\callSetRes}
}{\intsem{\FieldAccess{\E}{\f}}{\resEnv}{\callSet}{\val_i}{\callSetRes} }
{\val=\NewExpr{\C}{\val_1,\ldots,\val_n}\\
\fields(\C)=\f_1...\f_n\\
\f=\f_i, i\in 1..n}
\\[6ex]
\NamedRule{IN-new}{
  \intsem{\E_i}{\resEnv}{\callSet}{\val_i}{\callSet'_i}\Space  \forall i \in 1..n
}{ \intsem{\NewExpr{\C}{\E_1,\ldots,\E_n}}{\resEnv}{\callSet}{\NewExpr{\C}{\val_1,\ldots,\val_n}}{\bigcup_{i \in 1..n}{\callSetRes_i}}}
{}
\\[5ex]
\mbox{In all the following rules:}
\begin{array}{l}
\EBar=\E_1,\ldots,\E_n\\
\vBar=\val_1\ldots\val_n\\
\call=\MethodCall{\val_0}{\m}{\vBar}\\
\val_0=\NewExpr{\C}{\_}\\
\end{array}
\\[7ex]

\NamedRule{IN-invk-ok}{
  \begin{array}{l}
    \intsem{\E_i}{\resEnv}{\callSet}{\val_i}{\callSetRes_i}\Space \forall i \in 0..n
    \\
    \intsem{\Subst{\Subst{\E}{\val_0}{\kwThis}}{\vBar}{\xBar}}{\resEnv}{\callSet\cup\{\call\}}{\val}{\callSetRes}
  \end{array}
}{ \intsem{\MethodCall{\E_0}{\m}{\EBar}}{\resEnv}{\callSet}{\val}{\bigcup_{i \in 0..n} \callSetRes_i {\cup}\callSetRes} }
{  \EZ{\call \not\in\callSetRes\ \mbox{or}\ \call\in\callSet}\\
\mbody(\C,\m)=\Pair{\xBar}{\E}}
\\[7ex]

\NamedRule{IN-invk-check}{
  \begin{array}{l}
    \intsem{\E_i}{\resEnv}{\callSet}{\val_i}{{\callSetRes_i}}\Space \forall i \in 0..n
    \\
    \intsem{\Subst{\Subst{\E}{\val_0}{\kwThis}}{\vBar}{\xBar}}{\resEnv}{\callSet\cup\{\call\}}{\val}{\callSetRes}
    \\
    \intsem{ \Subst{\Subst{\E}{\val_0}{\kwThis}}{\vBar}{\xBar}}{\UpdateEnv{\resEnv}{\call}{\val}}{\callSet}{\val}{\callSetResCheck}
  \end{array}
}{ 
\intsem{\MethodCall{\E_0}{\m}{\EBar}}{\resEnv}{\callSet}{\val}{\bigcup_{i \in 0..n} \callSetRes_i\cup(\callSetRes\setminus\{\call\})}}
{ \EZ{\call\not \in\callSet}\\
\mbody(\C,\m)=\Pair{\xBar}{\E}\\
 \call \in\callSetRes
  }
 
\\[7ex]
\NamedRule{IN-corec}{
\begin{array}{l}
  \intsem{\E_i}{\resEnv}{\callSet}{\val_i}{\callSetRes_i}\Space  \forall i \in 0..n\\
  \intsem{\Subst{\Subst{\Subst{\E'}{\val_0}{\kwThis}}{\vBar}{\xBar}}{u}{\Any}}{\resEnv}{\callSet}{\val}{\callSetRes}
\end{array}
}{ \intsem{\MethodCall{\E_0}{\m}{\EBar}}{\resEnv}{\callSet}{\val}{ \bigcup_{i \in 0..n} \callSetRes_i\cup\callSetRes\cup\{\call\}}}
{
\call\in\callSet\\
\combody(\C,\m)=\Pair{\xBar}{\E'}\\
\call\not\in\dom{\resEnv}
}
\\[7ex]
\NamedRule{IN-look-up}{
  \intsem{\E_i}{\resEnv}{\callSet}{\val_i}{\callSetRes_i}\Space  \forall i \in 0..n
}{ \intsem{\MethodCall{\E_0}{\m}{\EBar}}{\resEnv}{\callSet}{\val}{\bigcup_{i \in 0..n} \callSetRes_i} }
{
\resEnv(\call)=\val
}
\end{array}\]
\caption{$\coFJ$ intermediate semantics} \label{fig:coFJIntSem}
\end{figure}

By relying on the intermediate semantics, we can prove \refToTheorem{soundness} by two steps:
\begin{enumerate}
\item The operational semantics is sound w.r.t.\ the intermediate semantics (\refToTheorem{concrToInt}).
\item The intermediate semantics is sound w.r.t.\ the abstract semantics (\refToTheorem{intToAbs}).
\end{enumerate}
At the beginning of {\refToSection{operational}}, we mentioned two issues for an operational semantics: representing infinite objects in a finite way, and replacing infinite (regular) proof trees by finite proof trees.  This proof technique nicely shows that the two issues are orthogonal: notably, detection of cyclic calls is independent from the format of values.

{To express the soundness of the operational semantics w.r.t.\ the intermediate one, we need to formally relate the two judgments. \EZ{First of all, a call trace $\callEnv$ is the disjoint union of two maps $\callEnvCk$ and $\callEnvCo$ into tagged and non-tagged variables, respectively. Then, given an environment $\mapEnv$, we define the following sets of (operational) calls:
\begin{itemize}
\item $\callSetTr{\callEnv}{}=\dom{\callEnvCo}$
\item $\callSetOp{\callEnv}{\mapEnv}{} = \dom{\mapEnv \circ \callEnvCo}$, where $\circ$ is the composition of partial functions
\item $\callSetDif{\callEnv}{\mapEnv}{\mapEnv'}{} = \callSetOp{\callEnv}{\mapEnv'}{} \setminus \callSetOp{\callEnv}{\mapEnv}{}$
\end{itemize}
For $\callSet$ set of calls and $\subst$ substitution, we abbreviate by $\callSet_\subst$ the set of abstract calls $\ApplySubst{\callSet}{\subst}$.
Note that $\callSetOp{\callEnv}{\mapEnv}{\subst} \subseteq \callSetTr{\callEnv}{\subst}$ and, if $\mapEnv_1\leq\mapEnv_2$, then $\callSetOp{\callEnv}{\mapEnv_1}{\subst} \subseteq \callSetOp{\callEnv}{\mapEnv_2}{\subst}$. Finally, $\resEnvOp{\callEnv}{\subst}(\ApplySubst{\call}{\subst}) = \val$ iff $\val = \subst(\callEnvCk(\call))$.}


Then, the soundness result can be stated as follows:

\begin{theorem}[Soundness operational w.r.t.\ intermediate] \label{theo:concrToInt}
If $\opsem{\E}{\mapEnv}{\callEnv}{\open}{\mapEnvPrime}$ then, 
for all $\subst\in \solutions(\mapEnv')$, 
there exists $\callSet$ such that 
$\callSetDif{\callEnv}{\mapEnv}{\mapEnv'}{\subst} \subseteq \callSet \subseteq \callSetOp{\callEnv}{\mapEnv'}{\subst}$ and,
$\intsem{\ApplySubst{\E}{\subst}}{\resEnvOp{\callEnv}{\subst}}{\callSetTr{\callEnv}{\subst}}{\ApplySubst{\open}{\subst}}{\callSet}$.
\end{theorem}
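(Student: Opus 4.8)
The plan is to proceed by structural induction on the derivation of $\opsem{\E}{\mapEnv}{\callEnv}{\open}{\mapEnvPrime}$, analysing the last applied rule. The given $\subst\in\solutions(\mapEnvPrime)$ is pushed down to the premises: by \refToLemma{env_contained} the output environment of every sub-derivation lies below $\mapEnvPrime$, hence (after accounting for the explicit updates such as $\x:\x$ recorded by the rules) $\subst$ is also a solution of it and \refToLemma{env-leq-sol} lets us invoke the inductive hypothesis. The bookkeeping structures are translated uniformly: an operational call $\call=\MethodCall{\open_0}{\m}{\openBar}$ maps to the abstract call $\ApplySubst{\call}{\subst}$; the non-tagged part $\callEnvCo$ of the call trace gives the set $\callSetTr{\callEnv}{\subst}$ and, keeping only calls whose variable lies in $\dom{\mapEnvPrime}$, the set $\callSetOp{\callEnv}{\mapEnvPrime}{\subst}$; the tagged part $\callEnvCk$ gives the result map $\resEnvOp{\callEnv}{\subst}$. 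The witness $\callSet$ is assembled by unioning the sets obtained inductively for the premises, adding $\ApplySubst{\call}{\subst}$ in the \refToRule{corec} case and subtracting it in the \refToRule{invk-check} case; the required inclusions $\callSetDif{\callEnv}{\mapEnv}{\mapEnvPrime}{\subst}\subseteq\callSet\subseteq\callSetOp{\callEnv}{\mapEnvPrime}{\subst}$ then reduce to set-theoretic bookkeeping, using that $\dom{\mapEnvPrime\circ\callEnvCo}$ distributes over the environment unions performed by \refToRule{new}, \refToRule{look-up}, and the invocation rules.

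The cases \refToRule{val}, \refToRule{field}, \refToRule{new} are immediate: \refToRule{val} matches \refToRule{IN-val} with $\callSet=\emptyset$; in \refToRule{field} one first rewrites $\unf(\open,\mapEnvPrime)=\NewExpr{\C}{\open_1,\ldots,\open_n}$ as $\ApplySubst{\open}{\subst}=\NewExpr{\C}{\ApplySubst{\open_1}{\subst},\ldots,\ApplySubst{\open_n}{\subst}}$ via \refToPropItem{solutions}{4} and applies \refToRule{IN-field}; \refToRule{new} applies \refToRule{IN-new}. For \refToRule{look-up} and \refToRule{corec}, which treat an already-encountered call, the pivotal observation is that if $\uptobisim{\callEnv}{\mapEnvU}(\call)=\callEnv(\call''')$ with $\bisim{\Caps{\call}{\mapEnvU}}{\Caps{\call'''}{\mapEnvU}}$, then $\ApplySubst{\call}{\subst}=\ApplySubst{\call'''}{\subst}$: since $\subst\in\solutions(\mapEnvU)$, \refToLemma{bisim-tree} gives $\expCaps{\call}{\mapEnvU}=\expCaps{\call'''}{\mapEnvU}$ and \refToPropItem{solutions}{5} then yields the equality after applying $\subst$. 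Consequently \refToRule{look-up} matches \refToRule{IN-look-up} (with $\resEnvOp{\callEnv}{\subst}(\ApplySubst{\call}{\subst})=\subst(\x)=\ApplySubst{\x}{\subst}$), and \refToRule{corec} matches \refToRule{IN-corec}: there $\ApplySubst{\call}{\subst}\in\callSetTr{\callEnv}{\subst}$, $\ApplySubst{\call}{\subst}\notin\dom{\resEnvOp{\callEnv}{\subst}}$ because the trace variable is untagged, and $\ApplySubst{\call}{\subst}\in\callSetOp{\callEnv}{\mapEnvPrime}{\subst}$ because this \refToRule{corec} step records $\x:\x$ in $\mapEnvPrime$, so it may legitimately be added to $\callSet$.

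The remaining rules \refToRule{invk-ok} and \refToRule{invk-check} introduce a fresh call $\call$ with a fresh variable $\x$, so the body is evaluated in $\UpdateEnvCo{\callEnv}{\call}{\x}$, which after translation amounts to adding $\ApplySubst{\call}{\subst}$ to the call set, exactly the update performed by \refToRule{IN-invk-ok}/\refToRule{IN-invk-check}. In \refToRule{invk-ok} one has $\x\notin\dom{\mapEnvPrime}$, so the codefinition was not evaluated, $\callSetOp{\callEnv}{\mapEnvPrime}{}$ is unaffected by the extension, and the side condition of \refToRule{IN-invk-ok} holds; one takes $\callSet=\bigcup_i\callSet_i\cup\callSet'$. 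The main obstacle is \refToRule{invk-check}. First, the checking premise is evaluated in $\UpdateEnv{\mapEnvPrime}{\x}{\open}$ producing $\Caps{\open'}{\mapEnv''}$ with $\bisim{\Caps{\x}{\UpdateEnv{\mapEnvPrime}{\x}{\open}}}{\Caps{\open'}{\mapEnv''}}$ via a \emph{strict} renaming; to apply the inductive hypothesis to it we need a solution of $\mapEnv''$ compatible with $\subst$, and this comes from \refToLemma{strict-bisim} (using $\subst\in\solutions(\UpdateEnv{\mapEnvPrime}{\x}{\open})$, which holds since $\mapEnvPrime(\x)=\x$), yielding $\subst'\in\solutions(\mapEnv'')$ agreeing with $\subst$ on $\dom{\mapEnvPrime}$ with $\ApplySubst{\open'}{\subst'}=\ApplySubst{\x}{\subst}=\ApplySubst{\open}{\subst}$, so the checking premise of \refToRule{IN-invk-check} yields the same value as the body premise, as that rule demands. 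Second, one must verify the side condition $\ApplySubst{\call}{\subst}\notin\callSetTr{\callEnv}{\subst}$ of \refToRule{IN-invk-check}: the hypothesis $\call\notin\dom{\uptobisim{\callEnv}{\mapEnvU}}$ must be lifted through $\subst$, which is delicate because $\subst$ can identify syntactically distinct operational calls (e.g.\ calls differing only in undetermined arguments); this step relies on an invariant of the operational semantics guaranteeing that a call which is fresh up to strict bisimilarity in $\mapEnvU$ remains, after applying any solution, distinct from every call recorded in $\callEnvCo$. Once this is in place, the output set $\bigcup_i\callSet_i\cup(\callSet'\setminus\{\ApplySubst{\call}{\subst}\})$ of \refToRule{IN-invk-check} satisfies the bounds, the removal of $\ApplySubst{\call}{\subst}$ reflecting that in the operational result the variable $\x$ is now bound to the non-variable $\open$ rather than to itself.
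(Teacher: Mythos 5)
Your proposal follows essentially the same route as the paper's proof: induction on the operational derivation with the same case analysis, pushing $\subst$ to the premises via \refToLemma{env_contained} and \refToLemma{env-leq-sol}, using \refToPropItem{solutions}{4} in the field case, handling the freshness side condition of \refToRule{invk-ok} by the observation that $\x\notin\dom{\mapEnvPrime}$ forces any colliding call to come from $\callEnvCo$, and invoking \refToLemma{strict-bisim} (the paper uses its \refToCor{strict-bisim}) both to build the compatible solution $\subst'$ for the checking premise and to show $\ApplySubst{\call}{\subst}=\ApplySubst{\call'}{\subst}$ for equivalent calls in the \refToRule{corec} and \refToRule{look-up} cases. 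The one step you flag as delicate --- that in the \refToRule{invk-check} case the side condition $\ApplySubst{\call}{\subst}\notin\callSetTr{\callEnv}{\subst}$ of \refToRule{IN-invk-check} does not follow purely from $\call\notin\dom{\uptobisim{\callEnv}{\mapEnvU}}$, since $\subst$ may identify operational calls that are not strictly bisimilar --- is a real subtlety, but the paper's own proof leaves it equally implicit (it only verifies $\ApplySubst{\call}{\subst}\in\callSet$ there), so your treatment matches the paper's level of rigour.
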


\EZ{In particular, the bounds on $\callSet$ ensure that it is empty when $\callEnv=\emptyset$. Hence, if $\opsem{\E}{\mapEnv}{\emptyset}{\open}{\mapEnvPrime}$ (hypothesis of \refToTheorem{soundness}), then $\intsem{\ApplySubst{\E}{\subst}}{\emptyset}{\emptyset}{\ApplySubst{\open}{\subst}}{\emptyset}$, that is, the hypothesis of \refToTheorem{intToAbs} below holds.} 

The proof of the theorem uses the following corollary of \refToLemma{strict-bisim}. 
\begin{corollary} \label{cor:strict-bisim} 
If $\bisim{\Caps{\open_1}{\mapEnv_1}}{\Caps{\open_2}{\mapEnv_2}}$, $\subst_1{\in}\solutions(\mapEnv_1)$, $\mapEnv_1\leq \mapEnv_2$, then there is $\subst_2\in\solutions(\mapEnv_2)$ such that $\ApplySubst{\open_1}{\subst_1} = \ApplySubst{\open_2}{\subst_2}$ and, for all $x \in \dom{\mapEnv_1}$, $\subst_1(x) = \subst_2(x)$. Moreover, if $\mapEnv_1 = \mapEnv_2$, then $\ApplySubst{\open_1}{\subst_1} = \ApplySubst{\open_2}{\subst_1}$.
\end{corollary}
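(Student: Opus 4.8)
The plan is to derive the corollary directly from \refToLemma{strict-bisim}, the only work being to instantiate its hypotheses correctly and then to discard the auxiliary substitutions it produces in favour of the given $\subst_1$.

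First I would unfold the definition of $\bisim{}{}$: from $\bisim{\Caps{\open_1}{\mapEnv_1}}{\Caps{\open_2}{\mapEnv_2}}$ we obtain a \emph{strict} $\mapEnv_1,\mapEnv_2$-renaming $\varRel$ with $\bisim[\varRel]{\Caps{\open_1}{\mapEnv_1}}{\Caps{\open_2}{\mapEnv_2}}$, which is exactly the shape required by \refToLemma{strict-bisim}. Next I would use the hypothesis $\mapEnv_1\leq\mapEnv_2$: the common domain of $\mapEnv_1$ and $\mapEnv_2$ is then $\dom{\mapEnv_1}$, on which the two environments agree, so $\mapEnv_1\cap\mapEnv_2 = \mapEnv_1$. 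Hence the given $\subst_1\in\solutions(\mapEnv_1)$ is also a solution of $\mapEnv_1\cap\mapEnv_2$, and instantiating \refToLemma{strict-bisim} with $\subst:=\subst_1$ yields $\subst_1'\in\solutions(\mapEnv_1)$ and $\subst_2'\in\solutions(\mapEnv_2)$ with $\ApplySubst{\open_1}{\subst_1'}=\ApplySubst{\open_2}{\subst_2'}$ and $\subst_1'(x)=\subst_1(x)=\subst_2'(x)$ for all $x\in\dom{\mapEnv_1}$.

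It then remains to pass from $\subst_1'$ back to $\subst_1$. Since $\Caps{\open_1}{\mapEnv_1}$ is a capsule, $\FV{\open_1}\subseteq\dom{\mapEnv_1}$, and on that set $\subst_1$ and $\subst_1'$ coincide; therefore $\ApplySubst{\open_1}{\subst_1}=\ApplySubst{\open_1}{\subst_1'}=\ApplySubst{\open_2}{\subst_2'}$, and taking $\subst_2:=\subst_2'$ gives a solution of $\mapEnv_2$ with $\ApplySubst{\open_1}{\subst_1}=\ApplySubst{\open_2}{\subst_2}$ and $\subst_1(x)=\subst_2(x)$ for all $x\in\dom{\mapEnv_1}$, as claimed. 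For the ``moreover'' part, when $\mapEnv_1=\mapEnv_2$ the capsule property also gives $\FV{\open_2}\subseteq\dom{\mapEnv_2}=\dom{\mapEnv_1}$, where $\subst_1$ and $\subst_2$ agree, so $\ApplySubst{\open_2}{\subst_2}=\ApplySubst{\open_2}{\subst_1}$ and hence $\ApplySubst{\open_1}{\subst_1}=\ApplySubst{\open_2}{\subst_1}$.

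I do not expect a genuine obstacle: the argument is bookkeeping on top of \refToLemma{strict-bisim}. The points that need care are the identification $\mapEnv_1\cap\mapEnv_2=\mapEnv_1$ under $\mapEnv_1\leq\mapEnv_2$ (without which the substitution $\subst_1'$ manufactured by the lemma need not agree with $\subst_1$ on all of $\dom{\mapEnv_1}$, and so could change the value of $\ApplySubst{\open_1}{\cdot}$), and the repeated appeal to the capsule property, which is what guarantees that two substitutions agreeing on the domain of an environment also agree on the free variables of any open value controlled by that environment.
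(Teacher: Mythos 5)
Your proposal is correct and follows essentially the same route as the paper: the paper's own (elided) proof is precisely ``immediate from \refToLemma{strict-bisim}, since $\mapEnv_1\leq\mapEnv_2$ implies $\mapEnv_1\cap\mapEnv_2=\mapEnv_1$''. The extra bookkeeping you supply --- replacing the $\subst_1'$ produced by the lemma with the given $\subst_1$ via the capsule property $\FV{\open_1}\subseteq\dom{\mapEnv_1}$ --- is exactly the detail the paper leaves implicit, and it goes through.
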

\EZComm{cut: \begin{proof}
Immediate from \refToLemma{strict-bisim}, since, if $\mapEnv_1\leq\mapEnv_2$ then $\mapEnv_1\cap\mapEnv_2 = \mapEnv_1$, and, 
if $\mapEnv_1 = \mapEnv_2$ then $\mapEnv_1\cap\mapEnv_2 = \mapEnv_1$. 
\end{proof}}

We now state the second step of the proof: the soundness result of the intermediate semantics with respect to the abstract semantics.
\begin{theorem}[Soundness intermediate w.r.t.\ abstract]\label{theo:intToAbs}
If $\intsem{\E}{\emptyset}{\emptyset}{\val}{\emptyset}$, then \mbox{\EZ{$\valid{\isFJ}{\coisFJ}{\eval{\E}{\val}}$}}.
\end{theorem}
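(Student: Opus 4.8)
The plan is to prove a \emph{generalization} of the statement that allows a non-empty call environment $\resEnv$ and a non-empty call set $\callSet$, and then to obtain $\eval{\E}{\val}\in\Generated{\isFJ}{\coisFJ}$ by the bounded coinduction principle (\refToTheorem{bcoind}) applied to $\Pair{\isFJ}{\coisFJ}$. As specification $\Spec$ I would take the set of judgments $\eval{\E}{\val}$ that admit an intermediate derivation $\intsem{\E}{\resEnv}{\callSet}{\val}{\callSetRes}$ whose context is ``abstractly sound'': every $\call\in\dom{\resEnv}$ is such that evaluating the definition body of $\call$, with $\call$ bound to $\resEnv(\call)$, again yields $\resEnv(\call)$ --- precisely the guarantee produced by the checking premise (third premise) of rule \refToRule{IN-invk-check} --- and likewise every pending call in $\callSet$ is one for which such a checked derivation exists at an ancestor node. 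Since $\intsem{\E}{\emptyset}{\emptyset}{\val}{\emptyset}$ satisfies these side conditions vacuously, $\eval{\E}{\val}\in\Spec$, and the theorem follows once the two conditions of bounded coinduction are established.

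Condition (1), $\Spec\subseteq\Ind{\isFJ\cup\coisFJ}$ (each judgment has a \emph{finite} proof-with-corules), I would prove by induction on the intermediate derivation, translating rule by rule: \refToRule{IN-val} becomes the corule \refToRule{abs-co-val}, a single node and hence finite even when $\val$ is an infinite object; \refToRule{IN-field}, \refToRule{IN-new}, \refToRule{IN-invk-ok}, \refToRule{IN-invk-check} become \refToRule{abs-field}, \refToRule{abs-new}, \refToRule{abs-invk}, recursing on the premises (for \refToRule{IN-invk-check} on the body-evaluation premise, ignoring the checking one); \refToRule{IN-corec} becomes the corule \refToRule{abs-co-invk} with $\Any$ instantiated to the value chosen by \refToRule{IN-corec}; and \refToRule{IN-look-up} is handled by grafting a finite proof-with-corules of $\eval{\call}{\resEnv(\call)}$ --- which exists by the abstract-soundness of the context built into $\Spec$ --- underneath fresh derivations for the receiver and arguments.

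Condition (2), $\Spec\subseteq\Op{\isFJ}(\Spec)$ (each judgment is the consequence of an $\isFJ$ rule, never a corule, with premises in $\Spec$), I would prove by cases on the last intermediate rule. For \refToRule{IN-val}, \refToRule{IN-field}, \refToRule{IN-new}, \refToRule{IN-invk-ok}, \refToRule{IN-invk-check} the matching $\isFJ$ rule is immediate and its premises are the conclusions of the intermediate premises, which inherit membership in $\Spec$ (for \refToRule{IN-invk-check} the body-evaluation premise witnesses that the definition body evaluates to $\val$ in $\Spec$). The genuinely delicate cases are an already-encountered call handled by \refToRule{IN-corec} or \refToRule{IN-look-up}: here the abstract derivation is forced to use \refToRule{abs-invk}, i.e.\ to \emph{re-expand the method's definition}, and the required ``definition body evaluates to $\val$'' premise must be obtained \emph{not} from the \refToRule{IN-corec}/\refToRule{IN-look-up} node itself but from the checking premise of the unique \refToRule{IN-invk-check} ancestor that first entered this call into the trace; that premise re-evaluates the definition body, with the call bound to its own result, to that same result, so it lies in $\Spec$ and supplies exactly the loop-back premise needed by the infinite proof tree.

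The crux is this reconciliation: the intermediate semantics (like the operational one) closes a detected cycle using the codefinition, whereas the abstract semantics must, along every infinite branch, close the cycle by infinitely unfolding the \emph{definition}; the checking step of \refToRule{IN-invk-check} is the bridge, since it certifies a value that is ``self-reproducing'' under the definition. The main difficulty is organising $\Spec$ so that it simultaneously carries enough context to let \refToRule{IN-corec} serve as a finite corule cap and enough context to let each re-encountered call be unfolded, without the specification becoming circular. A smaller but real technical point is that the two roles a value plays in the abstract system --- as a computed object under \refToRule{abs-new}, contributing an infinite subtree, and under the corule \refToRule{abs-co-val}, contributing a finite cap --- must be kept apart, so the translation effectively comes in an ``infinite'' and a ``finite'' variant. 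An alternative to bounded coinduction would be to build the infinite abstract tree directly from the (finite) intermediate derivation, using the checking sub-derivations as the regular-tree templates and the codefinition sub-derivations as the caps; the obstacles are the same.
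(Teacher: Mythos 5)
Your overall architecture for condition (1) of bounded coinduction matches the paper's: \refToRule{IN-val} maps to \refToRule{abs-co-val}, \refToRule{IN-corec} to \refToRule{abs-co-invk}, and the rest to the corresponding $\isFJ$ rules. The divergence — and the problem — is in condition (2), where you generalize the specification to judgments with non-empty $\resEnv$ and $\callSet$ and then try to discharge the \refToRule{IN-corec} and \refToRule{IN-look-up} cases head-on.

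The gap is in the \refToRule{IN-corec} case. The value produced at an \refToRule{IN-corec} node is whatever the \emph{codefinition} evaluates to, and this is in general \emph{not} the value that the checking premise of the enclosing \refToRule{IN-invk-check} certifies for that call. Take $\MethodCall{[2,1]^\omega}{\texttt{min}}{}$: the cycle is detected at the nested occurrence of this call, where \refToRule{IN-corec} evaluates the codefinition \texttt{this.head} and yields $2$; the certified value of the call (the one the checking premise re-derives) is $1$. So your $\Spec$ contains $\Pair{\MethodCall{[2,1]^\omega}{\texttt{min}}{}}{2}$ — its context is ``abstractly sound'' in your sense, since a checked derivation for the call exists, just certifying a different value. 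But $\eval{\MethodCall{[2,1]^\omega}{\texttt{min}}{}}{2}$ is not even in $\CoInd{\isFJ}$ (it would force the tail's minimum to be $\geq 2$ and $\leq 1$ simultaneously), so $\Spec \subseteq \Op{\isFJ}(\Spec)$ provably fails: the ``loop-back premise'' $\eval{\Subst{\Subst{\E}{\val_0}{\kwThis}}{\vBar}{\xBar}}{2}$ you want to extract from the ancestor's checking premise is simply not what that premise gives you (it gives the body evaluating to $1$). The node-by-node values of the intermediate derivation do not coincide with those of the abstract proof tree at cycle-detection points, so the abstract tree cannot be read off the intermediate derivation in place.

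The paper sidesteps this entirely by taking the unrelativized specification $A=\{\Pair{\E}{\val} \mid \intsem{\E}{\emptyset}{\emptyset}{\val}{\emptyset}\}$, so that the \refToRule{IN-corec} and \refToRule{IN-look-up} cases of consistency are vacuous, and by proving two normalization lemmas instead of generalizing $\Spec$: \refToLemma{callSet} removes a call from $\callSet$ when it does not occur in the output set (the \refToRule{IN-invk-ok} case), and \refToLemma{rho} removes a binding $\call{:}\val$ from $\resEnv$ by inlining a full re-derivation of $\call$ wherever \refToRule{IN-look-up} was used (the \refToRule{IN-invk-check} case, applied to the \emph{checking} premise, whose value is the consistent one). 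That inlining is exactly the value-reconciliation your proposal is missing; if you want to keep a generalized $\Spec$, you would have to build the certified values into it and restate the \refToRule{IN-corec} node's contribution with the certified value rather than the codefinition's output, which amounts to re-proving \refToLemma{rho}.
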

The proof uses the bounded coinduction principle (\refToTheorem{bcoind}), and requires some lemmas. 
} Recall that $\evalExtended{\E}{\val}$ means that the judgment $\eval{\E}{\val}$ has a finite proof tree in the (standard) inference system consisting of $\FJ$ rules and $\coFJ$ corules. }

\begin{lemma}\label{lemma:boundedness}
If $\intsem{\E}{\emptyset}{\callSet}{\val}{\callSetRes}$ then $\evalExtended{\E}{\val}$ holds.\EZComm{prova in appendice}
\end{lemma}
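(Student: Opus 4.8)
The plan is to prove this boundedness property (which feeds into condition~1 of \refToTheorem{bcoind} in the proof of \refToTheorem{intToAbs}) by induction on the derivation of $\intsem{\E}{\emptyset}{\callSet}{\val}{\callSetRes}$ in the intermediate semantics of \refToFigure{coFJIntSem}, turning that derivation into a finite proof tree for $\eval{\E}{\val}$ in $\isFJ\cup\coisFJ$ rule by rule. The crucial preliminary observation is a bookkeeping fact about the environment component: in every meta-rule of \refToFigure{coFJIntSem} the environment $\resEnv$ in each premise coincides with the one in the conclusion, the only exception being the \emph{third} premise of \refToRule{IN-invk-check} (the ``checking step''), which replaces $\resEnv$ by $\UpdateEnv{\resEnv}{\call}{\val}$; moreover, \refToRule{IN-look-up} can fire only when $\resEnv$ is defined on the call being evaluated, so it can never be the last rule of a derivation whose environment is empty. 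Consequently, starting from $\resEnv=\emptyset$ and, in the \refToRule{IN-invk-check} case, appealing to the induction hypothesis only on the first two premises, the whole induction stays inside the fragment with $\resEnv=\emptyset$, and the case \refToRule{IN-look-up} simply does not arise. One cannot hope to strengthen the statement naively to an arbitrary $\resEnv$, since then \refToRule{IN-look-up} would return an unconstrained value with no counterpart in the abstract system; confining attention to the empty environment is exactly what makes the statement provable by a plain structural induction, and this, rather than any single case, is the point to get right.

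Granted that, the induction itself is a routine rule-matching. Rule \refToRule{IN-val} is matched by the coaxiom \refToRule{abs-co-val}, giving the single-node proof of $\eval{\val}{\val}$. Rules \refToRule{IN-field} and \refToRule{IN-new} are matched by \refToRule{abs-field} and \refToRule{abs-new} respectively, with identical side conditions, plugging in the finite trees obtained for the subexpressions by the induction hypothesis. Rules \refToRule{IN-invk-ok} and \refToRule{IN-invk-check} are both matched by \refToRule{abs-invk}: the induction hypothesis on the first premises yields $\evalExtended{\E_i}{\val_i}$ for all $i\in 0..n$, and on the second premise it yields $\evalExtended{\Subst{\Subst{\E}{\val_0}{\kwThis}}{\vBar}{\xBar}}{\val}$ --- and here it matters that the result of the second premise is precisely the result $\val$ of the conclusion --- so, carrying over the side conditions $\val_0=\NewExpr{\C}{\_}$ and $\mbody(\C,\m)=\Pair{\xBar}{\E}$, we assemble a finite proof tree for $\eval{\MethodCall{\E_0}{\m}{\EBar}}{\val}$ rooted at \refToRule{abs-invk}. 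Finally, \refToRule{IN-corec} is matched by the corule \refToRule{abs-co-invk}: the first premises give $\evalExtended{\E_i}{\val_i}$, the induction hypothesis on the codefinition premise gives $\evalExtended{\Subst{\Subst{\Subst{\E'}{\val_0}{\kwThis}}{\vBar}{\xBar}}{u}{\Any}}{\val}$, and since the value that \refToRule{abs-co-invk} substitutes for $\Any$ is arbitrary we instantiate it to the same $u$ used by \refToRule{IN-corec}; with $\combody(\C,\m)=\Pair{\xBar}{\E'}$ this yields a finite proof tree for $\eval{\MethodCall{\E_0}{\m}{\EBar}}{\val}$ rooted at the corule. The non-determinism of \refToRule{IN-corec} in the choice of $u$ mirrors exactly that of \refToRule{abs-co-invk}.

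In summary, the only genuinely delicate ingredient is the environment-bookkeeping argument that lets us discard both the third premise of \refToRule{IN-invk-check} and rule \refToRule{IN-look-up}; everything else is a direct, case-by-case translation of intermediate-semantics rules into $\FJ$ rules and $\coFJ$ corules. The output call sets $\callSet$ and $\callSetRes$ never enter the conclusion $\evalExtended{\E}{\val}$, so they can be left entirely unconstrained throughout the proof.
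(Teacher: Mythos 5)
Your proof is correct and follows essentially the same route as the paper's: a straightforward induction on the derivation of $\intsem{\E}{\emptyset}{\callSet}{\val}{\callSetRes}$, translating each intermediate rule into the corresponding rule or corule of $\isFJ\cup\coisFJ$, discarding the third premise of \refToRule{IN-invk-check}, and observing that \refToRule{IN-look-up} cannot apply when $\resEnv=\emptyset$. The paper's proof is exactly this case analysis, so no further comment is needed.
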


\begin{lemma}\label{lemma:callSet}
\EZComm{sembrava inutile: Set $\call{=}\MethodCall{\val_0}{\m}{\val_1\ldots\val_n}$.} If $\intsem{\E}{\resEnv}{\callSet\cup\{\call\}}{\val}{\callSetRes}$ holds, and $\EZ{\call \not\in\callSetRes\EZComm{cut:\ \mbox{or}\ \call\in\callSet}}$, then $\intsem{\E}{\resEnv}{\callSet}{\val}{\callSetRes}$.\EZComm{prova in appendice}
\end{lemma}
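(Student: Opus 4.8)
The plan is to argue by induction on the derivation of $\intsem{\E}{\resEnv}{\callSet\cup\{\call\}}{\val}{\callSetRes}$. First I would dispose of the case $\call\in\callSet$: then $\callSet\cup\{\call\}=\callSet$ and there is nothing to prove (this is exactly the subcase cut from the hypothesis), so from now on assume $\call\notin\callSet$. Write $\call_0$ for the method call invoked by the last rule, when that rule is one of the four invocation rules, to distinguish it from the call $\call$ being removed. One preliminary remark is used throughout: in every rule the tracking set of the conclusion is a union that includes the tracking sets of the premises, so $\call\notin\callSetRes$ forces $\call$ to be absent from the tracking set of each premise; moreover, since rule \refToRule{IN-corec} always records its own method call in the conclusion's tracking set, the case where the last rule is \refToRule{IN-corec} with $\call_0=\call$ simply cannot occur.

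The cases \refToRule{IN-val}, \refToRule{IN-field}, \refToRule{IN-new}, \refToRule{IN-look-up} are immediate: their side conditions do not mention the call set, so I apply the induction hypothesis to each premise (legitimate by the remark above) and re-apply the same rule, obtaining the same tracking set. For \refToRule{IN-invk-ok} and \refToRule{IN-corec} with $\call_0\ne\call$, from $\call_0\in\callSet\cup\{\call\}$ we get $\call_0\in\callSet$, so the side conditions about the call set survive the replacement; applying the induction hypothesis to the argument premises and to the body/codefinition premise — after rewriting its call set $(\callSet\cup\{\call\})\cup\{\call_0\}$ as $(\callSet\cup\{\call_0\})\cup\{\call\}$ — and re-applying the rule gives the claim. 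The remaining subcase, \refToRule{IN-invk-ok} with $\call_0=\call$, is in fact easier: the body premise already has call set $\callSet\cup\{\call\}=\callSet\cup\{\call_0\}$, which is precisely what \refToRule{IN-invk-ok} with incoming call set $\callSet$ requires, and its side condition holds since $\call_0=\call\notin\callSetRes$; so I keep the body premise unchanged, apply the induction hypothesis only to the argument premises, and conclude.

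The delicate case is \refToRule{IN-invk-check}. Its side condition $\call_0\notin\callSet\cup\{\call\}$ gives $\call_0\ne\call$ and $\call_0\notin\callSet$, so the argument premises and the second (body) premise are handled by the induction hypothesis just as above, keeping all tracking sets; what remains is to replay the third (checking) premise — which evaluates the substituted method body in $\UpdateEnv{\resEnv}{\call_0}{\val}$ with call set $\callSet\cup\{\call\}$, producing some $\callSetResCheck$ — now with call set $\callSet$. Here the hypothesis $\call\notin\callSetRes$ gives no leverage: $\callSetRes$ bounds only the tracking set of the body premise, not $\callSetResCheck$, and $\call$ may well occur in $\callSetResCheck$. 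Since $\callSetResCheck$ is discarded in the conclusion, replaying the checking premise with an arbitrary tracking set would suffice; however, the naive statement ``removing a call from the call set preserves derivability up to the tracking set'' is \emph{false} — if the codefinition of the removed call is genuinely triggered and recorded somewhere, dropping the call can make the judgment underivable (one builds such an example with a codefinition making the result strictly grow at each unfolding). In such examples the dropped call lies in the tracking set, so the lemma's hypothesis rightly excludes them, but they show the induction must really exploit $\call\notin\callSetRes$.

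I expect this to be the crux. The way I would handle it is to carry a strengthened statement through the induction: alongside the conclusion of the lemma, prove that the checking step can always be replayed with the smaller call set, using the fact — available for the body premise, via $\call\notin\callSetRes$ — that $\call$'s codefinition is never triggered there, and that this property is inherited by the checking re-evaluation, which only extends $\resEnv$ with the resolved entry for $\call_0$ and does not otherwise change how $\call$ is seen. With this strengthened hypothesis the third premise is re-derived with call set $\callSet$ and some tracking set, \refToRule{IN-invk-check} is re-applied, and the conclusion holds with the original $\callSetRes$, closing the induction.
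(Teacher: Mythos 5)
Your proof takes essentially the same route as the paper's: induction on the derivation, the observation that the conclusion's tracking set contains the tracking sets of the premises (so that $\call\notin\callSetRes$ propagates to them), and the identification of the third premise of \refToRule{IN-invk-check} as the one place where this propagation fails, because its tracking set $\callSetResCheck$ is discarded in the conclusion. Where you propose to carry a strengthened induction hypothesis, the paper instead isolates the missing fact as a separate auxiliary lemma (\refToLemma{callset-check}): if the body premise yields $\callSetRes$ with call set $\callSet\cup\{\call\}$ and the checking premise yields $\callSetResCheck$ with call set $\callSet$ and environment $\UpdateEnv{\resEnv}{\call}{\val}$, then $\callSetResCheck\subseteq\callSetRes\setminus\{\call\}$; combined with the side condition $\call\notin\callSet$ of \refToRule{IN-invk-check} (which rules out the removed call coinciding with the rule's call), this yields exactly the non-membership needed to apply the induction hypothesis to the third premise. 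The two packagings are interchangeable, but be aware that your justification of the ``inheritance'' property --- that extending $\resEnv$ with the resolved entry for the checked call ``does not otherwise change how $\call$ is seen'' --- is precisely the content that still requires its own induction: the checking derivation is a genuinely different derivation, with a smaller call set and a larger $\resEnv$, so the containment is not automatic (the paper itself only sketches this auxiliary proof). On the remaining cases you are, if anything, more careful than the paper: your explicit treatment of the subcase where the removed call coincides with the rule's call in \refToRule{IN-invk-ok}, and your observation that this coincidence is impossible for \refToRule{IN-corec} because that rule always records its own call in the conclusion's tracking set, fill in details the paper's proof glosses over.
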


\begin{lemma}\label{lemma:rho}\EZComm{prova in appendice}
\EZComm{sembrava inutile: Set $\call=\MethodCall{\val_0}{\m}{\val_1\ldots\val_n}$.} If $\intsem{\E}{\UpdateEnv{\resEnv}{\call}{\val'}}{\callSet}{\val}{\callSetRes}$ and $\intsem{\call}{\resEnv}{\callSet}{\val'}{\emptyset}$, then $\intsem{\E}{\resEnv}{\callSet}{\val}{\callSetRes}$.
\end{lemma}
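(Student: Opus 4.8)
The plan is to argue by induction on the derivation of the first hypothesis $\intsem{\E}{\UpdateEnv{\resEnv}{\call}{\val'}}{\callSet}{\val}{\callSetRes}$, writing $\resEnv'$ for $\UpdateEnv{\resEnv}{\call}{\val'}$ and assuming, as holds whenever the lemma is applied, that $\call\notin\dom{\resEnv}$, so $\resEnv'$ is a proper extension of $\resEnv$. A useful preliminary observation is that, inside any derivation of $\intsem{\E}{\resEnv'}{\callSet}{\val}{\callSetRes}$, the call $\call$ stays bound to $\val'$: the only rule that rebinds a call is the checking premise of \refToRule{IN-invk-check}, and it never fires for $\call$, since $\call\in\dom{\resEnv'}$ makes the side condition of \refToRule{IN-corec} fail for $\call$, whence the premise $\call\in\callSetRes$ of \refToRule{IN-invk-check} cannot be met. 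Hence the binding $\call{:}\val'$ is consulted only by applications of \refToRule{IN-look-up} to $\call$ itself.

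For the routine cases: if the last rule is \refToRule{IN-val} the conclusion is immediate since $\resEnv$ is not inspected; in every other case except \refToRule{IN-look-up} applied to $\call$, one applies the induction hypothesis to each premise and reapplies the same rule. The side conditions are preserved, the only one mentioning the result environment being $\call_0\notin\dom{\resEnv}$ of \refToRule{IN-corec}, which only gets easier when $\call{:}\val'$ is dropped (and $\call_0=\call$ does not occur), while \refToRule{IN-look-up} on a call $\call''\neq\call$ reads the same value, $\resEnv(\call'')=\resEnv'(\call'')$. One subtlety: body and checking premises are evaluated in an enlarged context --- a larger call set $\callSet\cup\{\call_0\}$, or an environment further updated at a call $\neq\call$ --- so for the induction hypothesis to apply there one also needs the \emph{second} hypothesis to survive such enlargements, i.e.\ $\intsem{\call}{\resEnv''}{\callSet''}{\val'}{\emptyset}$ for all $\resEnv\subseteq\resEnv''$ (extra bindings for calls $\neq\call$) and $\callSet\subseteq\callSet''$. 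So the statement should be generalised to carry this, or supported by a weakening lemma for the intermediate semantics, proved alongside \refToLemma{callSet}.

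The one genuine case is \refToRule{IN-look-up} reading $\resEnv^\star(\call)=\val'$ for $\call$, at a node with context $(\resEnv^\star,\callSet^\star)$, where $\resEnv\subseteq\resEnv^\star$ (the extra bindings, coming from checking steps above the node, are for calls $\neq\call$) and $\callSet\subseteq\callSet^\star$. The subderivation there consists of the evaluations of the receiver and arguments of $\call$ --- which the induction hypothesis has already transported to derivations over $\resEnv^\star\setminus\{\call\}$ --- followed by the look-up step. I would replace the look-up step by grafting those already transported argument subderivations onto the last rule of a copy of the derivation witnessing $\intsem{\call}{\resEnv}{\callSet}{\val'}{\emptyset}$, itself transported via the weakening lemma to $(\resEnv^\star\setminus\{\call\},\callSet^\star)$; note this derivation cannot end in \refToRule{IN-corec}, since its output call set is empty. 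Everywhere else the rule applications carry over unchanged over $\resEnv$, and reading off the root yields $\intsem{\E}{\resEnv}{\callSet}{\val}{\callSetRes}$.

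I expect the main obstacle to be the bookkeeping of output call sets through this splice. The transported copy of the second hypothesis may no longer have empty output (enlarging the call set can disable \refToRule{IN-invk-check} via its side condition $\call_0\notin\callSet$, forcing a switch to \refToRule{IN-invk-ok}, which re-adds $\call_0$ to the output), and the grafting then perturbs the output at the splice node by calls drawn from $\callSet^\star\setminus\callSet$. One must therefore phrase the weakening lemma so as to pin down exactly which calls the output can gain, and then check that, travelling back up to the root, each such call is absorbed at the \refToRule{IN-invk-check} node that introduced it into the call set, so that the root output is still exactly $\callSetRes$. Getting this invariant right --- equivalently, choosing the right joint generalisation of \refToLemma{rho}, \refToLemma{callSet} and the weakening lemma to prove by simultaneous induction --- is where the real work lies; the rest is routine.
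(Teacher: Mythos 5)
Your proposal follows essentially the same route as the paper's proof: induction on the derivation of the first judgment, transporting the second hypothesis into the enlarged contexts (larger call set, further-updated environment) via a weakening lemma (the paper's \refToLemma{extEnvs}), and handling the one genuine case --- \refToRule{IN-look-up} reading the binding for $\call$ itself --- by splicing in the derivation of $\intsem{\call}{\resEnv}{\callSet}{\val'}{\emptyset}$. The bookkeeping difficulties you flag (that weakening the call set can perturb the output call set, so the weakening lemma needs a careful joint generalisation) are real, but the paper's own proof does not resolve them either: it invokes \refToLemma{extEnvs} as stated and leaves its proof, and the acknowledged need to generalise it, as a sketch.
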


\noindent We can now prove \refToTheorem{intToAbs}.

\begin{proofOf}{\refToTheorem{intToAbs}}

We take as specification the set $A=\{\Pair{\E}{\val} \mid \intsem{\E}{\emptyset}{\emptyset}{\val}{\emptyset}\}$, and we use bounded coinduction (\refToTheorem{bcoind}).
We have to prove the following:
\begin{description}
\item[Boundedness] For all $\Pair{\E}{\val}\in A$, $\evalExtended{\E}{\val}$ holds.
\item[Consistency] For all $\Pair{\E}{\val}\in A$, there exist a rule in the abstract semantics having $\eval{\E}{\val}$ as consequence, and such that all its premises are elements of $A$.
\end{description}

\noindent Boundedness follows immediately from \refToLemma{boundedness}. 
We now prove consistency.

Consider a pair $\Pair{\E}{\val} \in A$, hence we know that $\intsem{\E}{\emptyset}{\emptyset}{\val}{\emptyset}$ is derivable. 
We proceed by case analysis on the last applied rule in the derivation of this judgement.
\begin{description}

\item [\refToRule{IN-val}] We know that \EZ{$\E=\val=\NewExpr{\C}{\val_1,\ldots,\val_n}$}. We choose as candidate rule \refToRule{abs-new}. We have to show that, for all $i\in1..n$,  $\Pair{\val_i}{\val_i}\in A$, that is, $\intsem{\val_i}{\emptyset}{\emptyset}{\val_i}{\emptyset}$ holds We can get the thesis thanks to rule \EZ{\refToRule{IN-val}}.

\item [\refToRule{IN-field}]  We know that $\E = \FieldAccess{\E'}{\f}$ and $\intsem{\E'}{\emptyset}{\emptyset}{\NewExpr{\C}{\val_1\ldots\val_n}}{\emptyset}$. We choose as candidate rule \refToRule{abs-field}, with conclusion $\eval{\FieldAccess{\E'}{\f}}{\val_i}$. We have to show that $\Pair{\E'}{\NewExpr{\C}{\val_1\ldots\val_v}}\in A$, that is, $\intsem{\E'}{\emptyset}{\emptyset}{\NewExpr{\C}{\val_1\ldots\val_v}}{\emptyset}$ holds, but this is true by hypothesis.

\item [\refToRule{IN-new}] We know that  $\intsem{\E_i}{\emptyset}{\emptyset}{\val_i}{\emptyset}$ holds for all $i\in1..n$. We choose as candidate rule \refToRule{abs-new}. We have to show that, for all $i\in1..n$,  $\Pair{\E_i}{\val_i}\in A$, that is, $\intsem{\E_i}{\emptyset}{\emptyset}{\val_i}{\emptyset}$ holds,  but this is true by hypothesis.

\item [\refToRule{IN-invk-ok}] We know that $\E=\MethodCall{\E_0}{\m}{\EBar}$, $\intsem{\E_i}{\emptyset}{\emptyset}{\val_i}{\emptyset}$ holds for all $i\in 0..n$,
\EZ{$\call=\MethodCall{\val_0}{\m}{\vBar}$, $\mbody(\C,\m)=\Pair{\xBar}{\E'}$, and $\intsem{\Subst{\Subst{\E'}{\val_0}{\kwThis}}{\vBar}{\xBar}}{\emptyset}{\{\call\}}{\val}{\emptyset}$ holds.} We choose as candidate rule \refToRule{abs-invk}. We have to show that, for all $i\in0..n$, $\Pair{\E_i}{\val_i} \in A$, and $\Pair{\Subst{\Subst{\E'}{\val_0}{\kwThis}}{\vBar}{\xBar}}{\val}\in A$. That is, that the following judgments hold: $\intsem{\E_i}{\emptyset}{\emptyset}{\val_i}{\emptyset}$ for all $i\in0..n$, and $\intsem{\Subst{\Subst{\E'}{\val_0}{\kwThis}}{\vBar}{\xBar}}{\emptyset}{\emptyset}{\val}{\emptyset}$. The judgments in the first set hold by hypothesis. The last judgment holds thanks to \refToLemma{callSet}\EZ{, where $\callSet'=\emptyset$}.

\item [\refToRule{IN-invk-check}] We know that $\E=\MethodCall{\E_0}{\m}{\EBar}$, $\intsem{\E_i}{\emptyset}{\emptyset}{\val_i}{\emptyset}$ holds for all $i\in 0..n$,
\EZ{$\call=\MethodCall{\val_0}{\m}{\vBar}$, $\mbody(\C,\m)=\Pair{\xBar}{\E'}$,} and $\intsem{\Subst{\Subst{\E'}{\val_0}{\kwThis}}{\vBar}{\xBar}}{\UpdateEnv{}{\call}{\val}}{\emptyset}{\val}{\emptyset}$ holds. 
We choose as candidate rule \refToRule{abs-invk}. We have to show that for all $i\in0..n$, $\Pair{\E_i}{\val_i} \in A$, and $\Pair{\Subst{\Subst{\E'}{\val_0}{\kwThis}}{\vBar}{\xBar}}{\val}\in A$. That is, that the following judgments hold: $\intsem{\E_i}{\emptyset}{\emptyset}{\val_i}{\emptyset}$ for all $i\in0..n$, and $\intsem{\Subst{\Subst{\E'}{\val_0}{\kwThis}}{\vBar}{\xBar}}{\emptyset}{\emptyset}{\val}{\emptyset}$. The judgments in the first set hold by hypothesis. The last judgment holds thanks to \refToLemma{rho}, 
since from the hypothesis we easily get $\intsem{\call}{\emptyset}{\emptyset}{\val}{\emptyset}$.

\item [\refToRule{IN-corec}] Empty case since to apply the rule it should be $\callSet\neq\emptyset$.

\item [\refToRule{IN-look-up}] Empty case since to apply the rule it should be \EZ{$\resEnv\neq\emptyset$}.

\end{description}
\end{proofOf}

\section{Related work}\label{sect:related}
As already mentioned, the idea of regular corecursion (keeping \PB{track} of pending method calls, so to detect cyclic calls), originates from co-SLD resolution \cite{Simon06,SimonEtAl07,AnconaDovier15}.
Making regular corecursion \emph{flexible} means that the programmer can specify the behaviour in case a cycle is detected. Language constructs to achieve such flexibility have been proposed in the logic  \cite{Ancona13,AnconaDZ17}, functional \cite{Jeannin17}, and object-oriented \cite{AnconaZ12,AnconaZ13} paradigm. 

\smallskip
\noindent\textbf{Logic paradigm} The above mentioned \emph{co-SLD resolution} \cite{Simon06,SimonEtAl07,AnconaDovier15} is a sound resolution procedure based on cycle detection. That is, the interpreter keeps track of resolved atoms and an atom selected from  the current goal can be resolved if it unifies with an atom that has been already resolved. In this way it is possible to define coinductive predicates. Correspondingly, models are subsets of the \emph{complete Herbrand basis}, that is, the set of ground atoms built on arbitrary (finite or infinite) terms, and the declarative semantics is the greatest fixed point of the monotone function associated with a program.
Structural resolution~\cite{KPS12-2,KJS17} (a.k.a. S-resolution)
is a proposed generalization for cases when formulas computable at infinity are not regular;  infinite derivations that cannot be built in finite time are
generated lazily, and only partial answers are shown.
More recently, a comprehensive theory has been proposed \cite{BasoldKL19} to provide operational semantics
that go beyond loop detection. 

Anyway, in coinductive logic programming, only standard coinduction is supported. The notion of \texttt{finally} clause, introduced in \cite{Ancona13}, allows the programmer to specify a fact to be resolved when a cycle is detected, instead of simply accepting the atom. The approach has been refined in \cite{AnconaDZ17}, following the guidelines given by the formal framework of generalized inference systems. That is, the programmer can write special clauses corresponding to corules, so that, when an atom is found for the second time, standard SLD resolution is triggered in the program enriched by the corules. \PB{However, this paradigm is very different from the object-oriented one, since based on relations rather than functions: cycles are detected on the same atom, where input and output are not distinguished, by unification.}

\smallskip
\noindent\textbf{Functional paradigm}
\emph{CoCaml} (\url{www.cs.cornell.edu/Projects/CoCaml}) \cite{Jeannin17,JeanninK12} is a fully-fledged extension of OCaml supporting non-well-founded data types and corecursive functions.
CoCaml, as OCaml, allows programmers to declare regular values through the let-rec construct\EZ{, and, moreover, detects cyclic calls as in our approach. However, whereas $\coFJ$ immediately evaluates the cyclic call by using the codefinition, the CoCaml approach is in two phases. First, a system of equations is constructed,
associating with each call a variable and partially evaluating the body
of functions, where calls are replaced with  associated variables. Then,} the system of equations is given to a \emph{solver}
specified in the function definition.
Solvers can be either pre-defined or written by the programmer in
order to enhance flexibility. \PB{An advantage that we see in our approach is that the programmer has to write the codefinition (standard code) rather than working at the meta-level to write a solver, which is in a sense a fragment of the interpreter.} \EZ{A precise comparison is difficult for the lack of a simple operational model of the CoCaml mechanism. In future work, we plan to develop such model, and to relate the two approaches on a formal basis.}

\smallskip
\noindent\textbf{Object-oriented paradigm}
A previous version of $\coFJ$ has been proposed in \cite{AnconaZ12}. At this time, however, the framework of inference systems with corules was still to come, so there was no formal model against which to check the given operational semantics, which, indeed, derived spurious results in some cases, as illustrated in \refToSection{operational} at page \pageref{sum-example}. 
The operational semantics provided in the current paper solves this problem, and is proved to be sound with respect to the abstract semantics. Moreover, we adopt a simpler representation of cyclic objects through capsules \cite{JeanninK12}.  A type system has been proposed
\cite{AnconaZ13} for the previous version of $\coFJ$ \EZ{to prevent \emph{unsafe} use of the ``undetermined'' value}. We leave to further work the investigation of typing issues for the approach presented in this paper.

\section{Conclusion}\label{sect:conclu}
The Java-like calculus presented in this paper promotes a novel programming style, which smoothly incorporates support for  
cyclic data structures and coinductive reasoning, in the object-oriented paradigm. 
\EZ{Our contribution is foundational: we provide an abstract semantics based on corules and show that it is possible to define a \emph{sound} operational model; such operational semantics is inductive, syntax-directed and deterministic, hence can be directly turned into an interpreter.
In order to get a ``real-world'' language, of course many other issues should be taken into account. }

\EZ{Our prototype implements the \emph{abstract} semantics on top of a Prolog meta-interpreter supporting flexible regular corecursion \cite{AnconaDZ17}. In this way, the inference system is naturally translated in Prolog\footnote{A logic program can be seen as an inference system where judgments are atoms.}, cyclic terms are natively supported, and their equality handled by unification. A fully-fledged interpreter of the \emph{operational} semantics should directly handle these issues and, moreover, attempt at some optimization.\EZComm{dire qualcosa di pi\`u?}}

\EZ{The current paper does not deal with types: an important concern is to guarantee \emph{type soundness}, statically ensuring that an undetermined value never occurs as receiver of field access or method invocation, as investigated in \cite{AnconaZ13} for the previous $\coFJ$ version \cite{AnconaZ12}.}

\EZ{Another issue is how to train developers to write codefinitions.  Standard recursion is non-trivial as well for beginners, whereas it becomes quite natural after understanding its mechanism. For regular corecursion the same holds, with is the additional difficulty of reasoning on infinite structures. Intuitively, the codefinition can be regarded as a base case to be applied when a loop is detected. Moreover, again as for standard recursion, this novel programming style could be integrated with proof techniques to show
the correctness of algorithms on cyclic data structures. Such proofs could be mechanized in proof assistants, as Agda, that provide built-in support for coinductive definitions and proofs by coinduction.}

\EZ{Finally, a non-trivial challenge is how to integrate regular corecursion, requiring to detect ``the same call'', with the notion of mutable state. Likely, some immutability constraints will be needed, or a variant of the model where such a check requires a stateless computation. Another solution is to consider the check as an assertion that can be disabled if the programmer has verified the correctness of the method by hand or assisted by a tool.}


The semantics of flexible regular corecursion in the paper is the operational counterpart of that obtained by considering recursive functions as relations, and recursive definitions (with codefinition) as inference systems (with corules). We prove that the operational semantics is \emph{sound} with respect to that interpretation. Obviously, \emph{completeness} does not hold in general, since the abstract semantics deals with  not only cyclic data structures (such as $[2,1]^\omega$), but arbitrary non-well-founded structures (such as the list of natural numbers). 
\EZ{Even considering only
regular proof trees in the abstract semantics, in some subtle cases there is more than one admissible result\footnote{\EZ{For instance, the list with no repetitions extracted from $[1,2]^\omega$ can be either $[1,2]$ or $[2,1]$. }}, whereas the operational semantics, being deterministic, finds ``the first'' among such results, as reasonable in an implementation\EZComm{la sem. intermedia invece dovrebbe trovarli tutti}. We plan to investigate such completeness issues in further work, also in the more general framework of inference systems, that is, to characterize judgments which have a regular proof tree.} 

We also plan to study how to deal with flexible corecursion in other programming paradigms, notably in the functional paradigm, and to compare on a formal basis this approach with the CoCaml approach relying on solvers, rather than codefinitions. 

As already discussed in the Introduction, lazy evaluation and regular corecursion are complementary approaches to deal with infinite data structures. With the lazy approach, arbitrary \EZ{(computable)} non-well-founded data structures are supported. However, we cannot compute results which need to explore the whole structure, whereas, with regular corecursion, this becomes possible for cyclic structures: for instance we can compute \texttt{allPos one\_two}, which diverges in Haskell. 
A natural question is then whether it is possible to extend the regular corecursion approach to manage also non-regular objects, thus overcoming the principal drawback with respect to the lazy approach. 
A possible interesting direction, exploiting the work of Courcelle \cite{Courcelle83} on infinite trees, could be to move from regular to \emph{algebraic} objects. 

\bibliography{bib}

\appendix
\section{Proofs}

\begin{proofOf}{\refToLemma{op-conservative-plus}}
The proof is by induction on the definition of $\opsem{\E}{\emptyset}{\callEnv}{\open}{\emptyset}$.\\
\begin{description}
\item[\refToRule{val}] 
By hypothesis $\Pair{\open}{\emptyset}$ is a capsule, then, 
since the environment is empty, $\open$ is a closed \FJ value, 
hence, we can prove by induction on $\open$ that ${\validInd{\isFJ}{\BigStepFJ{\val}{\val}}}$. 

\item[\refToRule{field}] 
By hypothesis we have $\opsem{\E}{\emptyset}{\callEnv}{\open}{\emptyset}$, with $\unf(\open, \emptyset) = \open = \NewExpr{\C}{\open_1,\ldots,\open_n}$ and $n\ge 1$, then we get $\validInd{\isFJ}{\BigStepFJ{\E}{\open}}$ by induction hypothesis. 
Hence we get the thesis by rule \refToRule{\FJ-field}. 

\item[\refToRule{new}] 
By hypothesis, we have 
$\opsem{\E_i}{\emptyset}{\callEnv}{\open_i}{\mapEnv'_i}$, for all $i \in 1..n$.
Since $\bigmapUnion{i\in 1..n}{\mapEnv'_i} = \emptyset$, we have $\mapEnv'_i = \emptyset$ for all $i \in 1..n$, hence, by induction hypothesis, we get 
$\validInd{\isFJ}{\BigStepFJ{\E_i}{\open_i}}$.
Then, we get the thesis by rule \refToRule{\FJ-new}.

\item[\refToRule{invk-ok}] 
By hypothesis we have  $\opsem{\E_i}{\emptyset}{\callEnv}{\open_i}{\mapEnv'_i}$ for all $i\in 0..n$, and $\opsem{\Subst{\Subst{\E}{\open_0}{\kwThis}}{\openBar}{\xBar}}{\mapEnvU}{\UpdateEnvCo{\callEnv}{\call}{\x}}{\open}{\emptyset}$. 
Then, by \refToLemma{env_contained}, we get $\mapEnvU \leq \emptyset$, thus $\mapEnvU = \emptyset$, and, since $\mapEnv'_i\leq \mapEnvU$, for all $i \in 0..n$, by definition, we also get $\mapEnv'_i = \emptyset$ for all $i \in 0..n$. 
Therefore, by induction hypothesis, we get $\validInd{\isFJ}{\BigStepFJ{\E_i}{\open_i}}$, for all $i \in 0..n$, and $\validInd{\isFJ}{\BigStepFJ{\Subst{\Subst{\E}{\open_0}{\kwThis}}{\openBar}{\xBar}}{\open}}$, 
hence, we get the thesis by rule \refToRule{\FJ-invk}. 

\item[\refToRule{invk-check}] This case is empty since to apply the rule it should be $\UpdateEnv{\mapEnvPrime}{\x}{\open} \neq\emptyset$.
\item[\refToRule{corec}] This case is empty since to apply the rule it should be $\UpdateEnv{\mapEnvPrime}{\x}{\x} \neq\emptyset$.
\item[\refToRule{look-up}] This case is empty since to apply the rule it should be $\mapEnv\neq\emptyset$, while $\x\in\dom{\mapEnv}$. 
\end{description}
\end{proofOf}

\begin{proofOf}{\refToProp{solutions}}\hspace*{\fill}
\begin{enumerate}
\item If $\undef{\unf(\open,\mapEnv)}$, then trivially $\open = \x$ and $\undef{\unf(\mapEnv(\x),\mapEnv)}$. 
Hence, by definition of $\varEquiv{\mapEnv}{}{}$, we get the thesis. 
\item If $\x$ is a free variable in $\expCaps{\open}{\mapEnv}$, then by definition of tree expansion we have $\undef{\unf(\x,\mapEnv)}$, hence the thesis follows by \ref{prop:solutions:1}. 
\item It is enough to show that, if $\mapEnv(x) = y$, then $\subst(x) = \subst(y)$, and this is true as 
$\subst(x) = \ApplySubst{\mapEnv(x)}{\subst} = \subst(y)$, since $\subst \in \solutions(\mapEnv)$. 
\item We know that either $\open = \NewExpr{\C}{\open_1,\ldots,\open_n}$ or $\open = \x$ and $\mapEnv^n(\x) = \NewExpr{\C}{\open_1,\ldots,\open_n}$. 
In the former case, the thesis is immediate. 
In the latter, it follows by induction on $n$, since $\ApplySubst{\x}{\subst} = \subst(\x) = \ApplySubst{\mapEnv(\x)}{\subst}$, as $\subst \in \solutions(\mapEnv)$. 
\item The proof is by coinduction. 
If $\undef{\unf(\open,\mapEnv)}$, then $\open = \x \in \dom{\mapEnv}$, $\expCaps{\x}{\mapEnv} = \x$ hance the thesis is immeidate. 
If $\unf(\open,\mapEnv) = \NewExpr{\C}{\open_1,\ldots,\open_n}$, then $\expCaps{\open}{\mapEnv} = \NewExpr{\C}{\expCaps{\open_1}{\mapEnv}, \ldots, \expCaps{\open_n}{\mapEnv}}$.
If $\ApplySubst{\open}{\subst} = \ApplySubst{\expCaps{\open}{\mapEnv}}{\subst}$, since by point \ref{prop:solutions:4} we have $\ApplySubst{\open}{\subst} = \NewExpr{\C}{\ApplySubst{\open_1}{\subst},\ldots,\ApplySubst{\open_n}{\subst}}$ and, by definition, $\ApplySubst{\expCaps{\open}{\mapEnv}}{\subst} = \NewExpr{\C}{\ApplySubst{\expCaps{\open_1}{\mapEnv}}{\subst},\ldots,\ApplySubst{\expCaps{\open_n}{\mapEnv}}{\subst}}$, hence we get $\ApplySubst{\open_i}{\subst} = \ApplySubst{\expCaps{\open_i}{\mapEnv}}{\subst}$ for all $i \in 1..n$, as needed. 
\end{enumerate}
\end{proofOf}

\begin{proofOf}{\refToLemma{tree-sem-sound}}
We first prove that $\semCaps{\open_2}{\mapEnv_2}  \subseteq \semCaps{\open_1}{\mapEnv_1}$. 
Let $\subst_2 \in \solutions(\mapEnv_2$. 
We have to find $\subst_1 \in \solutions(\mapEnv_1)$ such that $\ApplySubst{\open_2}{\subst_2} = \ApplySubst{\open_1}{\subst_1}$. 
We define a regular system of equations, with variables in $\dom{\mapEnv_1}$, which is finite by definition,  as follows: 
\[
s(x) = \begin{cases}
\subst_2(y)  &  \exists x'\in \Undetermined{\mapEnv_1}. \varEquiv{\mapEnv_1}{x}{x'} \text{ and } x'\varRel y \\
\mapEnv_1(x) & \text{otherwise}
\end{cases}
\]
The system is well-defined because, if there are $x_1,x_2 \in \Undetermined{\mapEnv_1}$ such that $\varEquiv{\mapEnv_1}{x}{x_1}$, $\varEquiv{\mapEnv_1}{x}{x_2}$,  $x_1\varRel y_1$ and $x_2\varRel y_2$, then $\varEquiv{\mapEnv_1}{x_1}{x_2}$, since $\varEquiv{\mapEnv_1}{}{}$ is an equivalence relaition, thus $\varEquiv{\mapEnv_2}{y_1}{y_2}$, as $\varRel$ is a $\mapEnv_1,\mapEnv_2$-renaming; 
hence, by \refToPropItem{solutions}{3}, we have $\subst_2(y_1) = \subst_2(y_2)$.
By results in \cite{Courcelle83}, there exists a solution $\subst_1$ of $s$, we now show that $\subst_1 \in \solutions(\mapEnv_1)$, that is, for all $x \in \dom{\mapEnv_1}$, $\subst_1(x) = \ApplySubst{\mapEnv_1(x)}{\subst_1}$. 
There are two cases:
if there is $x'\in\Undetermined{\mapEnv_1}$ such that $\varEquiv{\mapEnv_1}{x}{x'}$ and $x'\varRel y$, then 
$\undef{\unf(x,\mapEnv_1)}$ and 
$\undef{\unf(\mapEnv_1(x), \mapEnv_1)}$, hence $\mapEnv_1(x)$ is a variable, and,
since $\varEquiv{\mapEnv_1}{x}{\mapEnv_1(x)}$,  we get $\varEquiv{\mapEnv_1}{\mapEnv_1(x)}{x'}$, hence, by definition of $s$,   $\subst_1(x) = \subst_2(y) = \subst_1(\mapEnv_1(x)) = \ApplySubst{\mapEnv_1(x)}{\subst_1}$;
otherwise, 
we have  $\subst_1(x) = \ApplySubst{s(x)}{\subst_1} = \ApplySubst{\mapEnv_1(x)}{\subst_1}$, as needed.

Finally, we can prove by coinduction that, for all open  values $\open'_1$ and $\open'_2$ such that $\FV{\open'_1}\subseteq\dom{\mapEnv_1}$ and $\FV{\open'_2}\subseteq \dom{\mapEnv_2}$, if $\expCaps{\open'_1}{\mapEnv_1} =_\varRel \expCaps{\open'_2}{\mapEnv_2}$ then $\ApplySubst{\open'_1}{\subst_1} = \ApplySubst{\open'_2}{\subst_2}$. 
There are two cases:
\begin{itemize}
\item if $\undef{\unf(\open'_1,\mapEnv_1)}$, then $\open'_1 {=} x$, $\open'_2 = y$, $\undef{\unf(y, \mapEnv_2)}$ and $x\varRel y$, hence the thesis follows by construction of $\subst_1$;
\item if $\unf(\open'_1,\mapEnv_1) {=} \NewExpr{\C}{\open_{1,1},\ldots,\open_{1,n}}$, then $\expCaps{\open'_1}{\mapEnv_1} = \NewExpr{\C}{\expCaps{\open_{1,1}}{\mapEnv_1},\ldots,\expCaps{\open_{1,n}}{\mapEnv_1}}$ and $\expCaps{\open'_2}{\mapEnv_2} = \NewExpr{\C}{\expCaps{\open_{2,1}}{\mapEnv_2},\ldots,\expCaps{\open_{2,n}}{\mapEnv_2}}$, 
hence, for all $1\in 1..n$ we have $\expCaps{\open_{1,i}}{\mapEnv_1} =_R \expCaps{\open_{2,i}}{\mapEnv_2}$, then the thesis follows by coinductive hypothesis. 
\end{itemize}
Therefore, this proves that $\semCaps{\open_2}{\mapEnv_2} \subseteq \semCaps{\open_1}{\mapEnv_1}$. 
To get the other inclusion, it is enough to note that, if $\expCaps{\open_1}{\mapEnv_1} =_\varRel \expCaps{\open_2}{\mapEnv_2}$, then 
$\expCaps{\open_2}{\mapEnv_2} =_{\varRel\op} \expCaps{\open_1}{\mapEnv_1}$, hence the thesis follows from what we have just proved. 
\end{proofOf}

\begin{proofOf}{\refToProp{sem-eq}} \hspace*{\fill}
If $\undef{\unf(\open_1,\mapEnv_1)}$ and $\unf(\open_2,\mapEnv_2) = \NewExpr{\C}{\open'_1,\ldots,\open'_n}$, then $\open_1 = x$ for some undetermined variable $x \in \dom{\mapEnv_1}$ hence $\semCaps{\open_1}{\mapEnv_1} = \coFJaVals$. 
Instead, if $\val \in \semCaps{\open_2}{\mapEnv_2}$, we have $\val = \NewExpr{\C}{\val_1,\ldots, \val_n}$. 
Therefore, the value $\NewExpr{\C}{\val'_1,\ldots,\val'_{n+1}} \in \coFJaVals$ exists by definition of values, but it does not belong to $\semCaps{\open_2}{\mapEnv_2}$, which is a contradiction. 
This proves \ref{prop:sem-eq:1} and the first part of \ref{prop:sem-eq:2}. 
To conclude, assume $\unf(\open_1,\mapEnv_1) = \NewExpr{\C}{\open_{1,1},\ldots, \open_{1,n}}$ and $\unf(\open_2,\mapEnv_2) = \NewExpr{\C}{\open_{2,1},\ldots,\open_{2,n}}$, and consider $i \in 1..n$. 
Consider a solution $\subst_1 \in \solutions(\mapEnv_1)$, then $\ApplySubst{\open_{1,i}}{\subst_1} \in \semCaps{\open_{1,i}}{\mapEnv_1}$, and we have to prove $\ApplySubst{\open_{1,i}}{\subst_1} \in \semCaps{\open_{2,i}}{\mapEnv_2}$. 
By hypothesis, we know that there is a solution $\subst_2 \in \solutions(\mapEnv_2)$ such that $\ApplySubst{\open_1}{\subst_1} = \ApplySubst{\open_2}{\subst_2}$, 
hence, by \refToPropItem{solutions}{4}, we get 
$\NewExpr{\C}{\ApplySubst{\open_{1,1}}{\subst_1},\ldots,\ApplySubst{\open_{1,n}}{\subst_1}} = \NewExpr{\C}{\ApplySubst{\C}{\open_{2,1}}{\subst_2},\ldots,\ApplySubst{\open_{2,n}}{\subst_2}}$, and in particular
$\ApplySubst{\open_{1,i}}{\subst_1} = \ApplySubst{\open_{2,i}}{\subst_2} \in \semCaps{\open_{2,i}}{\mapEnv_2}$, as needed. 
This proves $\semCaps{\open_{1,i}}{\mapEnv_1} \subseteq \semCaps{\open_{2,i}}{\mapEnv_2}$, the other inclusion is immeidate. 
\end{proofOf}

\begin{proofOf}{\refToLemma{tree-sem-complete}}
Following \cite{Courcelle83}, given an infinite open value $t$, each subtree of $t$ is identified by a finite sequence $\beta$ of positive natural numbers, and it will be denoted by $t(\beta)$. 
Extending the notation for capsules, we denote by $\semCaps{t}{\mapEnv}$ the set $\{ \ApplySubst{t}{\subst} \mid \subst \in \solutions(\mapEnv) \}$. 
Set $t_1 = \expCaps{\open_1}{\mapEnv_1}$ and $t_2 = \expCaps{\open_2}{\mapEnv_2}$, we prove that for all $\beta$, if $\semCaps{t_1(\beta)}{\mapEnv_1} = \semCaps{t_2(\beta)}{\mapEnv_2}$ then 
either $t_1(\beta)$ and $t_2(\beta)$ are both variables, 
or $t_1(\beta) = \NewExpr{\C}{t_1(\beta1),\ldots,t_1(\beta n)}$, $t_2(\beta) = \NewExpr{\C}{t_2(\beta 1),\ldots,t_2(\beta n)}$ and, for all $i \in 1..n$, $\semCaps{t_1(\beta i)}{\mapEnv_1} = \semCaps{t_2(\beta i)}{\mapEnv_2}$. 
The proof is by induction on $\beta$, since by hypothesis $\semCaps{t_1}{\mapEnv_1} = \semCaps{t_2}{\mapEnv_2}$ and using \refToProp{sem-eq}. 

We define a relation $\varRel \subseteq \FV{t_1} \times \FV{t_2}$ as follows: 
$x\varRel y$ iff there is a sequence $\beta$ such that $x = t_1(\beta)$ and $y = t_2(\beta)$. 
We verify it is a $\mapEnv_1,\mapEnv_2$-renaming. . 
First note that if $x\varRel y$, since $\semCaps{t_1}{\mapEnv_1} = \semCaps{t_2}{\mapEnv_2}$, for all $\subst_1\in \solutions(\mapEnv_1)$ and $\subst_2\in \solutions(\mapEnv_2)$, if $\ApplySubst{t_1}{\subst_1} = \ApplySubst{t_2}{\subst_2}$, then $\subst_1(x) = \subst_2(y)$. 
Then, if $x_1\varRel y_1$, $x_2 \varRel y_2$ and $\varEquiv{\mapEnv_1}{x_1}{x_2}$, we have to prove $\varEquiv{\mapEnv_2}{y_1}{y_2}$. 
By contradiction, if $\varEquiv{\mapEnv_2}{y_1}{y_2}$ does not hold, since both $y_1$ and $y_2$ are undetermined variables by \refToPropItem{solutions}{2}, there is a solution $\subst_2 \in \solutions(\mapEnv_2)$ such that $\subst_2(y_1) \ne \subst_2(y_2)$.
Because $\semCaps{t_1}{\mapEnv_1} = \semCaps{t_2}{\mapEnv_2}$, there is a solution $\subst_1\in \solutions(\mapEnv_1)$ such that $\ApplySubst{t_1}{\subst_1} = \ApplySubst{t_2}{\subst_2}$, hence, since $x_1\varRel y_1$ and $x_2\varRel y_2$ we get, by definition of $\varRel$, $\subst_1(x_1) = \subst_2(y_1)$ and $\subst_1(x_2) = \subst_2(y_2)$. 
Now, since $\varEquiv{\mapEnv_1}{x_1}{x_2}$, by \refToPropItem{solutions}{3}, we have $\subst_1(x_1) = \subst_1(x_2)$, hence $\subst_2(y_1) = \subst_2(y_2)$, which is a contradiction. 
The proof of the other direction is the same. 

Finally, we can prove by coinduction that, for all $\beta$, $t_1(\beta) =_\varRel t_2(\beta)$, which, in particular, implies the thesis.
\end{proofOf}

\begin{proofOf}{\refToLemma{strict-bisim}}
We define systems of equations $s_2$ and $s_1$ as follows: 
\[
s_2(x) = \begin{cases}
\subst(x)    & x \in \dom{\mapEnv_2\cap\mapEnv_1} \\
\mapEnv_2(x) & \text{otherwise} 
\end{cases}
\]

\[
s_1(x) =\begin{cases}
\subst(x)    & x \in \dom{\mapEnv_1\cap\mapEnv_2} \\
\subst_2(y)  & x \notin \dom{\mapEnv_1\cap\mapEnv_2}\text{ and } \exists x'\in\dom{\mapEnv_1}.\ \varEquiv{\mapEnv_1}{x}{x'}\text{ and } x'\varRel y \\
\mapEnv_1(x) & \text{otherwise}
\end{cases}
\]
Both systems are well-defined since $\varRel$ is a $\mapEnv_2,\mapEnv_1$-renaming (see the proof of \refToLemma{tree-sem-sound} for details), and have solutions $\subst_2$ and $\subst_1$, respectively, by results in \cite{Courcelle83}. 

The fact that $\subst_2 \in \solutions(\mapEnv_2)$ is trivial. 
We now prove $\subst_1 \in \solutions(\mapEnv_1)$. 
We have three cases:
\begin{itemize}
\item If $x\in\dom{\mapEnv_1\cap\mapEnv_2}$, then $\ApplySubst{\mapEnv_1(x)}{\subst_1} = \ApplySubst{(\mapEnv_1\cap\mapEnv_2)(x)}{\subst_1}$ and, by definition, $\FV{(\mapEnv_1\cap\mapEnv_2)(x)} \subseteq \dom{\mapEnv_1\cap\mapEnv_2}$, hence we get 
$\ApplySubst{(\mapEnv_1\cap\mapEnv_2)(x)}{\subset_1} = \ApplySubst{(\mapEnv_1\cap\mapEnv_2)(x)}{\subst} = \subst(x)$;
thus $\ApplySubst{\mapEnv_1(x)}{\subst_1} = \subst(x) = \subst_1(x)$. 
\item If $x\notin \dom{\mapEnv_1\cap\mapEnv_2}$ and there is $x'\in \dom{\mapEnv_1}$ such that $\varEquiv{\mapEnv_1}{x}{x'}$ and $x'\varRel y$, then 
$\subst_1(x) = \subst_2(y)$, $\undef{\unf(\mapEnv_1(x),\mapEnv_1)}$ and $\varEquiv{\mapEnv_1}{x}{\mapEnv_1(x)}$, hence we have to prove $\subst_1(\mapEnv_1(x)) = \subst_2(y)$. 
Now, if $\mapEnv_1(x) \in \dom{\mapEnv_1\cap\mapEnv_2}$, then $\mapEnv_1(x) \in \dom{\mapEnv_2}$, hence, since $\varRel$ is strict, we have $\varEquiv{\mapEnv_2}{\mapEnv_1(x)}{y}$. 
Thus we get, by \refToPropItem{solutions}{3}, $\subst_2(y) = \subst_2(\mapEnv_1(x))$, and, by definition, we have $\subst_2(\mapEnv_1(x)) = \subst(\mapEnv_1(x)) = \subst_1(\mapEnv_1(x))$, hence $\subst_1(x) = \subst_2(y) = \subst_1(\mapEnv_1(x))$. 
\item Otherwise, we have $\subst_1(x) = \ApplySubst{s_1(x)}{\subst_1} = \ApplySubst{\mapEnv_1(x)}{\subst_1}$. 
\end{itemize}
Furthermore, we also have that, for all $x \in \dom{\mapEnv_1\cap\mapEnv_2}$, $<subst_1(x) = \subst(x) = \subst_2(x)$. 

We now prove, by coinduction, that if $\bisim[\varRel]{\Caps{\open'_1}{\mapEnv_1}}{\Caps{\open'_2}{\mapEnv_2}}$ then $\ApplySubst{\open'_1}{\subst_1} = \ApplySubst{\open'_2}{\subst_2}$, which in particular implies the thesis. 
If $\undef{\unf(\open'_1,\mapEnv_1)}$, then $\undef{\unf(\open'_2,\mapEnv_2)}$, $\open'_1 = x$, $\open'_2 = y$ and $x\varRel y$, hence, by definition, $\subst_1(x) = \subst_2(y)$, as needed. 
Otherwise, the thesis follows from \refToPropItem{solutions}{4} and coinduction hypothesis. 
\end{proofOf}

\begin{proofOf}{\refToLemma{env_contained}}
By induction on the definition of $\opsem{\E}{\mapEnv}{\callEnv}{\open}{\mapEnvPrime}$.\\ \PBComm{09/01: Updated}
\begin{description}
\item \refToRule{val}: We have that $\opsem{\open}{\mapEnv}{\callEnv}{\open}{{\mapEnv}}$. The thesis trivially holds since $\mapEnv\leq\mapEnv$.
\item  \refToRule{field}: By inductive hypothesis, $\opsem{\E}{\mapEnv}{\callEnv}{\open}{\mapEnvPrime}$ holds, and $\mapEnv\leq\mapEnvPrime$. Hence, $\opsem{\FieldAccess{\E}{\f}}{\mapEnv}{\callEnv}{\open_i}{\mapEnvPrime}$ holds as well, and $\mapEnv\leq\mapEnvPrime$. 
\item \refToRule{new}: By inductive hypothesis, $\opsem{\E_i}{\mapEnv}{\callEnv}{\open_i}{{\mapEnvPrime}_i}$ holds for all $i \in 1..n$, and $\mapEnv\leq\mapEnvPrime_i$ for all $i \in 1..n$. Hence, $\opsem{\NewExpr{\C}{\E_1,\ldots,\E_n}}{\mapEnv}{\callEnv}{\NewExpr{\C}{\open_1\ldots\open_n}}{\bigsqcup_{i \in 1..n} \mapEnvPrime_i}$ holds as well, and, since $\mapEnv\leq\mapEnvPrime_i$ for all $i \in 1..n$, we have that $\mapEnv\leq\bigsqcup_{i \in 1..n} \mapEnvPrime_i$.
\item \refToRule{invk-ok}: By inductive hypothesis, $\opsem{\E_i}{\mapEnv}{\callEnv}{\open_i}{{\mapEnvPrime}_i}$ holds for all $i \in 1..n$,\\ $\opsem{\Subst{\Subst{\E}{\open_0}{\kwThis}}{\openBar}{\xBar}}{\mapEnvU}{\UpdateEnvCo{\callEnv}{c}{x}}{\open}{\mapEnvPrime}$ holds, $\mapEnv\leq\mapEnvPrime_i$ for all $i \in 1..n$, and $\mapEnvU\leq\mapEnvPrime$. Hence, $\opsem{\MethodCall{\E_0}{\m}{\EBar}}{\mapEnv}{\callEnv}{\open}{\mapEnvPrime}$ holds as well and, since $\mapEnv\leq\mapEnvU$, and by the transitivity of the relation $\leq$, $\mapEnv\leq\mapEnvPrime$.
 \item \refToRule{invk-check}: By inductive hypothesis, $\opsem{\E_i}{\mapEnv}{\callEnv}{\open_i}{{\mapEnvPrime}_i}$ holds for all $i \in 1..n$,\\ 
$\opsem{\Subst{\Subst{\E}{\open_0}{\kwThis}}{\openBar}{\xBar}}{\mapEnvU}{\UpdateEnvCo{\callEnv}{c}{x}}{\open}{\mapEnvPrime}$ holds, 
$\opsem{ \Subst{\Subst{\E}{\open_0}{\kwThis}}{\openBar}{\xBar}}{\mapEnvU \sqcup\UpdateEnv{\mapEnv'}{x}{\open}}{\UpdateEnvCk{\callEnv}{c}{x}}{\open'}{\mapEnv''}$ holds, $\mapEnv\leq\mapEnvPrime_i$ for all $i \in 1..n$, $\mapEnvU\leq\mapEnvPrime$, and $\mapEnvU \sqcup\UpdateEnv{\mapEnv'}{x}{\open}\leq\mapEnv''$. Hence, $\opsem{\MethodCall{\E_0}{\m}{\EBar}}{\mapEnv}{\callEnv}{\x}{\UpdateEnv{\mapEnvPrime}{\x}{\open}}$ holds as well and, since $\mapEnv\leq\mapEnvU\leq\mapEnvPrime$, we have that $\mapEnv\leq\UpdateEnv{\mapEnvPrime}{\x}{\open}$.
 \item \refToRule{corec}: By inductive hypothesis, $\opsem{\E_i}{\mapEnv}{\callEnv}{\open_i}{{\mapEnvPrime}_i}$ holds for all $i \in 1..n$,\\ $\opsem{\Subst{\Subst{\Subst{\E'}{\open_0}{\kwThis}}{\openBar}{\xBar}}{x}{\Any}}{{\UpdateEnv{\mapEnvU}{\x}{\x}}}{\callEnv}{\open}{\mapEnvPrime}$ holds, $\mapEnv\leq\mapEnvPrime_i$ for all $i \in 1..n$, and $\UpdateEnv{\mapEnvU}{\x}{\x}\leq\mapEnvPrime$. Hence, $\opsem{\MethodCall{\E_0}{\m}{\E_1,\ldots,\E_n}}{\mapEnv}{\callEnv}{\open}{\mapEnvPrime}$ holds and, since $\mapEnv\leq\UpdateEnv{\mapEnvU}{\x}{\x}$, and by the transitivity of the relation $\leq$, $\mapEnv\leq{\UpdateEnv{\mapEnvPrime}{\x}{\x}}$.
 \item \refToRule{look-up}: By inductive hypothesis, $\opsem{\E_i}{\mapEnv}{\callEnv}{\open_i}{{\mapEnvPrime}_i}$ holds for all $i \in 1..n$, and $\mapEnv\leq\mapEnvPrime_i$ for all $i \in 1..n$. Hence, $\opsem{\MethodCall{\E_0}{\m}{\EBar}}{\mapEnv}{\callEnv}{\x}{ \mapEnvU }$ holds, and, $\mapEnv\leq\mapEnvU$.                                 
\end{description}
\end{proofOf}

\begin{proofOf}{\refToLemma{boundedness}}
By induction on the definition of $\intsem{\E}{\emptyset}{\callSet}{\val}{\callSetRes}$.
\begin{description}

\item[\refToRule{IN-val}] By hypothesis, we have that $\intsem{\val}{\emptyset}{\callSet}{\val}{\emptyset}$. The thesis is immediate by applying rule \refToRule{abs-co-val}.

\item[\refToRule{IN-field}] By hypothesis, we have that $\intsem{\E}{\emptyset}{\callSet}{\val}{\callSetRes}$. By inductive hypothesis, $\evalExtended{\E}{\val}$ holds. Thus, we can apply rule \refToRule{abs-field} and get the thesis.

\item[\refToRule{IN-new}] By hypothesis, we have that $\intsem{\E_i}{\emptyset}{\callSet}{\val_i}{\callSetRes_i}$ for all $i \in 1..n$. By inductive hypothesis, $\evalExtended{\E_i}{\val_i}$ holds for all $i \in 1..n$. Thus, we can apply rule \refToRule{abs-new} and get the thesis. 

\item[\refToRule{IN-invk-ok}-\refToRule{IN-invk-check}] By hypothesis, $\intsem{\E_i}{\emptyset}{\callSet}{\val_i}{\callSetRes_i}$ holds for all $i \in 0..n$, and \mbox{$\intsem{\Subst{\Subst{\E}{\val_0}{\kwThis}}{\vBar}{\xBar}}{\emptyset}{\callSet\cup\{\call\}}{\val}{\callSetRes}$} holds. By inductive hypothesis, $\evalExtended{\E_i}{\val_i}$ holds for all $i \in 0..n$ and and also $\evalExtended{\Subst{\Subst{\E}{\val_0}{\kwThis}}{\vBar}{\xBar}}{\val}$ holds. Thus, we can apply rule \refToRule{abs-invk} and get the thesis. Note that, in order to get the thesis, the third premise of rule \refToRule{invk-check} has not been used.

\item[\refToRule{IN-corec}] By hypothesis, $\intsem{\E_i}{\emptyset}{\callSet}{\val_i}{\callSetRes_i}$ holds for all $i \in 0..n$., and \\ $\intsem{\Subst{\Subst{\Subst{\E'}{\val_0}{\kwThis}}{\vBar}{\xBar}}{u}{\Any}}{\emptyset}{\callSet}{\val}{\callSetRes}$ holds. By inductive hypothesis, $\evalExtended{\E_i}{\val_i}$ holds for all $i\in 0..n$, and also  $\evalExtended{\Subst{\Subst{\Subst{\E'}{\val_0}{\kwThis}}{\vBar}{\xBar}}{u}{\Any}}{\val}$ holds. Thus, we can apply rule \refToRule{abs-co-invk} and get the thesis.

\item[\refToRule{IN-look-up}] This case is empty since to apply the rule it should be $\resEnv\neq\emptyset$.

\end{description}
\end{proofOf}

\EZ{\begin{lemma}\label{lemma:callset-check}
If $\intsem{\E}{\resEnv}{\callSet\cup\{\call\}}{\val}{\callSetRes}$, and $\intsem{\E}{\UpdateEnv{\resEnv}{\call}{\val}}{\callSet}{\val}{\callSetResCheck}$, then $\callSetResCheck\subseteq\callSetRes\setminus\{\call\}$. 
\end{lemma}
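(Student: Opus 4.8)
The plan is to prove the two inclusions $\call\notin\callSetResCheck$ and $\callSetResCheck\subseteq\callSetRes$ separately, which together give $\callSetResCheck\subseteq\callSetRes\setminus\{\call\}$.

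For $\call\notin\callSetResCheck$ I would first establish, by a routine induction on derivations, the invariant that whenever $\intsem{\E}{\resEnv}{\callSet}{\val}{\callSetRes}$ holds then $\callSetRes\cap\dom{\resEnv}=\emptyset$: the set of calls closed through the codefinition is disjoint from the calls already resolved in $\resEnv$. The only non-immediate cases are the invocation rules: for \refToRule{IN-corec} the side condition $\call\notin\dom{\resEnv}$ accounts for the freshly added singleton $\{\call\}$; for \refToRule{IN-invk-check} the conclusion drops the node's own call from the contribution of its second premise and its third premise is discarded, so the disjointness provided by the induction hypothesis is preserved (in fact the induction hypothesis on that third premise is already the ``$\call$-instance'' of the present lemma). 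Applying the invariant to the second hypothesis, whose environment $\UpdateEnv{\resEnv}{\call}{\val}$ has $\call$ in its domain, yields $\call\notin\callSetResCheck$, so it remains to show $\callSetResCheck\subseteq\callSetRes$.

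For this, the idea is an induction on the derivation of $\intsem{\E}{\resEnv}{\callSet\cup\{\call\}}{\val}{\callSetRes}$ with, in each case, a case analysis on the last rule of the derivation of $\intsem{\E}{\UpdateEnv{\resEnv}{\call}{\val}}{\callSet}{\val}{\callSetResCheck}$, whose shape is largely dictated by $\E$. The guiding intuition is that the second derivation behaves like the first, except that it resolves the distinguished call $\call$ directly by \refToRule{IN-look-up} (to the value $\val$ read off from the environment) wherever the first derivation would unfold $\call$'s body; since \refToRule{IN-look-up} neither evaluates a body nor enlarges the result component (only \refToRule{IN-corec} does that), ``shortcutting'' $\call$ can only drop calls from the result, never add new ones, whence $\callSetResCheck\subseteq\callSetRes\setminus\{\call\}$. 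The cases \refToRule{IN-val}, \refToRule{IN-field}, \refToRule{IN-new}, \refToRule{IN-look-up} reduce to the induction hypothesis on the argument/subexpression derivations; the work is in the invocation cases, where in the second derivation a resolved call equal to $\call$ is necessarily handled by \refToRule{IN-look-up} while in the first derivation the (pending) call $\call$ is resolved by \refToRule{IN-invk-ok}, \refToRule{IN-corec} or \refToRule{IN-look-up}, all of which evaluate the same arguments and add to $\callSetRes$ a superset of what the look-up contributes.

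Turning this intuition into a formal induction is the delicate point, and I expect it to be the main obstacle: the intermediate semantics is non-deterministic both in the ``loop'' choices (at an already-pending call one may apply \refToRule{IN-invk-ok} and unfold once more rather than \refToRule{IN-corec} and close) and in the computed value (\refToRule{IN-corec} substitutes an arbitrary value for $\Any$), so the two derivations need not have the same tree shape, and may even disagree on intermediate values (e.g.\ on the class of an object being selected from), which means the induction hypothesis does not apply off the shelf to ``corresponding'' subderivations. To cope, I would strengthen the statement carried through the induction so that it also records, for each already-encountered call, the value to which it is resolved on each side — i.e.\ set up a simulation-like relation between the two derivations rather than a literal node-by-node matching — using the structural facts that \refToRule{IN-corec} is the unique rule enlarging the result component, that $\call$ is never its subject in the second derivation (by the invariant above), and that any call on which \refToRule{IN-corec} fires in the second derivation is, in the first derivation, either already resolved via the environment or eventually closed by \refToRule{IN-corec} as well, hence already in $\callSetRes$.
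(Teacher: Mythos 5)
Your plan follows essentially the same route as the paper: the paper's entire proof of this lemma is the single line ``by induction on the definition of $\intsem{\E}{\resEnv}{\callSet\cup\{\call\}}{\val}{\callSetRes}$'', together with the authors' own marginal remark that the method-call cases require a case analysis on the last rule applied in the derivation of the \emph{other} judgment --- which is precisely the induction-plus-case-analysis you describe for the inclusion $\callSetResCheck\subseteq\callSetRes$. What you add that the paper does not have is the preliminary invariant $\callSetRes\cap\dom{\resEnv}=\emptyset$, which cleanly disposes of the ``$\setminus\{\call\}$'' part of the conclusion (in the paper this is presumably meant to fall out of the main induction, since \refToRule{IN-look-up} contributes nothing to the result set and \refToRule{IN-corec} is blocked on calls in $\dom{\resEnv}$); that invariant does hold by a routine induction, and the decomposition makes the statement easier to manage. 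The obstacle you single out --- that the intermediate semantics is non-deterministic, so the two derivations need not be structurally parallel (one may unfold via \refToRule{IN-invk-ok} where the other closes via \refToRule{IN-corec}), and the induction hypothesis therefore does not apply to ``corresponding'' subderivations without a simulation-style strengthening --- is genuine, and it is exactly the part the paper leaves entirely unworked: neither your proposal nor the paper's one-line proof actually discharges it. So your attempt matches the paper's strategy, is more explicit about the auxiliary facts needed, and is candid about where the real (and, in the paper, unaddressed) work lies.
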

\begin{proof}
By induction on the definition of $\intsem{\E}{\resEnv}{\callSet\cup\{\call\}}{\val}{\callSetRes}$. 
\EZComm{nei casi della chiamata di metodo, ci vuole casistica sulle regole applicate per l'altro judgment}
\end{proof}}

\begin{proofOf}{\refToLemma{callSet}}
For brevity, we use $\RcallSet$ in place of $\callSet\cup\{\call\}$. In rules \refToRule{IN-invk-ok}, \refToRule{IN-invk-check}, \refToRule{IN-corec} and \refToRule{IN-look-up}, $\RcallSet$ abbreviates $\callSet\cup\{\call'\}$, so to distinguish between different calls that could be present in the call trace at the same time.

\noindent By induction on the definition of $\intsem{\E}{\resEnv}{\RcallSet}{\val}{\callSetRes}$.

\begin{description}

\item[\refToRule{IN-val}] By hypothesis, $\intsem{\val}{\resEnv}{\RcallSet}{\val}{\emptyset}$. We can trivially get the thesis by rule \refToRule{IN-val}

\item[\refToRule{IN-field}] By hypothesis, $\intsem{\E}{\resEnv}{\RcallSet}{\val}{\callSetRes}$, and $\call\not\in\callSetRes$. By inductive hypothesis, \mbox{$\intsem{\E}{\resEnv}{\callSet}{\val}{\callSetRes}$} holds. Thus, we can apply rule \refToRule{IN-field} and get the thesis.

\item[\refToRule{IN-new}] By hypothesis, $\intsem{\E_i}{\resEnv}{\RcallSet}{\val_i}{\callSetRes_i}$, and $\call\not\in\callSetRes_i$ for all $i\in 1..n$. By inductive hypothesis, $\intsem{\E_i}{\resEnv}{\callSet}{\val_i}{\callSetRes_i}$ holds for all $i\in 1..n$. Thus, we can apply rule \refToRule{IN-new} and get the thesis.

\item[\refToRule{IN-invk-ok}] By hypothesis, $\intsem{\E_i}{\resEnv}{\RcallSet}{\val_i}{\callSetRes_i}$, and $\call'\not\in\callSetRes_i$ for all $i\in 0..n$. Also by hypothesis, $\intsem{\Subst{\Subst{\E}{\val_0}{\kwThis}}{\vBar}{\xBar}}{\resEnv}{\RcallSet\cup\{\call\}}{\val}{\callSetRes}$ holds, and $\call'\not\in\callSetRes$. By inductive hypothesis, $\intsem{\E_i}{\resEnv}{\callSet}{\val_i}{\callSetRes_i}$ holds for all $i\in 0..n$ and $\intsem{\Subst{\Subst{\E}{\val_0}{\kwThis}}{\vBar}{\xBar}}{\resEnv}{\callSet\cup\{\call\}}{\val}{\callSetRes}$ holds. Thus, we can apply rule \refToRule{IN-invk-ok} and get the thesis.

\item[\refToRule{IN-invk-check}] \EZ{By hypothesis, $\intsem{\E_i}{\resEnv}{\RcallSet}{\val_i}{\callSetRes_i}$, and $\call'\not\in\callSetRes_i$, hence by inductive hypothesis, $\intsem{\E_i}{\resEnv}{\callSet}{\val_i}{\callSetRes_i}$ holds, for all $i\in 0..n$. Moreover, \mbox{$\intsem{\Subst{\Subst{\E}{\val_0}{\kwThis}}{\vBar}{\xBar}}{\resEnv}{\RcallSet\cup\{\call\}}{\val}{\callSetRes}$} holds with either $\call'\not\in\callSetRes$ or $\call=\call'$. Then, $\intsem{\Subst{\Subst{\E}{\val_0}{\kwThis}}{\vBar}{\xBar}}{\resEnv}{\callSet\cup\{\call\}}{\val}{\callSetRes}$ holds in the first case by inductive hypothesis, in the second case since \mbox{$\callSet\cup\{\call\}=\RcallSet\cup\{\call\}$}.\EZComm{se mettessi la side condition $\call\not\in\callSet$ questo secondo caso sarebbe escluso.}  Finally, \mbox{$\intsem{ \Subst{\Subst{\E}{\val_0}{\kwThis}}{\vBar}{\xBar}}{\UpdateEnv{\resEnv}{\call}{\val}}{\RcallSet}{\val}{\callSetResCheck}$} holds, and, from \refToLemma{callset-check},  $\call'\not\in\callSet''$, hence, by inductive hypothesis, $\intsem{ \Subst{\Subst{\E}{\val_0}{\kwThis}}{\vBar}{\xBar}}{\UpdateEnv{\resEnv}{\call}{\val}}{\callSet}{\val}{\callSetResCheck}$ hold. Thus, we can apply rule \refToRule{IN-invk-check} and get the thesis.}

\item[\refToRule{IN-corec}] By hypothesis, $\intsem{\E_i}{\resEnv}{\RcallSet}{\val_i}{\callSetRes_i}$ holds, and $\call'\not\in\callSetRes_i$ for all $i\in 0..n$. Also by hypothesis, $\intsem{\Subst{\Subst{\Subst{\E'}{\val_0}{\kwThis}}{\vBar}{\xBar}}{u}{\Any}}{\resEnv}{\RcallSet}{\val}{\callSetRes}$ holds and $\call'\not\in\callSetRes$. By inductive hypothesis, $\intsem{\E_i}{\resEnv}{\callSet}{\val_i}{\callSetRes_i}$ holds for all $i\in 0..n$ and $\intsem{\Subst{\Subst{\Subst{\E'}{\val_0}{\kwThis}}{\vBar}{\xBar}}{u}{\Any}}{\resEnv}{\callSet}{\val}{\emptyset}$ holds. Thus, we can apply rule \refToRule{IN-corec} and get the thesis.   

\item[\refToRule{IN-look-up}] By hypothesis, $\intsem{\E_i}{\resEnv}{\RcallSet}{\val_i}{\callSetRes_i}$ holds, and $\call'\not\in\callSetRes_i$ for all $i\in 1..n$. By inductive hypothesis, $\intsem{\E_i}{\resEnv}{\callSet}{\val_i}{\callSetRes_i}$ holds $\forall i \in 0..n$. Thus, we can apply rule \refToRule{IN-look-up} and get the thesis.
\end{description}
\end{proofOf}

\begin{lemma}\label{lemma:extEnvs}\EZComm{spostato qui perch\'e serve solo a provare il successivo}
 If $\intsem{\E}{\resEnv}{\callSet}{\val}{\callSetRes}$, then the following judgments hold: 
\begin{enumerate}
\item $\intsem{\E}{\resEnv}{\callSet\cup\{\call\}}{\val}{\callSetRes}$
\item $\intsem{\E}{\UpdateEnv{\resEnv}{\call}{\val'}}{\callSet}{\val}{\callSetRes}$ for any $\val'\in\coFJaVals$.
\end{enumerate}
\end{lemma}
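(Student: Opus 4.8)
The plan is to prove both parts simultaneously by induction on the derivation of $\intsem{\E}{\resEnv}{\callSet}{\val}{\callSetRes}$, so that on every strict subderivation we may assume both the $\callSet$-weakening statement~(1) and the $\resEnv$-extension statement~(2). For the purely structural rules \refToRule{IN-val}, \refToRule{IN-field} and \refToRule{IN-new} there is nothing to do beyond applying the induction hypothesis to each premise and re-applying the same rule: their side conditions constrain only the shape of the evaluated value (through $\fields$ and $\mbody$), not $\callSet$ or $\resEnv$, and the returned trace set $\callSetRes$ is assembled from those of the premises, which the induction hypothesis leaves untouched. Likewise, for part~(1) the rules \refToRule{IN-corec} and \refToRule{IN-look-up} go through verbatim, since enlarging $\callSet$ preserves the side conditions $\call\in\callSet$ and $\resEnv(\call)=\val$ and does not change the output set.

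The real content is in how the added element interacts with the \emph{negative} side conditions of the invocation rules, namely $\call\notin\callSet$ in \refToRule{IN-invk-check} and $\call\notin\dom{\resEnv}$ in \refToRule{IN-corec}. Say the last rule is one of the four invocation rules and evaluates a call $\call_0$. If the element we add — the call $\call$ in part~(1), respectively the binding $\call\mapsto\val'$ in part~(2) — is distinct from $\call_0$, then every side condition is preserved and one just pushes the induction hypothesis through the premises; the only mild care needed is that the body premise of \refToRule{IN-invk-ok} and \refToRule{IN-invk-check} already carries the trace set $\callSet\cup\{\call_0\}$, so for~(1) the induction hypothesis is applied to it with the added call $\call$ (giving $\callSet\cup\{\call_0\}\cup\{\call\}$), and in the checking premise of \refToRule{IN-invk-check} one rewrites $\UpdateEnv{\UpdateEnv{\resEnv}{\call}{\val'}}{\call_0}{\val}$ as $\UpdateEnv{\UpdateEnv{\resEnv}{\call_0}{\val}}{\call}{\val'}$ before invoking the induction hypothesis for~(2). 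The genuinely delicate case is when the added element concerns exactly $\call_0$: a side condition then flips, the same rule can no longer conclude, and one must switch rules — a derivation of $\call_0$ by \refToRule{IN-invk-check} gets replayed by \refToRule{IN-invk-ok} (whose ``$\call_0\in\callSet$'' disjunct has just become available, and whose body premise is literally the second premise of \refToRule{IN-invk-check}), and a derivation of $\call_0$ by \refToRule{IN-corec} gets replayed by \refToRule{IN-look-up}. The step that needs proving here is that the output trace set is still exactly $\callSetRes$, and this is where \refToLemma{callSet} and \refToLemma{callset-check} — the latter bounding the trace set returned by the checking evaluation by $\callSetRes\setminus\{\call_0\}$ — are used; it may also help to first record the invariant, by a separate easy induction, that the returned $\callSetRes$ always consists of calls already present in $\callSet$.

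I expect this last family of cases — the rule switches and the attendant reconciliation of the produced trace sets, together with keeping track of which of $\callSet$ and $\dom{\resEnv}$ each call currently belongs to — to be the main obstacle; everything else is a routine propagation of the induction hypothesis through the premises.
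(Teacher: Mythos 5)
Your map of the proof---routine propagation through the structural rules, with all the difficulty concentrated in the cases where the added element is the very call whose negative side condition it violates---is accurate, but the rule switches you propose for exactly those cases do not work, and in fact the lemma as stated is false, so no proof along these lines can close. For part (2): if the given derivation evaluates $\call$ itself by \refToRule{IN-corec}, then under $\UpdateEnv{\resEnv}{\call}{\val'}$ the side condition $\call\not\in\dom{\resEnv}$ fails, and the only rule left is \refToRule{IN-look-up}, which returns $\val'$---an \emph{arbitrary} value, unrelated to the value $\val$ computed from the codefinition---and whose output set no longer contains $\call$; hence $\intsem{\E}{\UpdateEnv{\resEnv}{\call}{\val'}}{\callSet}{\val}{\callSetRes}$ is simply not derivable. (A similar problem occurs if the original derivation uses \refToRule{IN-look-up} on $\call$, since the update overwrites $\resEnv(\call)$.) For part (1): switching from \refToRule{IN-invk-check} to \refToRule{IN-invk-ok} on $\call$ preserves the value but not the output set: the former concludes with $\bigcup_{i}\callSetRes_i\cup(\callSetRes\setminus\{\call\})$, the latter with $\bigcup_{i}\callSetRes_i\cup\callSetRes$, and the side condition $\call\in\callSetRes$ of \refToRule{IN-invk-check} guarantees these differ. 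Neither \refToLemma{callSet} nor \refToLemma{callset-check} can reconcile this, because the mismatch sits in the conclusion you are required to produce, not in a premise you are free to massage.

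For comparison, the paper does not actually prove this lemma either: its ``proof'' is a one-line stub, and an author annotation concedes that point (2) is false in general, that the lemma is only invoked (in the proof of \refToLemma{rho}) with $\callSetRes=\emptyset$, and that even that restricted form needs a strengthened induction hypothesis. So the correct outcome of your analysis is not that the hard cases can be discharged by switching rules, but that the statement must be weakened---e.g.\ by assuming $\call\notin\callSet$ and $\call\notin\dom{\resEnv}$ and imposing a condition ensuring that $\call$ never triggers \refToRule{IN-corec}, \refToRule{IN-look-up} or \refToRule{IN-invk-check} inside the given derivation---and then generalized so that the induction goes through.
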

\begin{proof}
The proof of both points is by induction on the definition of $\intsem{\E}{\resEnv}{\callSet}{\val}{\callSetRes}$. \EZComm{fare la prova; il secondo lemma nel caso generale \`e falso; entrambi questi lemmi servono solo nel caso $\callSetRes=\emptyset$, per\`o per fare la prova occorre generalizzare}
\end{proof}

\begin{proofOf}{\refToLemma{rho}}
In the proof of this lemma we will use $\resEnv'$ in place of $\UpdateEnv{\resEnv}{\call}{\val'}$. In rules \refToRule{IN-invk-ok}, \refToRule{IN-invk-check}, \refToRule{IN-corec} and \refToRule{IN-look-up}, $\resEnv'$ will be used in place of $\UpdateEnv{\resEnv}{\call'}{\val'}$ so to distinguish between different calls that could be present in the call trace at the same time.

\noindent We proceed by induction on the definition of $\intsem{\E}{\resEnv'}{\callSet}{\val}{\callSetRes}$.

\begin{description}

\item[\refToRule{IN-val}] By hypothesis, $\intsem{\val}{\resEnv'}{\callSet}{\val}{\emptyset}$. The thesis trivially holds by applying rule \refToRule{IN-val}.

\item[\refToRule{IN-field}] By hypothesis, $\intsem{\E}{\resEnv'}{\callSet}{\val}{\callSetRes}$. By inductive hypothesis, $\intsem{\E}{\resEnv}{\callSet}{\val}{\callSetRes}$ holds. Thus, we can apply rule \refToRule{IN-field} and get the thesis.

\item[\refToRule{IN-new}] By hypothesis, $\intsem{\E_i}{\resEnv'}{\callSet}{\val_i}{\callSetRes}$ holds for all $i\in 1..n$. By inductive hypothesis, $\intsem{\E_i}{\resEnv}{\callSet}{\val_i}{\callSetRes_i}$ holds for all $i\in 1..n$. Thus, we can apply rule \refToRule{IN-new} and get the thesis.

\item[\refToRule{IN-invk-ok}] By hypothesis, $\intsem{\E_i}{\resEnv'}{\callSet}{\val_i}{\callSetRes}$ holds for all $i\in 0..n$. Also by hypothesis, $\intsem{\Subst{\Subst{\E}{\val_0}{\kwThis}}{\vBar}{\xBar}}{\resEnv'}{\callSet\cup\{\call\}}{\val}{\callSetRes}$ holds. In order to use the inductive hypothesis, we apply \refToLemma{extEnvs} to the hypothesis $\intsem{\call'}{\resEnv}{\callSet}{\val'}{\emptyset}$ and obtain $\intsem{\call'}{\resEnv}{\callSet\cup\{\call\}}{\val'}{\emptyset}$. By inductive hypothesis, $\intsem{\E_i}{\resEnv}{\callSet}{\val_i}{\callSetRes_i}$ holds for all $i\in 1..n$, and $\intsem{\Subst{\Subst{\E}{\val_0}{\kwThis}}{\vBar}{\xBar}}{\resEnv}{\callSet\cup\{\call\}}{\val}{\callSetRes}$ holds. Thus, we can apply rule \refToRule{IN-invk-ok} and get the thesis.

\item[\refToRule{IN-invk-check}] By hypothesis, $\intsem{\E_i}{\resEnv'}{\callSet}{\val_i}{\callSetRes}$ holds for all $i\in 0..n$. Also by hypothesis, $\intsem{\Subst{\Subst{\E}{\val_0}{\kwThis}}{\vBar}{\xBar}}{\resEnv'}{\callSet\cup\{\call\}}{\val}{\callSetRes}$ and $\intsem{ \Subst{\Subst{\E}{\val_0}{\kwThis}}{\vBar}{\xBar}}{\UpdateEnv{\resEnv'}{\call}{\val}}{\callSet}{\val}{\callSetResCheck}$ hold. In order to use the inductive hypothesis, we apply \refToLemma{extEnvs} to the hypothesis $\intsem{\call'}{\resEnv}{\callSet}{\val'}{\emptyset}$ and obtain $\intsem{\call'}{\UpdateEnv{\resEnv}{\call}{\val}}{\callSet\cup\{\call\}}{\val'}{\emptyset}$. By inductive hypothesis, $\intsem{\E_i}{\resEnv}{\callSet}{\val_i}{\callSetRes_i}$ holds for all $i\in 1..n$, $\intsem{\Subst{\Subst{\E}{\val_0}{\kwThis}}{\vBar}{\xBar}}{\resEnv}{\callSet\cup\{\call\}}{\val}{\callSetRes}$ and $\intsem{ \Subst{\Subst{\E}{\val_0}{\kwThis}}{\vBar}{\xBar}}{\UpdateEnv{\resEnv}{\call}{\val}}{\callSet}{\val}{\callSetResCheck}$ hold. Thus, we can apply rule \refToRule{IN-invk-check} and get the thesis.

\item[\refToRule{IN-corec}] By hypothesis, $\intsem{\E_i}{\resEnv'}{\callSet}{\val_i}{\callSetRes}$ holds for all $i\in 0..n$. Also by hypothesis, $\intsem{\Subst{\Subst{\Subst{\E'}{\val_0}{\kwThis}}{\vBar}{\xBar}}{u}{\Any}}{\resEnv'}{\callSet}{\val}{\callSetRes}$ holds. By inductive hypothesis, $\intsem{\E_i}{\resEnv}{\callSet}{\val_i}{\callSetRes_i}$ holds for all $i\in 1..n$ and $\intsem{\Subst{\Subst{\Subst{\E'}{\val_0}{\kwThis}}{\vBar}{\xBar}}{u}{\Any}}{\resEnv}{\callSet}{\val}{\emptyset}$ holds. Thus, we can apply rule \refToRule{IN-corec} and get the thesis. 

\item[\refToRule{IN-look-up}] By hypothesis, $\intsem{\E_i}{\resEnv'}{\callSet}{\val_i}{\callSetRes}$ holds for all $i\in 0..n$. By inductive hypothesis, $\intsem{\E_i}{\resEnv}{\callSet}{\val_i}{\callSetRes_i}$ holds for all $i\in 1..n$. In order to proceed, we have to distinguish between two cases (recall that $\call'=\MethodCall{\val_0'}{\m'}{\vBar'}$ and $\call=\MethodCall{\val_0}{\m}{\vBar}$):
\begin{itemize}

\item If $\call'\neq\call$, then $\resEnv'(\call) = \resEnv(\call) = \val$, hence we get the thesis by applying rule \refToRule{look-up}. 

\item If $\call'=\call$, then, since $\intsem{\call'}{\resEnv}{\callSet}{\val'}{\callSetRes}$ holds by hypothesis, and $\intsem{\E_i}{\resEnv}{\callSet}{\val_i}{\callSetRes}$ holds by inductive hypothesis for all $i\in 0..n$, the judgment $\intsem{\MethodCall{\E_0}{\m}{\EBar}}{\resEnv}{\callSet}{\val'}{\callSetRes}$ can still be derived. \PBComm{Può andare bene come spiegazione?}

\end{itemize} 

\end{description}
\end{proofOf}

\begin{proofOf}{\refToTheorem{concrToInt}}
The proof is by induction on the definition of $\opsem{\E}{\mapEnv}{\callEnv}{\open}{\mapEnvPrime}$.
\begin{description}
\item[\refToRule{val}] Immediate by \refToRule{IN-val}, as $\callSetOp{\callEnv}{\mapEnv}{}\setminus\callSetOp{\callEnv}{\mapEnv}{} = \emptyset$.

\item[\refToRule{field}] 
By induction hypothesis, we have $\intsem{\ApplySubst{\E}{\subst}}{\resEnvOp{\callEnv}{\subst}}{\callSetTr{\callEnv}{\subst}}{\ApplySubst{\open}{\subst}}{\callSet}$, 
for some $\callSetDif{\callEnv}{\mapEnv}{\mapEnv'}{\subst} \subseteq \callSet \subseteq \callSetOp{\callEnv}{\mapEnv'}{\subst}$ and, 
by \refToPropItem{solutions}{4}, we get $\ApplySubst{\open}{\subst} = \NewExpr{\C}{\ApplySubst{\open_1}{\subst},\ldots,\ApplySubst{\open_n}{\subst}}$, hence the thesis follows by rule \refToRule{IN-field}.

\item[\refToRule{new}]  
Since $\mapEnv'_i \leq \bigmapUnion{i\in 1..n}{\mapEnv'_i}$, we have $\subst \in \solutions(\mapEnv'_i)$ by  \refToLemma{env-leq-sol}, hence, by induction hypothesis, we have
$\intsem{\ApplySubst{\E_i}{\subst}}{\resEnvOp{\callEnv}{\subst}}{\callSetTr{\callEnv}{\subst}}{\ApplySubst{\open_i}{\subst}}{\callSet_i}$, 
with $\callSetDif{\callEnv}{\mapEnv}{\mapEnv_i}{\subst} \subseteq \callSet_i \subseteq \callSetOp{\callEnv}{\mapEnv'_i}{\subst}$,  for all $i\in 1..n$. 
It is easy to see that $\callSetOp{\callEnv}{\bigmapUnion{i\in 1..n}{\mapEnv'_i}}{\subst} = \bigcup_{i\in 1..n} \callSetOp{\callEnv}{\mapEnv'_i}{\subst}$, 
then, if $\mapEnv' = \bigmapUnion{i\in 1..n}{\mapEnv'_i}$,  $\callSetDif{\callEnv}{\mapEnv}{\mapEnv'}{\subst}  \subseteq \bigcup_{i\in 1..n} \callSet_i \subseteq \callSetOp{\callEnv}{\mapEnv'}{\subst}$, 
hence the thesis follows by rule \refToRule{IN-new}. 

\item [\refToRule{invk-ok}] 
Since $\mapEnv'_i \leq \mapEnvU \leq \mapEnv'$, by \refToLemma{env-leq-sol}, we get $\subst \in \solutions(\mapEnv'_i)$, for all $i \in 1..n$. 
By induction hypothesis, we have 
$\intsem{\ApplySubst{\E_i}{\subst}}{\resEnvOp{\callEnv}{\subst}}{\callSetTr{\callEnv}{\subst}}{\ApplySubst{\open_i}{\subst}}{\callSet_i}$, 
with $\callSetDif{\callEnv}{\mapEnv}{\mapEnv'_i}{\subst} \subseteq \callSet_i \subseteq \callSetOp{\callEnv}{\mapEnv'_i}{\subst}$, for all $i \in 1..n$, and 
$\intsem{\ApplySubst{\Subst{\Subst{\E}{\open_0}{\kwThis}}{\openBar}{\xBar}}{\subst}}{\resEnvOp{\callEnv}{\subst}}{\callSetTr{\UpdateEnvCo{\callEnv}{\call}{\x}}{\subst}}{\ApplySubst{\open}{\subst}}{\callSet}$, 
with $\callSetDif{\UpdateEnvCo{\callEnv}{\call}{\x}}{\mapEnvU}{\mapEnv'}{\subst} \subseteq \callSet \subseteq \callSetOp{\UpdateEnvCo{\callEnv}{\call}{\x}}{\mapEnv'}{\subst}$.
Then we get $\callSetDif{\callEnv}{\mapEnv}{\mapEnv'}{\subst}  \subseteq \bigcup_{i\in 1..n} \callSet_i \cup \callSet \subseteq \callSetOp{\callEnv}{\mapEnv'}{\subst}$,
hence, since $\callSetTr{\UpdateEnv{\callEnv}{\call}{\x}}{\subst} = \callSetTr{\callEnv}{\subst} \cup \{\ApplySubst{\call}{\subst} \}$, 
in order to conclude by rule \refToRule{IN-invk-ok}, we have to prove that $\ApplySubst{\call}{\subst} \notin \callSet  \setminus \callSetTr{\callEnv}{\subst}$. 
Suppose $\ApplySubst{\call}{\subst} \in \callSet \setminus \callSetTr{\callEnv}{\subst} \subseteq \callSetOp{\UpdateEnvCo{\callEnv}{\call}{\x}}{\mapEnv'}{\subst} \setminus \callSetTr{\callEnv}{\subst}$, 
hence, there is $\call' \in \EZ{\dom{\mapEnv'\circ(\UpdateEnv{\callEnv}{\call}{\x})^{\neg\checklabel}}}$ such that $\ApplySubst{\call'}{\subst} = \ApplySubst{\call}{\subst}$ and $\ApplySubst{\call'}{\subst} \notin \callSetTr{\callEnv}{\subst}$. 
Since $\x \notin \dom{\mapEnv'}$, then $\call'\ne\call$, hence $\call'\in\EZ{\dom{\callEnvCo}}$, hence $\ApplySubst{\call'}{\subst} \in \callSetTr{\callEnv}{\subst}$, which is a contradiction. 

\item [\refToRule{invk-check}]
Since $\mapEnv'_i \leq \mapEnvU \leq \mapEnv'$, by \refToLemma{env-leq-sol}, we get $\subst \in \solutions(\mapEnv'_i)$, for all $i \in 1..n$. 
By induction hypothesis, we have 
$\intsem{\ApplySubst{\E_i}{\subst}}{\resEnvOp{\callEnv}{\subst}}{\callSetTr{\callEnv}{\subst}}{\ApplySubst{\open_i}{\subst}}{\callSet_i}$, 
with $\callSetDif{\callEnv}{\mapEnv}{\mapEnv'_i}{\subst}  \subseteq \callSet_i \subseteq \callSetOp{\callEnv}{\mapEnv'_i}{\subst}$, for all $i \in 1..n$, and 
$\intsem{\ApplySubst{\Subst{\Subst{\E}{\open_0}{\kwThis}}{\openBar}{\xBar}}{\subst}}{\resEnvOp{\callEnv}{\subst}}{\callSetTr{\UpdateEnvCo{\callEnv}{\call}{\x}}{\subst}}{\ApplySubst{\open}{\subst}}{\callSet}$, 
with $\callSetDif{\UpdateEnvCo{\callEnv}{\call}{\x}}{\mapEnvU}{\mapEnv'}{\subst} \subseteq \callSet \subseteq \callSetOp{\UpdateEnvCo{\callEnv}{\call}{\x}}{\mapEnv'}{\subst}$. 
Then we get $\callSetDif{\callEnv}{\mapEnv}{\mapEnv'}{\subst}  \subseteq \bigcup_{i\in 1..n} \callSet_i \cup \callSet \subseteq \callSetOp{\callEnv}{\mapEnv'}{\subst}$. 
Since $\bisim{\Caps{\x}{\UpdateEnv{\mapEnv'}{\x}{\open}}}{\Caps{\open'}{\mapEnv''}}$, by \refToCor{strict-bisim}, there is $\subst' \in \solutions(\mapEnv'')$ such that $\ApplySubst{\open'}{\subst'} = \subst(\x)$ and, for all $\x\in \dom{\mapEnv'}$, $\subst(\x) = \subst'(\x)$. 
Therefore, by induction hypothesis, we get 
$\intsem{\ApplySubst{\Subst{\Subst{\E}{\open_0}{\kwThis}}{\openBar}{\xBar}}{\subst'}}{\resEnvOp{\UpdateEnvCk{\callEnv}{\call}{\x}}{\subst'}}{\callSetTr{\UpdateEnvCk{\callEnv}{\call}{\x}}{\subst'}}{\ApplySubst{\open'}{\subst'}}{\callSet'}$, 
with $\callSetDif{\UpdateEnvCk{\callEnv}{\call}{\x}}{\mapEnv'}{\mapEnv''}{\subst}  \subseteq \callSet' \subseteq \callSetOp{\UpdateEnvCk{\callEnv}{\call}{\x}}{\mapEnv''}{\subst}$. 
Then, by construction of $\subst'$, we have $\ApplySubst{\Subst{\Subst{\E}{\open_0}{\kwThis}}{\openBar}{\xBar}}{\subst'} = \ApplySubst{\Subst{\Subst{\E}{\open_0}{\kwThis}}{\openBar}{\xBar}}{\subst}$ and $\resEnvOp{\UpdateEnvCk{\callEnv}{\call}{\x}}{\subst'} = \UpdateEnv{\resEnvOp{\callEnv}{\subst}}{\ApplySubst{\call}{\subst}}{\subst(\x)}$ and 
$\callSetTr{\UpdateEnvCk{\callEnv}{\call}{\x}}{\subst'} = \callSetTr{\callEnv}{\subst}$. 
Finally, since $\x \in \dom{\mapEnv'}$, $\call \in \callSetOp{\UpdateEnvCo{\callEnv}{\call}{\x}}{\mapEnv'}{}$, but $\call \notin \callSetOp{\UpdateEnvCo{\callEnv}{\call}{\x}}{\mapEnvU}{}$, since $\x$ is fresh, 
hence $\ApplySubst{\call}{\subst} \in \callSetDif{\UpdateEnvCo{\callEnv}{\call}{\x}}{\mapEnvU}{\mapEnv'}{\subst} \subseteq \callSet$,
thus the thesis follows by rule \refToRule{IN-invk-check}. 

\item [\refToRule{corec}] 
Since $\mapEnv'_i \leq \mapEnvU \leq \mapEnv'$, by \refToLemma{env-leq-sol}, we get $\subst \in \solutions(\mapEnv'_i)$, for all $i \in 1..n$. 
By induction hypothesis, we have 
$\intsem{\ApplySubst{\E_i}{\subst}}{\resEnvOp{\callEnv}{\subst}}{\callSetTr{\callEnv}{\subst}}{\ApplySubst{\open_i}{\subst}}{\callSet_i}$, 
with $\callSetDif{\callEnv}{\mapEnv}{\mapEnv'_i}{\subst}  \subseteq \callSet_i \subseteq \callSetOp{\callEnv}{\mapEnv'_i}{\subst}$, for all $i \in 1..n$, and 
$\intsem{\ApplySubst{\Subst{\Subst{\Subst{\E'}{\open_0}{\kwThis}}{\openBar}{\xBar}}{{\x}{\Any}}{\subst}}}{\resEnvOp{\callEnv}{\subst}}{\callSetTr{\callEnv}{\subst}}{\ApplySubst{\open}{\subst}}{\callSet}$, 
with $\callSetDif{\callEnv}{\UpdateEnv{\mapEnvU}{\x}{\x}}{\mapEnv'}{\subst} \subseteq \callSet \subseteq \callSetOp{\callEnv}{\mapEnv'}{\subst}$. 
Then we get $\callSetDif{\callEnv}{\mapEnv}{\mapEnv'}{\subst}  \subseteq \bigcup_{i\in 1..n} \callSet_i \cup \callSet \cup \{\ApplySubst{\call}{\subst}\}\subseteq \callSetOp{\callEnv}{\mapEnv'}{\subst}$, since $\x\in\dom{\mapEnv'}$, hence $\call \in \callSetOp{\callEnv}{\mapEnv'}{}$. 
In order to conclude by rule \refToRule{IN-corec}, we have to prove $\ApplySubst{\call}{\subst} \in \callSetTr{\callEnv}{\subst}$. 
We know that $\bisim{\Caps{\call}{\mapEnvU}}{\Caps{\call'}{\mapEnvU}}$ for some $\call' \in\EZ{ \dom{\callEnvCo}}$, hence, by \refToCor{strict-bisim}, we have $\ApplySubst{\call}{\subst} = \ApplySubst{\call'}{\subst}$, since $\subst \in \solutions(\mapEnvU)$. 
Therefore, $\ApplySubst{\call}{\subst} \in \callSetTr{\callEnv}{\subst}$, by definition. 

\item [\refToRule{look-up}]
Since $\mapEnv'_i \leq \mapEnvU $, by \refToLemma{env-leq-sol}, we get $\subst \in \solutions(\mapEnv'_i)$, for all $i \in 1..n$. 
By induction hypothesis, we have 
$\intsem{\ApplySubst{\E_i}{\subst}}{\resEnvOp{\callEnv}{\subst}}{\callSetTr{\callEnv}{\subst}}{\ApplySubst{\open_i}{\subst}}{\callSet_i}$, 
with $\callSetDif{\callEnv}{\mapEnv}{\mapEnv'_i}{\subst}  \subseteq \callSet_i \subseteq \callSetOp{\callEnv}{\mapEnv'_i}{\subst}$, for all $i \in 1..n$, and 
Then we get $\callSetDif{\callEnv}{\mapEnv}{\mapEnvU}{\subst} \subseteq \bigcup_{i\in 1..n} \callSet_i \subseteq \callSetOp{\callEnv}{\mapEnvU}{\subst}$.
In order to conclude by rule \refToRule{IN-look-up}, we have to prove that $\resEnvOp{\callEnv}{\subst}(\ApplySubst{\call}{\subst}) = \subst(\x)$. 
We know that there exists $\call'\in\EZ{\dom{\callEnvCk}}$ such that $\bisim{\Caps{\call}{\mapEnvU}}{\Caps{\call'}{\mapEnvU}}$ and $\callEnv(\call') = \x$, hence $\resEnvOp{\callEnv}{\subst}(\ApplySubst{\call}{\subst}) = \subst(\x)$, by definition. 

\end{description}
\end{proofOf}

\end{document}